\documentclass[12pt, a4paper]{article}
\usepackage[utf8]{inputenc}
\usepackage[english]{babel}
\usepackage{amsmath}
\usepackage{amsthm}

\usepackage{dsfont}

\usepackage{soul} 
\usepackage{cancel}

\usepackage{hyperref}

\usepackage{lmodern}
\usepackage[a4paper, top=3cm, bottom=4cm, left=3cm, right=3cm]{geometry}
\usepackage{xcolor}
\newlength{\headlineindent}
\newcommand{\equationindent}{\quad \quad}
\newcommand{\equationspacing}{\quad}

\title{On the Ultraviolet Problem for the Ground State Energy of the
  Translation-Invariant Pauli--Fierz Model at Zero Total Momentum }
\author{Volker Bach, Miguel Ballesteros, Merten Mlinarzik}
\date{24-Oct-2025}

\definecolor{gr}{rgb}{0.2,0.6,0.2}

\usepackage{amsmath}
\usepackage{mathtools}
\usepackage{amsfonts}
\usepackage{amssymb}
\usepackage{amsthm}

\numberwithin{equation}{section}

\theoremstyle{plain}
\newtheorem{thm}{Theorem}[section]
\newtheorem{lem}[thm]{Lemma}
\newtheorem{prop}[thm]{Proposition}

\theoremstyle{definition}
\newtheorem{defn}[thm]{Definition}
\theoremstyle{remark}
\newtheorem{rem}[thm]{Remark}

\makeatletter

\DeclarePairedDelimiterX{\@of}[1]{(}{)}{#1}
\newcommand{\of}{\@of*}

\DeclarePairedDelimiterX{\@norm}[1]{\lVert}{\rVert}{#1}
\newcommand{\norm}{\@norm*}

\DeclarePairedDelimiterX{\@abs}[1]{\lvert}{\rvert}{#1}
\newcommand{\abs}{\@abs*}

\DeclarePairedDelimiterX{\@brak}[1]{[}{]}{#1}
\newcommand{\brak}{\@brak*}

\DeclarePairedDelimiterX{\@ket}[1]{\vert}{\rangle}{#1}
\newcommand{\ket}{\@ket*}

\DeclarePairedDelimiterX{\@bra}[1]{\langle}{\vert}{#1}
\newcommand{\bra}{\@bra*}

\DeclarePairedDelimiterX{\@sset}[1]{\{}{\}}{#1}
\newcommand{\sset}{\@sset*}

\DeclarePairedDelimiterX{\@set}[2]{\{}{\}}{#1 \, \delimsize\vert \, #2}
\newcommand{\set}{\@set*}

\DeclarePairedDelimiterX{\@sca}[2]{\langle}{\rangle}{#1 \, \delimsize\vert \, #2}
\newcommand{\sca}{\@sca*}

\makeatletter
              
\newcommand{\too}[2][]{\xrightarrow[#1]{#2}}

\newcommand{\eps}{{\varepsilon}}  
\newcommand{\vphi}{{\varphi}}     
\newcommand{\Om}{\Omega}          
\newcommand{\om}{\omega}
\newcommand{\la}{\langle}
\newcommand{\ra}{\rangle}

\newcommand{\bfone}{\mathbf{1}}

\newcommand{\cB}{\mathcal{B}}
\newcommand{\cD}{\mathcal{D}}
\newcommand{\cE}{\mathcal{E}}
\newcommand{\tcE}{\widetilde{\mathcal{E}}}

\newcommand{\cG}{\mathcal{G}}
\newcommand{\tcG}{\widetilde{\mathcal{G}}}
\newcommand{\cH}{\mathcal{H}}

\newcommand{\cL}{\mathcal{L}}         
         
\newcommand{\cN}{\mathcal{N}}         
\newcommand{\cO}{\mathcal{O}}

\newcommand{\field}[1]{\mathds{#1}}
\newcommand{\bbA}{\field{A}}

\newcommand{\RR}{\field{R}}     
     
\newcommand{\NN}{\field{N}}     
\newcommand{\CC}{\field{C}}     
\newcommand{\bbS}{\field{S}}     
     
\newcommand{\UU}{\field{U}}     
\newcommand{\bbW}{\field{W}}

\newcommand{\fF}{\mathfrak{F}}
\newcommand{\fH}{\mathfrak{H}}

\newcommand{\fh}{\mathfrak{h}}

\newcommand{\sfJ}{\mathsf{J}}

\newcommand{\uA}{{\underline A}}

\newcommand{\uRR}{{\underline \RR}}

\newcommand{\uS}{{\underline S}}

\newcommand{\uk}{{\underline k}}

\newcommand{\rRe}{\mathrm{Re}}

\newcommand{\spec}{\sigma}

\newcommand{\cirS}{\mathop{\bigcirc\kern -.73em {\scriptstyle{\rm S}}}}

\newcommand{\huelle}{{\rm span}}

\newcommand{\Tr}{{\rm Tr}}

\newcommand{\op}{\mathrm{op}} 
\newcommand{\HS}{\mathrm{HS}}
\newcommand{\Bog}{\mathrm{Bog}}
\newcommand{\el}{\mathrm{el}}
\newcommand{\ph}{\mathrm{ph}}
\newcommand{\gs}{\mathrm{gs}}
\newcommand{\prtcl}{\mathrm{part}}
\newcommand{\BHF}{\mathrm{BHF}}

\newcommand{\DM}{\mathcal{D}\!\mathcal{M}}

\newcommand{\pQDM}{\mathcal{P}\!\mathcal{Q}\!\mathcal{D}\!\mathcal{M}}

\newcommand{\LL}{{\mathrm{LL}}}

\begin{document}

\maketitle

\begin{abstract}
We study the ground state energy of the Pauli--Fierz model in the
absence of external potentials. We consider the fiber decomposition of
the Pauli--Fierz operator with respect to the spectral values, $p$, of
the total momentum operator and focus on the case $p = 0$. The
corresponding variational problem is analyzed to estimate the
dependence of the ground state energy on the ultraviolet cutoff
$\Lambda$. We employ a Bogoliubov--Hartree--Fock approximation using
pure, quasifree states generated by Bogolubov transformations
(parametrized by a positive Hilbert--Schmidt operator $z$) and Weyl
transformations (parametrized by a vector $\eta$) applied to the
vacuum. We prove that the resulting energy functional is not a convex
function of $\eta$ and $z$. We identify the non-convex term and remove
it from the energy functional. The modified functional retains the
full interaction term and is shown to be strictly convex. We study the
ground state of the modified functional and prove the existence of a
unique minimizer. Furthermore, we construct an explicit partial
minimizer (with respect to $\eta$, for fixed $z$), which allows us to
eliminate $z$ and reduce the minimization problem to a single
variable, $\eta$. Finally, we estimate the minimum of the modified
energy functional in terms of the ultraviolet cutoff $\Lambda$ and
demonstrate that, up to a constant factor, it grows asymptotically as
$\Lambda^{3/2}$, as $\Lambda \to \infty$.
\end{abstract}

\section{Introduction} 
\label{sec:Introduction}

In this article, we study spectral properties of the Pauli--Fierz
Hamiltonian generating the dynamcs of a single nonrelativistic
electron interacting with the quantized radiation field in absence of
external potentials, with a specific focus on the ultraviolet
limit. We assume an ultraviolet cutoff $\Lambda > 0$ and study the
Pauli--Fierz operator $H^{g,\Lambda}$ (where $g$ is the coupling
constant) for zero total momentum. Since $H^{g,\Lambda}$ commutes with
the total momentum operator (due to translation invariance), it can be
decomposed into a direct integral of fiber Hamiltonians
$(H_p^{g,\Lambda})$ with respect to the total momentum value 
$p \in \RR^3$. This fiber decomposition eliminates the electron
variable, and each operator $H_p^{g,\Lambda}$ acts on the photon Fock
space. In this article, we focus on the \textit{ground state energy}
$E_\gs(H_0^{g,\Lambda})$ of $H_0^{g,\Lambda}$ and study the
corresponding variational problem,

\begin{align}\label{eq:Intro_VariationalProblem-1}
E_\gs(H_0^{g,\Lambda}) 
\ := \  
\inf\Big\{ \Tr\big[ \rho \, H_0^{g,\Lambda} \big] \ \Big|
\ \rho \; \in \; \DM(\fF_\ph) \, , \ 
(\rho H_0^{g,\Lambda}), (H_0^{g,\Lambda} \rho) 
\: \in \: \cL^1(\fF_\ph) \Big\} \, , 
\end{align}
where $\DM(\fF_\ph) := \{ \rho \in \cL^1(\fF_\ph) \, | \; \rho \geq 0,
\; \Tr(\rho) = 1 \}$ are the density matrices on photon Fock space
$\fF_\ph$. The density matrices $\DM(\fF_\ph) \subseteq \cL^1(\fF)$
form a convex subset of the space $\cL^1(\fF)$ of trace-class
operators on $\fF_\ph$.

The minimization problem \eqref{eq:Intro_VariationalProblem-1} is
highly non-trivial, necessitating simplification to extract meaningful
results. In this paper, we restrict the variation in
\eqref{eq:Intro_VariationalProblem-1} to those density matrices 
$\rho \in \DM(\fF_\ph)$ which are \textit{quasifree}. The
corresponding variational problem
\begin{align}\label{eq:Intro_VariationalProblem-2}
& E_\BHF(H_0^{g,\Lambda}) 
\ := \  \\ \nonumber
& \inf\Big\{ \Tr\big[ \rho \, H_0^{g,\Lambda} \big] \ \Big|
\ \rho \; \in \; \DM(\fF_\ph) \, , \ 
(\rho H_0^{g,\Lambda}), (H_0^{g,\Lambda} \rho) 
\: \in \: \cL^1(\fF_\ph) \, , \ \text{$\rho$ is quasifree} \Big\} 
\end{align}

defines the \textit{Bogoliubov--Hartree--Fock energy}
$E_\BHF(H_0^{g,\Lambda}) \geq E_\gs(H_0^{g,\Lambda})$ of the
Pauli--Fierz operator for zero total momentum, which is an upper bound
for the ground state energy. It was shown in
\cite{DerezinskiNapiorkowskiSolovej2013, BachBreteauxTzaneteas2013}
that the infimum in \eqref{eq:Intro_VariationalProblem-2} is already
obtained by a minimization over \textit{pure} states.  These states
are rank-one orthogonal projections onto vectors obtained by applying
(homogeneous) Bogoliubov transformations followed by Weyl
transformations to the vacuum state.

More specifically, let $\fh_\ph^\Lambda$ denote the one-photon Hilbert
space and 
$\Bog(\fh_\ph^\Lambda) \subseteq \cB[\fh_\ph^\Lambda \oplus
    \fh_\ph^\Lambda]$ denote the set of Bogoliubov matrices on
$\fh_\ph^\Lambda$. For every $B \in \Bog(\fh_\ph^\Lambda)$, we denote
by $\UU_B$ the corresponding unitary Bogoliubov transformation
[defined in \eqref{eq:Intro_BogAction} below].  Additionally, we
denote by $\bbW_\eta$ the Weyl transformation associated with 
$\eta \in \fh_\ph^\Lambda$ [see \eqref{eq:Intro_WeylAction}]. The
Bogoliubov--Hartree--Fock energy can now be written as

\begin{align}\label{eq:Intro_BHFEnergyvalue}
E_\BHF(H_0^{g,\Lambda}) 
\ = \ 
\inf\Big\{ \big\la \bbW_\eta \UU_B \Om \, , \; 
H_0^{g,\Lambda} \, \bbW_\eta \UU_B \Om \big\ra \; \Big| \ 
B \in \Bog(\fh_\ph^\Lambda) \, , \ \eta \in \fh_\ph^\Lambda \Big\} \, , 
\end{align}

where the variation is restricted to $B$ and $\eta$ such that
$(H_0^{g,\Lambda})^{1/2} \bbW_\eta \UU_B \Om \in \fF_\ph$.  [see
  \eqref{eq:Intro_DefinitionBHFEnergy}]. An explicit formula for
$E_\BHF(H_0^{g,\Lambda})$ was derived in
\cite{BachBreteauxTzaneteas2013} and considerably simplified in
\cite{Herdzik2023} [see \eqref{eq:Intro_RawEnergyFunctional}], where
it was shown that the Bogoliubov matrices $B$ over which the energy
expectation value $\la \bbW_\eta \UU_B \Om , \, H_0^{g,\Lambda}
\bbW_\eta \UU_B \Om \ra$ is varied, can be parametrized by a single,
positive Hilbert--Schmidt operator $z$, i.e., $B \equiv B(z)$. It was
further noticed in \cite{Herdzik2023} that $E_\BHF(H_0^{g,\Lambda})$
is bounded above by the infimum of the energy functional

\begin{align}\label{eq:Intro_BHFEnergyFunctional}
\cE_{g,\Lambda}(z, \eta)
\ := \ 
\big\la \bbW_\eta \UU_{B(z)} \Om \, , \; 
H_0^{g,\Lambda} \, \bbW_\eta \UU_{B(z)} \Om \big\ra \, , 
\end{align}

minimized only over the subset of \textit{$\sfJ$-invariant} Bogoliubov
transformations $\UU_{B(z)}$ and Weyl transformations $\bbW_\eta$,
with $z = \sfJ z \sfJ$ and $\eta = \sfJ \eta$. Moreover, if the
infimum of $\cE_{g,\Lambda}(z, \eta)$ is attained for a unique
$\sfJ$-invariant minimizer, then it even equals
$E_\BHF(H_0^{g,\Lambda})$. This result from \cite{Herdzik2023} is the
starting point of the present paper.

In this article, we analyze the variational problem of minimizing
$\cE_{g,\Lambda}$ and derive estimates in terms of the ultraviolet
cutoff $\Lambda \gg 1$ and coupling constant $g \ll 1$. Our first
result establishes that $\cE_{g,\Lambda}$ is not convex, identifying
the reason for this lack of convexity to be a term arising from the
square of the total momentum operator
(Theorem~\ref{thm:nonConvexity}). We introduce a simplified
variational problem by a reduced functional $\tcG_{g,\Lambda}$ in
which we retain the full interaction while omitting this non-convex
term (Definition~\ref{def:Intro_GTilde}).

To elucidate the $\Lambda$-dependence of $\tcG_{g,\Lambda}$, we change
the physical units by means of a suitable scaling transformation that
normalizes the energy $\Lambda$ to unity. The rescaled functional
(Definition~\ref{def:Intro_G}), denoted $\cG_{g,\Lambda}$, is the
central object of study. We address the minimization problem and
estimate the minimum in terms of $\Lambda$. Theorem
\ref{thm:existence} establishes the existence of a unique
minimizer. In Theorem~\ref{thm:reducedFunctional}, we demonstrate
that, given a fixed Bogoliubov parameter $z$, the reduced functional
$\cG_{g,\Lambda}$ admits an explicit unique partial minimizer
$\eta_z$, parametrizing the Weyl transformations. Substituting this
minimizer into the energy functional yields a reduced variational
problem depending solely on the Bogoliubov parameter
(Theorem~\ref{thm:reducedFunctional}). Our main result,
Theorem~\ref{thm:groundStateEnergyEstimate}, states that, to leading
order in $g$ and $\Lambda$, the minimum $E_\cG^{g,\Lambda}$ of
$\cG_{g,\Lambda}$ satisfies the bounds:

\begin{align} \label{eq-lead-ord-bd_1}
4\sqrt{\pi/3} \, g \Lambda^{3/2} 
\ \leq \ 
E_\cG^{g,\Lambda} 
\ \leq \ 
4\sqrt{3\pi} \, g \Lambda^{3/2} \, .
\end{align}

The first pillar on which the present paper rests is the important
work \cite{LiebLoss2000} in which a new variational model yielding an
upper bound $E_\LL^{g,\Lambda}$, called the \textit{Lieb--Loss energy}
in \cite{BachHach2022}, on the ground state energy
$E_\gs(H^{g,\Lambda})$ of a single nonrelativistic electron in absence
of external potentials and interacting with the quantized radiation
field had been introduced and analyzed. The functional defining the
Lieb--Loss energy derives from the variational characterization
$E_\gs(H^{g,\Lambda}) = \inf_{\|\Psi\| = 1} \la \Psi , \,
H^{g,\Lambda} \Psi \ra$ by restricting the variation to wave functions
of the form $\Psi = \vphi_\el \otimes \psi_\ph$, where $\vphi_\el$ is
a normalized electron wave function and $\psi_\ph$ a normalized vector
in photon Fock space $\fF_\ph$.  For the first time, non-trivial
estimates on the asympotics of the ground state energy in the
ultraviolet limit $\Lambda \to \infty$ were derived in
\cite{LiebLoss2000}, namely,

\begin{align}  \label{eq-LL-asymptotics.1}
C_1 \, \alpha^{1/2} \, \Lambda^{3/2}
\ \leq \ 
E_\gs(H^{g,\Lambda})
\ \leq \ 
E_\LL^{g,\Lambda}
\ \leq \ 
C_2 \, \alpha^{2/7} \, \Lambda^{12/7} \, ,
\end{align}  

for some constants $C_1, C_2 > 0$. Note that the lower bound results
from an estimate on the quadratic form defined by $H^{g,\Lambda}$ and
not from a variational bound. It was conjectured in
\cite{LiebLoss2000}, that the upper bound in
\eqref{eq-LL-asymptotics.1} is actually the correct asympotics of
$E_\gs(H^{g,\Lambda})$. The analysis of the Lieb-Loss model was
continued in \cite{BachHach2022}, and there it was shown that this
conjecture indeed holds true for the Lieb--Loss energy, namely, that
$E_\LL^{g,\Lambda} \sim C_\LL \alpha^{2/7} \Lambda^{12/7}$,

\begin{align}  \label{eq-LL-asymptotics.2}
E_\LL^{g,\Lambda}
\ \sim \ 
F_1 \, \alpha^{2/7} \, \Lambda^{12/7} \, ,
\end{align}  

for some explicitly given numerical constant $F_1 >0$. Further
developments in \cite{BreteauxFaupinPayet2023,
  BreteauxFaupinPayet2025} explore the Lieb--Loss variational problem
for coherent photon states $\psi_\ph = \bbW_\eta \Om$ in the cases of
the Pauli--Fierz operator with linear coupling
\cite{BreteauxFaupinPayet2023} and minimal coupling
\cite{BreteauxFaupinPayet2025}.

An equally important pillar for the present paper is the
Bogoliubov--Hartree--Fock theory that was defined and analyzed in
\cite{BachBreteauxTzaneteas2013}. Like the Lieb--Loss energy, the
Bogoliu\-bov--Hartree--Fock energy $E_\BHF(H_p^{g,\Lambda})$ is defined
by a variational principle and constitutes an upper bound on the
ground state energy of the Pauli--Fierz Hamiltonian, but for fixed
total momentum $p$. In fact, the Lieb-Loss model and the
Bogoliubov--Hartree--Fock theory are closely related, as they both
lead to a variation over quasi-free photon states. The investigations
in \cite{BachBreteauxTzaneteas2013} and several important observations
borrowed from \cite{BachHach2022} lead to the recent work
\cite{Herdzik2023}, which is the starting point of the present paper.

Estimating the ground state energy for the Pauli--Fierz model remains
a longstanding challenge, with numerous results addressing this
problem in the literature.

The infrared divergence in translation-invariant models was first
investigated in the 1970s in \cite{Froehlich1973}. More recent work
in \cite{HaslerHerbst2008a} established the absence of ground states
in such translation-invariant systems. Early progress on massive field
couplings appeared in \cite{Froehlich1974}, where the existence of a
ground state for an electron interacting with a massive bosonic field
was proven under fixed total momentum conditions. Subsequent
developments in \cite{Arai1983, Spohn1997} demonstrated ground state
existence for systems with $x^2$-type external potentials and
perturbations.

A significant breakthrough emerged with the introduction of spectral
renormalization techniques in \cite{BachFroehlichSigal1995,
  BachFroehlichSigal1998a, BachFroehlichSigal1998b}, which enabled
rigorous analysis of spectral properties including ground states and
resonances for infrared-regularized models with ultraviolet cutoffs.
These methods were later extended to non-regularized models in
\cite{BachFroehlichSigal1999} and further improved with optimal
conditions in \cite{GriesemerLiebLoss2001}. Complementary approaches
were developed in \cite{BachFroehlichPizzo2006,
  BarbarouxChenVugalter2003, HaslerHerbst2011, HaslerSiebert2024} both
for confined and for translation-invariant systems. Alternative
treatments under more restrictive hypotheses were presented
in~\cite{Hiroshima2000b}, while related models were investigated
in~\cite{AraiHirokawa2000, Gerard2000, DerezinskiGerard1999}.

The multiscale analysis framework was introduced in \cite{Pizzo2003}
for ground state studies. The complex multi-scale analysis suitable
for resonances was developed in \cite{BachBallesterosPizzo2017}. The
method of \cite{BachBallesterosPizzo2017} has proven equally powerful
as spectral renormalization for quantum electrodynamics problems.

Resonances for non-regularized infrared problems were proven to exist
in \cite{Sigal2009} (spectral renormalization) and in
\cite{BachBallesterosIniestaPizzo2021} (multi-scale analysis).

The construction of the ultraviolet limit has been successfully
carried out for some models in quantum field theory: the Nelson model
by a Gross transformation~\cite{Gross1962,Nelson1964}, the spin-boson
Hamiltonian~\cite{DamMoeller2020}, the Fr\"ohlich Hamiltonian
\cite{GriesemerLinden2019, Lampart2020}, and the Yukawa model
\cite{DeckertPizzo2014}.

\subsection{Organisation of the paper} 
\label{subsec:Organisation}

Subsection~\ref{subsec:Framework} provides an overview of the
mathematical framework of the paper and the results of
\cite{Herdzik2023}. In particular, the Hamiltonian of the Pauli--Fierz
modell as well as its Bogoliubov--Hartree--Fock energy is
introduced. The derivation of the full energy functional
$\cE_{g,\Lambda}$, as obtained in \cite{BachBreteauxTzaneteas2013} and
in \cite{Herdzik2023}, is briefly sketched. In
Subsection~\ref{subsec:MainResults} we present the main results and
central definitions of the paper in detail.

The introductory section is followed by the main body of the paper in
Sections~\ref{sec:Properties}-\ref{sec:NonConvexity}. In
Section~\ref{sec:Properties} we establish some fundamental variational
properties of the reduced energy functional $\cG_{g,\Lambda}$,
including convexity and coercivity. These properties are then used in
Section~\ref{sec:Existence} to show the existence of a unique
minimizer of $\cG_{g,\Lambda}$. In
Section~\ref{sec:PartialMinimization} we consider the partial
minimization problems of $\cG_{g,\Lambda}$. This is followed by
Section~\ref{sec:GroundStateEnergy} where we prove explicit estimates
for the minimal energy of $\cG_{g,\Lambda}$. To obtain these estimates
some key operator inequalities are used which are discussed in detail
in Section~\ref{sec:KeyInequalities}. In
Section~\ref{sec:NonConvexity} we provide a proof for the asserted
lack of convexity of the full energy functional $\cE_{g,\Lambda}$ by
identifying a class of counterexamples in the space of rank-two
operators.

For the convenience of the reader, the Appendix contains the proofs of
some general well-known results, which are used throughout this paper.

\subsection{Mathematical Framework and Previous Results}
\label{subsec:Framework}

This section is dedicated to introducing the mathematical framework of
the model and related notations. We also include a brief overview of
previous results.

\paragraph*{Notation.} 
We begin by presenting some notation which is used throughout this
work. We write $\uk$ for an element of the set

\begin{align} \label{eq-001}
\uRR^3 
\ := \ 
\RR^3 \times \sset{+,-} 
\ = \
\set{\uk = \of{k,\tau}}{k \in \RR^3 , \tau \in \sset{+,-}} \, .
\end{align}

For any vector $\uk = \of{k,\tau} \in \uRR^3$ and $\alpha \in \RR$ we
adopt the convention

\begin{align} \label{eq-002}
\alpha \uk 
\ := \ 
\of{\alpha k, \tau} \, .
\end{align}

Given a subset $A \subset \RR^3$ of $\RR^3$, we introduce the set 
$\uA \subset \uRR^3$ by

\begin{align} \label{eq-003}
\uA \ := \ 
\set{\uk = \of{k, \tau}}{k \in A, \tau \in \sset{+,-}} \, . 
\end{align}

For any integrable  function $f: \uA \to \CC$ we denote by 

\begin{align} \label{eq-004}
\int_{\uA} f\of{\uk} \, d \uk 
\ := \
\sum_{\tau \in \sset{+,-}} \int_A f\of{k,\tau} \, dk \, .
\end{align}

Furthermore, for two positive numbers $0 < a < b$ we introduce the subset 

\begin{align} \label{eq:Intro_DefinitionOfS}
\uS\of{a,b} 
\ := \
\set{\uk = \of{k,\tau} \in \uRR^3}{a \leq \abs{k} \leq b} \subset \uRR^3
\, .
\end{align}

For any complex and separable Hilbert space $\fh$ we denote by 

\begin{align} \label{eq-005}
\cB\of{\fh} 
\ := \
\set{z : \fh \to \fh}{z \textnormal{ is linear and bounded}}
\end{align}

the Banach space of bounded linear operators on $\fh$. Additionally,
we denote by

\begin{align} \label{eq-006}
\HS\of{\fh} 
\ := \
\set{z \in \cB\of{\fh}}{\Tr\brak{z^* z} < \infty}
\end{align}

the set of Hilbert--Schmidt operators on $\fh$, which
constitutes a Hilbert space when equipped with the usual
Hilbert--Schmidt norm

\begin{align} \label{eq-007}
\forall z \in \HS\of{\fh}: 
\qquad
\norm{z}_\HS 
\ := \
\sqrt{ \Tr\brak{z^* z} \, } \, .
\end{align}

Furthermore, for any $\delta \in \RR$, we define the sets

\begin{align} \label{eq:Intro_DefinitionHilbertSchmidtSet}
\HS_{\delta}\of{\fh} 
\ := \
\set{z \in \HS\of{\fh}}{ z=z^*, \ z \geq -\delta } \, .
\end{align}

Note that $\HS_\delta\of{\fh}$ is a convex subset of $\HS\of{\fh}$,
for any $\delta \in \RR$.  $\HS_{0}\of{\fh}$ is the set of positive
(self-adjoint) Hilbert--Schmidt operators on $\fh$.

\paragraph*{The Pauli--Fierz Model.} We study the interaction of a
single non-relativistic and charged particle with the quantized
electro-magnetic radiation field. The particle is assumed to be
spinless.  

Let $x \in \RR^3$ denote the position of the particle
in physical space. In the absence of an external potential the free
motion of the particle with mass $m > 0$ is modeled by the
Schr\"odinger Hamiltonian

\begin{align} \label{eq-008}
H_\prtcl \ := \ 
\frac{1}{2m}\of{i\vec{\nabla}_x}^2 \, ,
\end{align}

acting on the Hilbert space 

\begin{align} \label{eq-009}
\fH_{\prtcl} \ := \ L^2\of{\RR_x^3;\CC} \, .
\end{align}

The particle Hamiltonian is self-adjoint on its natural domain

\begin{align} \label{eq-010}
D\of{H_{\prtcl}} \ = \ H^2\of{\RR^3_x;\CC}
\, ,
\end{align}

the Sobolev space of square-integrable functions with
square-integrable distributional derivatives up to second order
\cite[Section~IX.7]{ReedSimonII1980}.

The radiation field are the photons. The single-photon space of
photons with energies between $\sigma >0$ and $\Lambda > \sigma$ is
denoted by

\begin{align} \label{eq-011}
\fh_{\ph}^\Lambda
\ := \ 
L^2\of{\uS\of{\sigma,\Lambda};\CC} \, ,
\end{align}

where $\uS\of{\sigma,\Lambda} \subset \uRR^3_{k}$, as
defined in \eqref{eq:Intro_DefinitionOfS}. The positive numbers
$\sigma$ and $\Lambda$ represent an infrared and ultraviolet cut-off
for the photon energies, respectively. While imposing such cut-offs is
unphysical it is nevertheless necessary for the mathematical
construction of the model. As this work is primarily concerned with
the ultraviolet divergence of the Pauli--Fierz model in the limit
$\Lambda \to \infty$, the infrared cut-off $\sigma$ remains fixed
throughout the work. The time evolution of a single free photon is
governed by the dispersion relation

\begin{align} \label{eq-012}
\om\of{\uk} 
\ := \ 
\abs{k} \, ,
\end{align} 

acting as a bounded multiplication operator on $\fh_\ph^\Lambda$.

The Hilbert space of the quantized radiation field
$\fH^\Lambda_{\ph}$ is given by the bosonic Fock space
$\fF\big[ \fh_\ph^\Lambda \big]$ associated to the
single particle photon space $\fh_\ph^\Lambda$. More precisely
we define

\begin{align} \label{eq-013}
\fH_{\ph}^\Lambda 
\ := \ 
\fF\big[ \fh_\ph^\Lambda \big] 
\ := \
\bigoplus_{n \geq 0} \fF^{\of{n}}\big[ \fh_\ph^\Lambda \big] 
\, , \quad 
\fF^{\of{0}}\big[ \fh_\ph^\Lambda \big] 
\ := \ \CC 
\, , \quad 
\fF^{\of{n}}\big[ \fh_\ph^\Lambda \big] 
\ := \ \big( \fh_\ph^\Lambda \big)^{n \odot} \, ,
\end{align}

where $n \odot$ denotes the $n$-fold symmetric tensor product of a
Hilbert space. We refer to the Fock vector

\begin{align} \label{eq-014}
\Om \ := \ 
\of{1,0,0,\ldots} 
\ \in \ \fH_{\ph}^\Lambda
\end{align}

as the vacuum (vector). The photon Fock space $\fH_{\ph}^\Lambda$
comes equipped with the usual boson creation- and annihilation
operators $a(f)$ and $a^*(f)$, for $f \in \fh_\ph^\Lambda$, which
fulfill the canonical commutation relations (CCR),

\begin{align} \label{eq-015}
\brak{a(f), a^*(g)} 
\ = \ 
\sca{f}{g} \mathbf{1}_{\fH_{\ph}} 
\, , \quad
\brak{a(f), a(g)} \ = \ \brak{a^*(f), a^*(g)} \ = \ 0 
\, , \quad
\forall \, f,g \in \fh_\ph^\Lambda \, ,
\end{align}

and furthermore constitute a Fock representation of these, since the
annihilation operators $a(f)$ act destructively on the vacuum,

\begin{align} \label{eq-016}
a(f) \Om \ = \ 0 \, , \quad
\forall \, f \in \fh_\ph^\Lambda \, .
\end{align}

For any $f \in \fh_\ph^\Lambda$ the operators $a^*(f)$ and
$a(f)$ are closable operators when defined on the domain

\begin{align} \label{eq-017}
D\big( a^*(f) \big) 
\ := \ 
D\big( a(f) \big) 
\ := \
D\of{\cN^{1/2}} \, ,
\end{align} 

where $\cN$ denotes the number operator on
$\fH_{\ph}^\Lambda$,

\begin{align} \label{eq-018}
\cN \ = \ 
\sum_{n = 0}^\infty n \bfone^{n \otimes}_{\fh_\ph^\Lambda} \, , \quad
D\of{\cN} 
\ = \
\set{\of{\psi_n}_{n = 0}^\infty \in \fH_{\ph}^{\Lambda}
}{\sum_{n = 0}^\infty n^2 \norm{\psi_n}^2 < \infty} \, .
\end{align}

In the following we always identify the operators $a(f)$ and $a^*(f)$
with their respective closures on this domain.

The free evolution of the radiation field is generated by the second
quantization $d\Gamma\brak{\om}$ of the dispersion relation
$\om\of{k}$ (see \cite[Section~10.7]{ReedSimonII1980}). We call it the
free photon energy (or Hamiltonian) and denote it by
$H_\ph$. It leaves the subspaces 
$\fF^{\of{n}}\big[ \fh_\ph^\Lambda \big]$ of $\fF[\fh_\ph^\Lambda]$
invariant. On $\fF^{\of{0}}\big[ \fh_\ph^\Lambda \big]$, it vanishes
identically; on $\fF^{\of{1}}\big[ \fh_\ph^\Lambda \big]$, it is the
multiplication operator by $\om$, and on 
$\fF^{\of{n}}\big[ \fh_\ph^\Lambda \big]$, for $n \geq 1$, it is given
by

\begin{align} \label{eq-019}
\sum_{k = 1}^n 
\bfone_{h^\Lambda_{\ph}}^{\of{n-k}\otimes} \otimes \om \otimes 
\bfone_{h^\Lambda_{\ph}}^{(k-1)\otimes} \, . 
\end{align}

$H_{\ph}$ is defined on $\fH^\Lambda_{\ph}$ and it is
self-adjoint on $D\of{H_{\ph}} = D\of{\cN}$, where we
recall that $\cN$ denotes the number operator on
$\fH^\Lambda_{\ph}$. Similarly for the one particle photon
momentum operator, which is the multiplication by the momentum
variable $k$, we define the field momentum operator on the Fock space
by

\begin{align} \label{eq-020}
P_{\ph} \ := \ \mathrm{d}\Gamma\brak{k}. 
\end{align}

The standard model of non-relativistic quantum electrodynamics
\cite{BachFroehlichSigal1995,BachFroehlichSigal1998a} is called the
Pauli--Fierz Hamiltonian. It is given by

\begin{align} \label{eq:Intro_DefinitionPFH}
H^{g,\Lambda}
\ := \
\frac{1}{2}\of{i \nabla_x + g \vec{\mathbb{A}}\of{x}}^2 + H_{\ph}
\end{align}

acting on the Hilbert space 

\begin{align} \label{eq-021}
\fH^{\Lambda}
\ := \
\fH_{\prtcl} \otimes \fH_{\ph}^\Lambda 
\ \cong \ L^2\of{\RR^3_x; \fH_{\ph}^\Lambda} \, .
\end{align}

Here the parameter $g \geq 0$ is called the coupling constant. Note
that we assume the particle mass to be given by $m = 1$, which amounts
to choosing a suitable system of units. The interaction of the
particle with the radiation field is governed by the quantized vector
potential $\mathbb{A}$ acting on a vector 
$\Psi \in L^2\of{\RR^3_x;\fh_\ph}$ pointwise as

\begin{align} \label{eq-022}
\of{\bbA\Psi}\of{x} \ = \ \bbA\of{x} \, \Psi\of{x}
\end{align}

with the operator

\begin{align} \label{eq-023}
\bbA\of{x} 
\ = \ 
a^*(G_x) + a(G_x) \, ,
\end{align}

where $G_x(\uk) = e^{-ikx} G(\uk)$ and $G : \uRR^3 \to \RR^3$
represents the coupling function given by

\begin{align} \label{eq-024}
G\of{\uk} 
\ = \
\abs{\om\of{k}}^{-1/2} \epsilon\of{k,\pm}  \, ,
\end{align}

with polarization vectors $\epsilon\of{k,\pm}$ in Coulomb gauge. That
is, $\epsilon\of{ \cdot ,\pm}: \RR^3 \to \bbS^2$ is a measurable map such
that, for almost all $k \in \RR^3$, the three vectors

\begin{align}\label{eq:Intro_PolarizationVectors}
\sset{k, \epsilon\of{k,+}, \epsilon\of{k,-}}
\end{align} 

form a right-handed orthonormal frame in $\RR^3$. Note that, for every
$x \in \RR^3$, $\bbA\of{x}$ is well-defined because 
$\uk \mapsto e^{ikx} G\of{\uk} \in \fh_\ph^\Lambda$, due to the
ultraviolet cut-off.
It is well known \cite{HaslerHerbst2008b,Hiroshima2000b} that the
Pauli--Fierz Hamiltonian \eqref{eq:Intro_DefinitionPFH} is 
self-adjoint on the domain $D\of{H_{\prtcl}} \otimes D\of{H_{\ph}}$.

The specific choice of the polarization vectors in Coulomb gauge has
no physical significance and is merely a parametrization of the vector
space of divergence-free square-integrable vector fields 
$\RR^3 \to \RR^3$ as an $L^2$-space of complex-valued functions. It is
convenient to assume that

\begin{align}\label{eq:Intro_PolarizationVectorsProperties}
{\epsilon}\of{ a k, \pm} \ = \ 
{\epsilon}\of{k, \pm}  \, , \quad
{\epsilon}\of{ - k, \pm} \ = \ 
-{\epsilon}\of{k, \pm}.  
\end{align} 

for every $a> 0$. Such polarization vectors indeed exist.  To see
this, we first make a measurable, but apart from that arbitrary,
choice of polarization vectors $\epsilon\of{k,\pm}$ on the upper
half-sphere

\begin{align} \label{eq-025}
\set{k \in \RR^3}{\abs{k} = 1, k_3 > 0} \, ,
\end{align}

satisfying the Coulomb gauge condition. For all other $k$ in
the upper half-space $\set{k \in \RR^3}{ k_3 > 0}$ we then define

\begin{align*}
\epsilon\of{k, \pm} \ := \ \epsilon\of{k /|k|,\pm} \, ,
\end{align*}

and for every $k$ in the lower half-space 
$\set{k \in \RR^3}{ k_3 < 0}$ we define

\begin{align} \label{eq-026}
\epsilon\of{k, \pm} \ := \ -\epsilon\of{-k,\pm} \, .
\end{align}

This defines a choice $\epsilon: \uRR^3 \to \bbS^2$ of polarization
vectors fulfilling \eqref{eq:Intro_PolarizationVectorsProperties} up
to the set $\set{\uk \in \uRR^3}{ k_3 = 0}$ of zero measure.

\paragraph*{Fiber decomposition.}
The Pauli--Fierz Hamiltonian \eqref{eq:Intro_DefinitionPFH} commutes
with the total momentum operator $P$, which is defined by

\begin{align} \label{eq-027}
P \ := \ -i\nabla + P_{\ph} \, .   
\end{align}

This is known as the translation-invariance of the Pauli--Fierz
Hamiltonian. Thus, we can simultaneously diagonalize both operators
and construct fiber Hamiltonians of the restriction of 
$H^{g, \Lambda}$ to the spectral subspaces of fixed spectral values 
$p \in \RR^3$ of the total momentum operator $P$. Realizing this
diagonalization by a unitary transformation (involving the Fourier
transform in the particle position variable), we obtain these fiber
Hamiltonians as

\begin{align} \label{eq-028}
H_p^{g,\Lambda} 
\ = \
\of{\vec{P}_f + g\bbA\of{0} - p}^2 + H_{\ph} \, . 
\end{align}

More precisely, there is a unitary transformation $U$ such that    

\begin{align} \label{eq-029}
U H^{g,\Lambda} U^* 
\ = \
\int_{\RR^3}^{\oplus} H^{g,\Lambda}_p \, d^3 p , 
\end{align}

where each operator $H^{g,\Lambda}_p$ acts on
$\fH^{\Lambda}_{\ph}$. In this paper we only need the explicit form of
the fiber Hamiltonians and, therefore, we omit the details of their
construction and rather refer the reader, e.g., to
\cite{BachBreteauxTzaneteas2013,
  BallesterosFaupinFroehlichSchubnel2015}.

\paragraph*{Bogoliubov--Hartree--Fock energy.} 
The ground state energy $E_{\mathrm{gs}}\of{H_{p}^{g,\Lambda}}$ of the
fiber Hamiltonian $H_{p}^{\Lambda,g}$ is obtained by minimizing the
expectation of $H_{p}^{\Lambda,g}$ in the finite energy density
matrices $\DM\of{\fH^\Lambda_p}$ over $\fH_p^\Lambda$,
i.e.,

\begin{align} \label{eq:Intro_DefinitionGroundStateEnergy}
E_\gs\of{H_{p}^{g,\Lambda}} 
\ := \ &
\inf \spec\of{H_{p}^{g,\Lambda}} 
\ = \ 
\inf_{\DM\of{\fH_p}} \Tr\brak{\rho H_{p}^{\Lambda,g}} \, , 
\\[1ex] \label{eq:Intro_DefinitionDensityMatrix}
\DM\of{\fH^\Lambda_p} 
\ := \ & 
\set{ \rho \in \cL^1\of{\fH^\Lambda_p}}{\rho \geq 0, 
\Tr\brak{\rho} \ = \ 1, \, \rho H_0^{0,\Lambda},  H_0^{0,\Lambda} \rho 
\in \cL^1\of{\fH^\Lambda_p} } \, ,
\end{align}

where $\cL^1\of{\fH} \subset \cB\of{\fH}$ denote the space trace-class
operators on a Hilbert space $\cH$. The ground state energy in
\eqref{eq:Intro_DefinitionGroundStateEnergy} cannot be explicitly
calculated. Estimating its value is a highly difficult problem because
in quantum electrodynamics an infinite number of particles is always
present and the Hamiltonian under consideration does not preserve the
particle number. Already in many-body quantum mechanics, where a
finite, but large, number of particles is considered, estimating the
ground state is a challenging problem. One customary method to
approximate the ground state is to restrict the set of density
matrices $\rho$ to a subset or superset, respectively, of simpler
structure in order to get an effective variational problem whose
solution approximates the original ground state energy and, in fact
yields upper or lower bounds, respectively. Specifically, the
restriction to quasifree states applies to both many-body quantum
mechanics and quantum field theory, and in
\cite{BachBreteauxTzaneteas2013, DerezinskiNapiorkowskiSolovej2013} it
was shown that without loss of generality, the variation may even be
restricted to pure quasifree state.  In this paper, we use this
formalism and briefly present the definition of the concepts for the
convenience of the reader. There are many sources in the literature
addressing quasifree states, e.g., \cite{BachLiebSolovej1994,
  Solovej2014, BratteliRobinson-II-1996}. Here we follow 
\cite[Sect.~2 and 3]{BachBreteauxTzaneteas2013}.

If the infimum in \eqref{eq:Intro_DefinitionGroundStateEnergy} is
taken only over the set of pure quasi-free density matrices (see
\cite{BachBreteauxTzaneteas2013}), one obtains an approximation of the
ground state energy called the Bogoliubov--Hartree--Fock (BHF) energy
$E_\BHF\of{H_{p}^{g,\Lambda}}$ of the fiber Hamiltonian,

\begin{align} \label{eq:Intro_DefinitionBHFEnergy}
E_\BHF\of{H_{p}^{g,\Lambda}} 
\ := \ &
\inf_{\pQDM\of{\fH^\Lambda_p}} \Tr\brak{\rho H_{p}^{g,\Lambda}} \, , 
\\[1ex]  \label{eq:Intro_Definition-pQDM}
\pQDM\of{\fH} 
\ := \ & 
\set{\rho \in \DM\of{\fH}}{\rho 
\textnormal{ is pure and quasifree}} \, .
\end{align}

Note that the BHF-energy is always an upper bound on the true ground
state energy, that is

\begin{align} \label{eq-030}
E_\gs\of{H_{p}^{g,\Lambda}} 
\ \leq \ 
E_\BHF\of{H_{p}^{g,\Lambda}} \, .
\end{align}

The set $\pQDM$ admits the following parametrization
\cite{BachBreteauxTzaneteas2013} in terms of homogeneous Bogolubov
transformations and Weyl transformations,

\begin{align} \label{eq:Intro_ParametrizationPQDM}
\pQDM
\ = \
\set{\ket{\bbW_\eta\UU_B\Om}\bra{\bbW_\eta\UU_B\Om} \, }{ 
\; B \in \Bog\of{h^\Lambda_\ph}, \eta \in h^\Lambda_\ph} \, .
\end{align}

For our purposes it suffices to recall that:
\begin{itemize}
\item The homogeneous Bogolubov transforms are unitary operators on
  the photon Fock space $\fH_{\ph}^\Lambda$ which are
  parametrized by the set of Bogolubov matrices

\begin{align} \label{eq-031}
\Bog\of{\fh_\ph^\Lambda} 
& \ := \
\set{B = 
\begin{pmatrix}
U & \sfJ V \sfJ \\
V & \sfJ U \sfJ
\end{pmatrix}
}{U \in \cB\of{\fh_\ph^\Lambda}, \
V \in \HS\of{\fh_\ph^\Lambda}, \ B^* S B = S} \, ,
\end{align}

with

\begin{align} \label{eq-032}
S \ := \ 
\begin{pmatrix}
\bfone_{\fH_{\ph}^\Lambda} & 0 \\
0 & -\bfone_{\fH_{\ph}^\Lambda}
\end{pmatrix}  \, ,
\end{align}

where the Bogolubov transformation $\UU_B$ corresponding to 
$B \in \Bog\of{\fh_\ph^\Lambda}$ acts on the field operators as

\begin{align} \label{eq:Intro_BogAction}
\UU_B^{*} \of{a^*(f) + a(g)} \UU_B
\ = \ 
a^*\of{U f + \sfJ V \sfJ g} + a\of{V f + \sfJ U \sfJ g}
\end{align}

for all $f,g \in \fh_\ph^\Lambda$. In general, $\sfJ$ is an
arbitrary anti-unitary involution on $\fh_\ph^\Lambda$ (the set
pQDM does not depend on $\sfJ$). Here we take the specific choice

\begin{align} \label{eq:Intro_JChoice}
\sfJ \psi\of{k, \pm} 
\ = \ 
\overline{\psi\of{-k, \pm}} 
\quad \forall \psi \in \fh_\ph^\Lambda \, ,
\end{align}

because we make use of results of \cite{Herdzik2023}, where this
assumption is made.

\item The Weyl transformations are unitary operators on the photon
  Fock space $\fH_{\ph}^\Lambda$ which are parametrized by the
  vectors $\eta \in \fh_\ph^\Lambda$. They act on the field
  operators as

\begin{align} \label{eq:Intro_WeylAction}
\bbW^{*}_{\eta} \of{a^*(f) + a(g)} \bbW_{\eta}
\ = \
a^*(f) + \sca{f}{\eta} + a(g) + \sca{g}{\sfJ \eta}  \, ,
\end{align}

for all $f,g \in \fh_\ph^\Lambda$.
\end{itemize}

For a detailed discussion of both the Bogoliubov and Weyl transforms
we refer the reader to \cite{Berezin1966, Araki1971,
  ArakiShiraishi1971, Derezinski2006}.

The parametrization given in \eqref{eq:Intro_ParametrizationPQDM}
implies that the BHF-energy of $H_{p}^{g,\Lambda}$ is given by

\begin{align} \label{eq:Intro_FirstBHFParametrization}
E_\BHF\of{H^{g,\Lambda}_{p}} 
\ = \
\inf\Big\{ \sca{\bbW_\eta \UU_B \Om}{H^{g,\Lambda}_{p} \bbW_\eta \UU_B \Om}
\; \Big| \ B \in \mathrm{Bog}\of{\fh_\ph^\Lambda}, \ 
\eta \in \fh_\ph^\Lambda \Big\} \, . 
\end{align}

It is this formulation of the BHF-energy which was used in
\cite{BachBreteauxTzaneteas2013} and \cite{Herdzik2023} to analyze the
BHF-energy of the zero total momentum fiber Hamiltonian

\begin{align} \label{eq-033}
H^{g,\Lambda}_{0} 
\ = \
\of{P_{\ph} + g \bbA\of{0}}^2 + H_{\ph} \, .
\end{align}

The next section provides a brief overview of this analysis.

\paragraph*{Derivation of the Energy Functional.}
By explicitly calculating the scalar product on the right side of
Eq.~\eqref{eq:Intro_FirstBHFParametrization} using the actions of the
Bogolubov and Weyl transformations as outlined in
Eq.~\eqref{eq:Intro_BogAction} and \eqref{eq:Intro_WeylAction}, as
well as the Wick theorem \cite{PeskinSchroeder2018}, the author of
\cite{Herdzik2023} obtains the following first result.

\begin{prop} \label{prop:Intro_OriginalProblem}
The BHF-energy of $H_{0}^{g,\Lambda}$, as determined in
\eqref{eq:Intro_FirstBHFParametrization}, is given by

\begin{align} \label{eq-034}
E_\BHF\of{H^{g,\Lambda}_{0}}
\ = \ &
\inf\Big\{ \tcE_{g,\Lambda}\of{U,V,\eta} \, \Big| \ 
B = \of{
\begin{smallmatrix} U & \sfJ V \sfJ \\ V & \sfJ U \sfJ \end{smallmatrix}} 
\in \Bog\of{\fh_\ph^\Lambda}, \ 
\eta \in \fh_\ph^\Lambda \Big\} \, ,
\end{align}

where the functional $\tcE_{g,\Lambda}$ is given by 

\begin{align} \label{eq:Intro_RawEnergyFunctional}
\tcE_{g,\Lambda}\of{U,V,\eta} 
\ := \ & 
\frac{1}{2}\sum_{\nu = 1}^3 \of{\Tr\brak{k_\nu V^*V} 
+ \sca{\eta}{k_\nu\eta} + 2 \rRe \sca{\eta}{G_\nu}}^2 
\\ &
+ \frac{1}{2}\sum_{\nu = 1}^3 \Tr\brak{\of{V^* \sfJ U k_\nu}^2}
+ \Tr\brak{k_\nu V^* V k_\nu \of{1 + V^*V}}
\\ &
+ \frac{1}{2}\sum_{\nu = 1}^3  
\rRe\sca{G_\nu + k_\nu \eta}{\of{1 + 2 V^*V + 2V^* \sfJ U}\of{G_\nu + k_\nu \eta}}
\\ &
+ \Tr\brak{\abs{k}V^*V} + \sca{\eta}{\abs{k}\eta} \, .
\end{align}

\end{prop}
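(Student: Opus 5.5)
The plan is to compute $\la \bbW_\eta \UU_B \Om , H_0^{g,\Lambda} \bbW_\eta \UU_B \Om \ra$ directly and then take the infimum, which by \eqref{eq:Intro_FirstBHFParametrization} is $E_\BHF(H_0^{g,\Lambda})$. First I write $H_0^{g,\Lambda} = \sum_{\nu=1}^3 (P_{\ph,\nu} + g\bbA_\nu(0))^2 + H_\ph$, with $P_{\ph,\nu} = d\Gamma[k_\nu]$ and $\bbA_\nu(0) = a^*(G_\nu) + a(G_\nu)$. Here $G_\nu$ is the $\nu$-th component of $G$, and the factor $g$ is absorbed into $G_\nu$ in the displayed formula.

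The key step is to conjugate by $\bbW_\eta$ and then $\UU_B$, using \eqref{eq:Intro_WeylAction} and \eqref{eq:Intro_BogAction}. To handle the second-quantized operators, I would write them formally as $d\Gamma[h] = \int a^*_{\uk} h\, a_{\uk}$ for $h \in \{k_\nu, |k|\}$.

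\begin{itemize}
\item Weyl conjugation shifts $a \mapsto a + \eta$. Hence $d\Gamma[h] \mapsto d\Gamma[h] + a^*(h\eta) + a(h\eta) + \la \eta, h\eta\ra$, and $\bbA_\nu \mapsto \bbA_\nu + 2\rRe\la \eta, G_\nu\ra$. This uses the $\sfJ$-symmetry $\sfJ G_\nu = G_\nu$, which follows from \eqref{eq:Intro_PolarizationVectorsProperties} and the choice \eqref{eq:Intro_JChoice}.
\item Consequently each $X_\nu := P_{\ph,\nu} + \bbA_\nu(0)$ becomes $d\Gamma[k_\nu] + \phi(G_\nu + k_\nu\eta) + c_\nu$. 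Here $\phi(f) = a^*(f)+a(f)$ and $c_\nu = \la\eta,k_\nu\eta\ra + 2\rRe\la\eta,G_\nu\ra$.
\end{itemize}

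Next I conjugate with $\UU_B$ and take the vacuum expectation of the square using Wick's theorem, organising the result by polynomial degree. The quadratic part $d\Gamma[k_\nu]$ becomes a quadratic expression in $a^*,a$. Its vacuum expectation is $\Tr[k_\nu V^*V]$, and its variance gives the terms $\Tr[(V^*\sfJ U k_\nu)^2]$ and $\Tr[k_\nu V^*V k_\nu(1+V^*V)]$. To derive these I use the relations encoded in $B^*SB = S$, namely $U^*U - V^*V = 1$ and the analogous off-diagonal identity, to simplify the contractions $\la a a^*\ra$ and $\la a a\ra$ in the Bogoliubov vacuum.

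The linear part $\phi(f)$ has zero expectation, and its second moment is $\rRe\la f,(1+2V^*V+2V^*\sfJ U)f\ra$ with $f = G_\nu + k_\nu\eta$. The cross terms between odd and even degrees vanish, because $\UU_B\Om$ is an even quasifree state. The constant and the expectation of the quadratic part combine into the square $(\Tr[k_\nu V^*V] + c_\nu)^2$. Finally $H_\ph$ contributes $\Tr[|k|V^*V] + \la\eta,|k|\eta\ra$. I then sum over $\nu$ and keep track of the factor $\tfrac12$ coming from the normalisation convention in \eqref{eq-033}.

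I expect the main obstacle to be the variance of the quadratic term: the Wick contractions together with the Bogoliubov identities must be reduced carefully to the compact trace expressions. I would also need to justify domain issues, namely finiteness of the expectation, using $V \in \HS$ and the bounded cutoff. Apart from this, the argument is bookkeeping.
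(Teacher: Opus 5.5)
Your proposal follows the same route as the paper. The paper does not re-derive the formula; it takes it from Herdzik's explicit evaluation of $\la \bbW_\eta \UU_B \Om, H_0^{g,\Lambda} \bbW_\eta \UU_B \Om \ra$ using the Weyl and Bogoliubov actions and Wick's theorem. Your split gives three pieces: the square $(\Tr[k_\nu V^*V]+c_\nu)^2$, the variance of the quadratic part, and the second moment of $\phi(G_\nu+k_\nu\eta)$. That is exactly this computation, and the pieces reproduce the stated terms.

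There are two small corrections to make.

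\begin{itemize}
\item You claim that \eqref{eq:Intro_PolarizationVectorsProperties} and \eqref{eq:Intro_JChoice} give $\sfJ G_\nu = G_\nu$. They actually give $\sfJ G_\nu = -G_\nu$. This minus sign is what later makes $\rRe\la\eta,G_\nu\ra$ vanish when $\sfJ\eta=\eta$. The shift $\bbA_\nu \mapsto \bbA_\nu + 2\rRe\la\eta,G_\nu\ra$ needs no symmetry at all: in the standard convention, $a^*(G_\nu)$ and $a(G_\nu)$ are shifted by the complex-conjugate numbers $\la\eta,G_\nu\ra$ and $\la G_\nu,\eta\ra$. So drop that justification rather than rely on a false identity.
\item The prefactor $\tfrac12$ does not appear in \eqref{eq-033} as printed. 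It comes from the factor $\tfrac{1}{2m}$ with $m=1$ in \eqref{eq:Intro_DefinitionPFH}.
\end{itemize}
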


In \cite{Herdzik2023}, a reparametrization yielding a new effective
energy functional $\cE_{g, \Lambda}$ is introduced, which is used to
bound $\tcE_{g,\Lambda}\of{U,V,\eta}$ from above and below. This is
the contents of the next proposition.
 
\begin{prop} \label{prop:Intro_FullEnergyBounds}
The BHF-energy of $H^{g,\Lambda}_0$ equals and can be bounded from
above, respectively, by

\begin{align} \label{eq:Intro_FullEnergyBounds}
\inf\Big\{ & \cE_{g,\Lambda}\of{z,\eta} \: \Big| \
z \in \HS_0\of{\fh_\ph^\Lambda}, \ 
\eta \in \fh_\ph^\Lambda \Big\}  
\ = \ 
\\[1ex] \nonumber
& E_\BHF\of{H_{0}^{g,\Lambda}} 
\ \leq \ 
\inf\Big\{ \cE_{g,\Lambda}\of{z,\eta} \: \Big| \ 
z = \sfJ z \sfJ \in \HS_0\of{\fh_\ph^\Lambda}, \ 
\eta = \sfJ \eta \in \fh_\ph^\Lambda \Big\} \, , 
\end{align}

where 
\begin{align}  \label{eq:Intro_DefinitionFullEnergyFunctional} 
\cE_{g,\Lambda}\of{z,\eta} 
\ := \
& \frac{1}{8}\sum_{\nu = 1}^3 
\Tr\brak{k_\nu^2 z^2 \of{1 + z}^{-1} - k_\nu z k_\nu z \of{1 + z}^{-1}}
\nonumber \\ & + 
\frac{1}{2}\sum_{\nu = 1}^3 
\sca{G_\nu + k_\nu \eta}{\of{1 + z}^{-1}\of{G_\nu + k_\nu \eta}}
\nonumber \\ & +
\frac{1}{4} \Tr\brak{\abs{k} z^2 \of{1 + z}^{-1}} 
+ \sca{\eta}{\abs{k}\eta} \, .
\end{align}

Furthermore, $\cE_{g,\Lambda}$ exhibits the symmetry

\begin{align} \label{eq-035}
\cE_{g,\Lambda}\of{z,\eta} 
\ = \ 
\cE_{g,\Lambda}\of{\sfJ z \sfJ, \sfJ \eta}  \, , \quad 
\forall z \in \HS_0\of{\fh_\ph^\Lambda} \, , \,  
\eta \in \fh_\ph^\Lambda \, .
\end{align}

Hence, if $\cE_{g,\Lambda}$ posses a unique minimizer both sides of
\eqref{eq:Intro_FullEnergyBounds} coincide.
\end{prop}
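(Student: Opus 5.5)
Starting from Proposition~\ref{prop:Intro_OriginalProblem}, the plan is to reparametrize the Bogoliubov matrices by a single positive operator $z$ and then reduce $\tcE_{g,\Lambda}$ to $\cE_{g,\Lambda}$. The constraint $B^*SB=S$ gives $U^*U - V^*V = 1$ and $U^*\sfJ V\sfJ = V^*\sfJ U \sfJ$ (the latter holding up to the appropriate transposition). Every $B$ factors as a unitary part (a one-particle unitary $u$, which does not affect the vacuum up to phase) times a self-adjoint part. Using the polar decomposition $U=u\,(1+V^*V)^{1/2}$, the quantities entering $\tcE_{g,\Lambda}$, namely $V^*V$ and $V^*\sfJ U$, are expressible through a positive operator. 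We set $1+2V^*V+2V^*\sfJ U \equiv (1+z)^{-1}$, or more precisely choose $z\ge 0$ Hilbert--Schmidt with $V^*V = \tfrac14 z^2(1+z)^{-1}$ and $V^*\sfJ U = -\tfrac12 z(1+z)^{-1}$, up to the identifications fixed by the canonical form. We will check that this map $B\mapsto z$ onto $\HS_0$ is well defined, after quotienting the unitary part, and surjective. This requires solving the quadratic relations and confirming $1+2V^*V+2V^*\sfJU=(1+z)^{-1}$ with $z\ge0$; the Hilbert--Schmidt property of $V$ gives $z\in\HS_0$.

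Next we substitute into $\tcE_{g,\Lambda}$ term by term. The linear terms $\Tr[|k|V^*V]+\la\eta,|k|\eta\ra$ become the last line of \eqref{eq:Intro_DefinitionFullEnergyFunctional}. The term $\rRe\la G_\nu+k_\nu\eta,(\cdots)(G_\nu+k_\nu\eta)\ra$ becomes the $(1+z)^{-1}$ quadratic form. The trace terms $\Tr[(V^*\sfJUk_\nu)^2]+\Tr[k_\nu V^*Vk_\nu(1+V^*V)]$ combine into $\tfrac18\Tr[k_\nu^2z^2(1+z)^{-1}-k_\nu zk_\nu z(1+z)^{-1}]$. For the first, squared term, the claimed equality with the infimum on the left of \eqref{eq:Intro_FullEnergyBounds} requires showing it does not contribute at the infimum. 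The plan is to use that a total-momentum shift (a unitary $e^{ik\cdot a}$-type gauge, or the reflection $\sfJ$) lets one symmetrize. Alternatively, the squared term is $\ge 0$, so $\tcE\ge\cE$, giving the lower bound. For $\sfJ$-invariant $(z,\eta)$, the antisymmetry $k\mapsto -k$ with \eqref{eq:Intro_PolarizationVectorsProperties} makes $\Tr[k_\nu V^*V]$, $\la\eta,k_\nu\eta\ra$ and $\rRe\la\eta,G_\nu\ra$ vanish (using $\sfJ G_\nu$ vs.\ $G_\nu$ parity), so there $\tcE=\cE$, which gives the upper bound on the right. To obtain the stated equality on the left, we additionally need $\inf\cE\ge E_\BHF$; we expect this to follow from the reparametrization being onto, with the squared term handled as in \cite{Herdzik2023}, by which we take the equality as established there.

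The symmetry \eqref{eq-035} is checked directly. $\sfJ$ commutes with $|k|$ and sends $k_\nu\mapsto -k_\nu$ (the map $k\mapsto-k$ combined with conjugation), so every trace is invariant because $\sfJ A\sfJ$ has trace $\overline{\Tr A}$ and the traces are real. Moreover $\sfJ G_\nu=-G_\nu$, by the oddness of $\epsilon$ and reality, so $G_\nu+k_\nu\sfJ\eta=-\sfJ(G_\nu+k_\nu\eta)$ and the quadratic form is invariant. The final claim follows: if the minimizer $(z_*,\eta_*)$ is unique, symmetry forces $z_*=\sfJ z_*\sfJ$ and $\eta_*=\sfJ\eta_*$, so the two infima coincide. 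We expect the main obstacle to be the algebraic reparametrization together with the trace-term identity, which involves noncommuting $k_\nu$ and $z$.
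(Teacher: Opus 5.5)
Your outline agrees with the mechanism the paper points to when it cites \cite{Herdzik2023}: the squared mean-momentum term vanishes for $\sfJ V\sfJ=V$, $\sfJ\eta=\eta$, and this gives the upper bound. Your verification of the symmetry \eqref{eq-035} (via $\sfJ G_\nu=-G_\nu$, $\sfJ k_\nu=-k_\nu\sfJ$, $\Tr[\sfJ A\sfJ]=\overline{\Tr A}$) and the final uniqueness step are correct. The reduction from $\tcE$ to $\cE$ and the left-hand equality, however, have genuine gaps.

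\emph{The reparametrization is wrong as written.} With $V^*V=\tfrac14 z^2(1+z)^{-1}$ and pairing $-\tfrac12 z(1+z)^{-1}$ you get $1+2V^*V+2V^*\sfJ U=\tfrac{2+z^2}{2(1+z)}\neq(1+z)^{-1}$. Your pairing also violates purity, which forces its square to be $V^*V(1+V^*V)=\tfrac{z^2(2+z)^2}{16(1+z)^2}$. The consistent choice is $-[V^*V(1+V^*V)]^{1/2}=-\tfrac14 z(2+z)(1+z)^{-1}$.

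More seriously, for a general Bogoliubov matrix the pairing is not determined by $V^*V$. Pure quasifree states correspond to complex-symmetric pairing kernels, which carry more parameters than a positive $z$. Moreover, $V^*\sfJ U$ is antilinear, so it can be identified with a function of $z$ only in a $\sfJ$-compatible situation. For instance, the natural $B(z)$ built from $(1+z)^{\pm1/2}$ satisfies $U^*\sfJ V\sfJ=V^*\sfJ U\sfJ$ only when $z=\sfJ z\sfJ$. Hence $\tcE$ minus the squared term is not a function of $z$ alone. The bound $\tcE\ge\cE(z(B),\eta)$, which you need for $E_{\BHF}\ge\inf\cE$, requires showing that the aligned pairing phase minimizes the interaction and $k_\nu$-trace terms. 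A term-by-term substitution does not give this.

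\emph{Surjectivity cannot give the left-hand equality.} Surjectivity of $B\mapsto z$, combined with $\tcE=(\text{square})+\cE$, gives $\tcE(B,\eta)\ge\cE(z,\eta)$, i.e.\ again only $E_{\BHF}\ge\inf\cE$. The reverse inequality needs, for every non-$\sfJ$-invariant $(z,\eta)$, a BHF state with vanishing mean total momentum $\la P_\ph+g\bbA(0)\ra$ and energy at most $\cE(z,\eta)$. Since the remaining terms do not see the fiber momentum, this is akin to showing that no fiber $p\neq0$ has lower BHF energy than $p=0$, an input your proposal does not supply.

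The symmetrizations you mention do not provide it either:
\begin{itemize}
\item A one-particle phase $e^{ik\cdot a}$ commutes with $k_\nu$, so it leaves $\Tr[k_\nu V^*V]+\la\eta,k_\nu\eta\ra$ unchanged.
\item Averaging $(z,\eta)$ with $(\sfJ z\sfJ,\sfJ\eta)$ is unavailable because $\cE$ is not convex (Theorem~\ref{thm:nonConvexity}).
\end{itemize}

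Once the reduction is done correctly, what your route can establish is the sandwich $\inf\cE\le E_{\BHF}\le\inf_{\sfJ\text{-inv}}\cE$. This matches how the paper itself describes Herdzik's reparametrization (it bounds $\tcE$ from above and below), and it already suffices for the concluding statement about a unique minimizer.
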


In the proof of Proposition~\ref{prop:Intro_FullEnergyBounds} in
\cite{Herdzik2023}, the specific choice of $\sfJ$ in
\eqref{eq:Intro_JChoice} is highly relevant. In particular, the fact
that the first term of $\tcE_{g,\Lambda}$ in
\eqref{eq:Intro_RawEnergyFunctional}, namely

\begin{align} \label{eq-036}
\frac{1}{2} \sum_{\nu = 1}^3 \of{\Tr\brak{k_\nu V^*V} + \sca{\eta}{k_\nu\eta} + 2 \rRe \sca{\eta}{G_\nu}}^2  \, ,
\end{align}

vanishes for $V$ and $\eta$ satisfying $\sfJ V \sfJ = V$ and 
$\sfJ \eta = \eta$. 

\subsection{Main Results} \label{subsec:MainResults}

The main result of \cite{Herdzik2023} is the derivation of the energy
functional $\cE_{g, \Lambda}$ and the bounds
\eqref{eq:Intro_FullEnergyBounds} it yields. The present article is
devoted to its analysis. The task of determining the global minimum of
$\cE_{g, \Lambda}$ is, however, still too complicated. One of the
reasons is that, for every $g \geq 0$, it is not globally convex. This
is the content of Theorem~\ref{thm:Intro_NonConvexity} below.

To deal with a tractable model, we introduce a new effective energy
functional $\cG_{g, \Lambda}$ (see Definition~\ref{def:Intro_G} below)
in this work that simplifies $\cE_{g, \Lambda}$. Although there is
some loss of information when replacing $\cE_{g, \Lambda}$ by 
$\cG_{g, \Lambda}$, the new energy functional has the crucial advantage
of being globally convex (see Theorem~\ref{thm:Intro_ConvexityOfG}).

In this paper, we analyze the effective energy functional 
$\cG_{g, \Lambda}$ in detail. In
Theorem~\ref{thm:Intro_ExistenceOfMinimizer}, we state the existence
of a unique minimizer with certain properties. In order to obtain
explicit formulas for the minimum, we eliminate the variable $z$ and
derive an equivalent minimization problem in which only the variable
$\eta$ is considered. This is the content of
Theorem~\ref{thm:Intro_ReducedFunctional}. Finally, an explicit
estimation of the minimum is given in
Theorem~\ref{thm:Intro_GroundStateEnergyEstimate}.
Theorems~\ref{thm:Intro_ConvexityOfG},
\ref{thm:Intro_ExistenceOfMinimizer},
\ref{thm:Intro_ReducedFunctional}, and
\ref{thm:Intro_GroundStateEnergyEstimate}, together with
Theorem~\ref{thm:Intro_NonConvexity}, constitute the main results of
the present article.
 
\begin{thm}\label{thm:Intro_NonConvexity}
For any fixed $\eta \in \fh_\ph^\Lambda$ and $g \geq 0$ the functional 
$\cE_{g,\Lambda}\of{z,\eta}$ is not convex in $z$ on 
$\HS_{0}\of{\fh_\ph^\Lambda}$. Moreover, the term
 
\begin{align} \label{eq:Intro_NonConvexTerm}
\frac{1}{8}\sum_{\nu = 1}^3 
\Tr\brak{k_\nu^2 z^2\of{1 + z}^{-1} - k_\nu z k_\nu z\of{1 + z}^{-1}}
\end{align}

is not convex on $\HS_{0}\of{\fh_\ph^\Lambda}$.
\end{thm}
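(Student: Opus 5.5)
The plan is to exhibit an explicit one-parameter segment in $\HS_0(\fh_\ph^\Lambda)$ along which the non-convex term \eqref{eq:Intro_NonConvexTerm} fails midpoint convexity. Then we check that the remaining terms of $\cE_{g,\Lambda}(z,\eta)$ cannot compensate for this failure. We follow the hint in the organisation section and work with rank-two operators.

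\textbf{Choice of test operators.} Take two normalized, real-valued functions $f_1,f_2\in\fh_\ph^\Lambda$ with disjoint supports $A_1,A_2$ in $\uS(\sigma,\Lambda)$. Choose them so that $k_\nu$ is essentially constant on each support, with values $\kappa^{(1)}_\nu$ on $A_1$ and $\kappa^{(2)}_\nu$ on $A_2$, where $\kappa^{(1)}\neq\kappa^{(2)}$. We then consider $z=t\,P_w$ with $w=\alpha f_1+\beta f_2$ normalized and $P_w=|w\rangle\langle w|$, so that $(1+z)^{-1}=1-\frac{t}{1+t}P_w$. The bracket in \eqref{eq:Intro_NonConvexTerm} becomes
\[
\tfrac{t^2}{1+t}\big(\langle w,k_\nu^2w\rangle-\langle w,k_\nu w\rangle^2\big),
\]
which is $\frac{t^2}{1+t}$ times the variance of $k_\nu$ in $w$. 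This is a concave function of $w$ at fixed $t$. Hence the convex combination $z_s=(1-s)z_0+sz_1$, with $z_0=tP_{f_1}$ and $z_1=tP_{f_2}$ (both of zero variance in the idealized limit), gives
\[
z_{1/2}=\tfrac t2(P_{f_1}+P_{f_2}).
\]
The point $z_{1/2}$ is rank two and commutes with the multiplication operators $k_\nu$ only blockwise. We compute the term at $z_{1/2}$ directly: on the two-dimensional span the operator $k_\nu$ acts as $\mathrm{diag}(\kappa^{(1)}_\nu,\kappa^{(2)}_\nu)$ up to small errors, so the trace splits. The endpoints then have value approximately $0$.

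\textbf{Refined construction.} Diagonal $z$ commuting with $k$ actually gives zero. A cleaner counterexample is to take endpoints $z_0=tP_{w_+}$ and $z_1=tP_{w_-}$ with $w_\pm=(f_1\pm f_2)/\sqrt2$. Both endpoints have positive variance $\frac14|\kappa^{(1)}-\kappa^{(2)}|^2$, giving the value
\[
\tfrac18\,\tfrac{t^2}{1+t}\,\tfrac14\,|\kappa^{(1)}-\kappa^{(2)}|^2 .
\]
Their midpoint $\frac t2(P_{f_1}+P_{f_2})$ commutes with $k$, where the term vanishes. That direction is wrong for a counterexample, so reverse it: take endpoints $tP_{f_1}$ and $tP_{f_2}$ (value about $0$) and test midpoint-type points. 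Since these points commute with $k$, we instead use a non-symmetric segment $z_s=tP_{f_1}+s\,t\,(P_{w_+}-P_{f_1})$, parametrized so the path passes through non-commuting operators, and compute $\phi(s)$ explicitly as a rational function of $s$ on the $2\times2$ block. The main obstacle is finding such a path with $\phi''(s)<0$ somewhere. The key computation is the second derivative of
\[
\Tr\big[k^2z^2(1+z)^{-1}-kzkz(1+z)^{-1}\big]
\]
along $z+sh$ at a $z$ commuting with $k$, with $h$ off-diagonal. There the first part contributes $\sim 2\Tr[k^2h^2(1+z)^{-1}]$ and the cross term contributes $-2\Tr[khkh(1+z)^{-1}]$ plus resolvent corrections. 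For small $z$ this is $\approx 2\Tr[h^2(k^2-k\,\cdot\,k)]$, which for off-diagonal $h$ with $|h_{12}|^2$ gives $2|h_{12}|^2(\kappa_1-\kappa_2)^2\ge0$. So negativity must come from the resolvent factor at large $t$. We will therefore compute $\phi''$ exactly in the $2\times2$ model with $t$ large, where we expect the concave factor $t^2/(1+t)\sim t$ and the denominators to produce a sign change.

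\textbf{Remaining terms.} Once the term \eqref{eq:Intro_NonConvexTerm} is shown to be non-convex along some segment, we pass to the full functional $\cE_{g,\Lambda}(z,\eta)$ at fixed $\eta$.
\begin{itemize}
\item The term $\frac14\Tr[|k|z^2(1+z)^{-1}]$ is convex but of order $\Lambda t$ times bounded data.
\item The term $\frac12\langle G_\nu+k_\nu\eta,(1+z)^{-1}(G_\nu+k_\nu\eta)\rangle$ is convex in $z$, since $z\mapsto(1+z)^{-1}$ is operator convex.
\end{itemize}
To keep these from compensating, we scale: place $A_1,A_2$ at large, well-separated momenta, so that $|\kappa^{(1)}-\kappa^{(2)}|^2\sim\Lambda^2$ dominates the $\Lambda$-linear terms. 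We also choose the supports of $f_i$ to shrink, so that the overlap with the fixed vector $G_\nu+k_\nu\eta$ tends to zero and its contribution to the second derivative vanishes. This makes the negative second derivative of \eqref{eq:Intro_NonConvexTerm} dominate, which proves non-convexity for every $g\ge0$ and every fixed $\eta$.
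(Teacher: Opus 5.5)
\textbf{There is a genuine gap.} You never exhibit a segment in $\HS_0$ along which the functional fails to be convex, and that is the whole content of the theorem. Your refined construction ends with ``we expect \ldots\ a sign change'', and neither concrete route you propose can produce one.

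\emph{The local analysis at commuting $z$ cannot work, for any $t$.} Write $z=\sum_i\lambda_i|e_i\rangle\langle e_i|$ in an eigenbasis and set $w_i=\lambda_i/(1+\lambda_i)$. Then
\[
\Tr\brak{k_\nu^2z^2(1+z)^{-1}-k_\nu zk_\nu z(1+z)^{-1}}=\tfrac12\sum_{i,j}\abs{\langle e_i,k_\nu e_j\rangle}^2(\lambda_i-\lambda_j)(w_i-w_j)\ \geq\ 0 .
\]
So the term is nonnegative and vanishes at $z$ commuting with the $k_\nu$ (e.g.\ $tP_{f_1}$ in your idealized model). Such points are global minimizers, so the second derivative there is $\geq 0$ in every direction, whatever the resolvent does at large $t$.

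\emph{Your explicit path is convex.} In your idealized model, with $k_\nu=\mathrm{diag}(\kappa^{(1)}_\nu,\kappa^{(2)}_\nu)$ on $\mathrm{span}\{f_1,f_2\}$, the identity gives $\sum_\nu\Tr[\cdots]=D\,\abs{z_{12}}^2/\det(1+z)$ with $D=\abs{\kappa^{(1)}-\kappa^{(2)}}^2$. Along $z_s=t[(1-s)P_{f_1}+sP_{w_+}]$ this equals $\frac{Dt^2}{4}\,s^2/M(s)$, where $M(s)=A+Bs(1-s)$, $A=1+t$, $B=t^2/2$. Its second derivative is $\frac{Dt^2}{4}(2A^2+6ABs^2+2B^2s^3)/M(s)^3>0$, so this segment is convex for every $t$.

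\textbf{The missing idea.} Keep the eigenvectors fixed and non-commuting with $k$, and vary both eigenvalues jointly. The paper takes $z_{t,s}=tP_\psi+sP_\vphi$, where $\psi,\vphi$ are orthonormal and exactly orthogonal to all $G_\nu+k_\nu\eta$, and $c=\frac12\sum_\nu\abs{\langle k_\nu\psi,\vphi\rangle}^2>0$. Exact orthogonality makes the interaction term constant on this plane, so no shrinking-support limit is needed. The $z$-dependent part of $4\cE_{g,\Lambda}$ is then
\[
a\tfrac{t^2}{1+t}+b\tfrac{s^2}{1+s}-c\bigl(\tfrac{ts}{1+t}+\tfrac{ts}{1+s}\bigr),
\]
and its Hessian determinant tends to $-c^2$ as $t\to\infty$, $s\to0$. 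The reason is that the $\partial_t^2$ entry decays like $(1+t)^{-3}$ while the mixed entry stays near $-c$. In your model, $\psi=w_+$, $\vphi=w_-$ gives $\frac D4\bigl(\frac{1+t}{1+s}+\frac{1+s}{1+t}-2\bigr)$, whose Hessian determinant is negative whenever $t\neq s$.

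This mechanism also makes your scaling step unnecessary. The $\abs{k}$-term only enlarges $a$ and $b$ and does not affect the limit. By contrast, letting $\abs{\kappa^{(1)}-\kappa^{(2)}}^2\sim\Lambda^2$ dominate the $\Lambda$-linear terms is not available for a fixed, possibly small, $\Lambda$.

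Finally, non-convexity of the $\frac18$-term itself follows, because the remaining part of $\cE_{g,\Lambda}$ is convex in $z$.
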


It is well know that the convexity of a function is, in general, not
preserved under reparametrizations. The non-convexity of
$\cE_{g,\Lambda}$ might be a consequence of an unwise choice of
reparametrisation in
Proposition~\ref{prop:Intro_FullEnergyBounds}. The present form of the
functional $\cE_{g,\Lambda}$, as given in
\eqref{eq:Intro_DefinitionFullEnergyFunctional}, has, however, the
distinct advantage that the term

\begin{align} \label{eq-037}
\frac{1}{2}\sum_{\nu = 1}^3 
\sca{G_\nu + k_\nu \eta}{\of{1 + z}^{-1}\of{G_\nu + k_\nu \eta}} \, ,
\end{align}

which captures the entire dependency on the coupling functions and
coupling constant, is indeed convex, as we prove in
Theorem~\ref{thm:convexity}.

The term \eqref{eq:Intro_NonConvexTerm} is the source of non-convexity
of the energy functional $\cE_{g,\Lambda}$. Actually, if we subtract
it from the energy, we obtain a convex functional, see
Theorem~\ref{thm:convexity}. We denote this functional by
$\tcG_{g,\Lambda}$.

\begin{defn}\label{def:Intro_GTilde}
Let $ 0 < \epsilon < 1$. The energy functional $\tcG_{g,\Lambda}:
\HS_{\epsilon}\of{\fh_\ph^\Lambda} \oplus \fh_\ph^\Lambda \to \RR$ 
is defined by

\begin{align} \label{eq-038}
\tcG_{g,\Lambda}\of{z,\eta}
\ = \
& \frac{1}{2} \sum_{\nu = 1}^3 
\sca{G_\nu + k_\nu \eta}{\of{1 + z}^{-1}\of{G_\nu + k_\nu \eta}}
\nonumber \\ 
& + \frac{1}{4} \Tr\brak{\abs{k}z^2\of{1 + z}^{-1}}
+ \sca{\eta}{\abs{k}\eta} \, .
\end{align}

\end{defn}

Note that $\tcG_{g,\Lambda}$ indeed results from
$\cE_{g,\Lambda}$ by subtracting the non-convex term

\begin{align} \label{eq-039}
\frac{1}{8}\sum_{\nu = 1}^3 
\Tr\brak{k_\nu^2 z^2\of{1 + z}^{-1} - k_\nu z k_\nu z\of{1 + z}^{-1}} \, .
\end{align}

It is important to notice that, though simplified, $\tcG_{g,\Lambda}$
still contains the interaction term

\begin{align} \label{eq-040}
\frac{1}{2}\sum_{\nu = 1}^3 
\sca{G_\nu + k_\nu \eta}{\of{1 + z}^{-1}\of{G_\nu + k_\nu \eta}} \, .
\end{align}

As a final preparation, we rescale the photon momentum in such a way
that the photon energy $\Lambda $ corresponds to the photon energy $1$
in the new scale, i.e., in new energy units. The rescaled energy
functional, denoted by $\cG_{g,\Lambda}$, is derived in
Section~\ref{sec:Properties}. In the next definition we give an
explicit formula, for which we here and henceforth assume that 
$\sigma \in (0,1)$ is fixed.

\begin{defn}\label{def:Intro_G}
For $\Lambda \geq 1$, we define the Hilbert space $\fh_\Lambda$ by

\begin{align} \label{eq-041}
\fh_\Lambda \ := \ L^2\of{\uS\of{\sigma/\Lambda,1};\CC} \, ,
\end{align}

and we set 
$\cG_{g,\Lambda}: \HS_\epsilon\of{\fh_\Lambda} \oplus \fh_\Lambda \to \RR$ 
to be

\begin{align} \label{eq-042}
\cG_{g,\Lambda}\of{z,\eta}
\ = \ & 
\frac{\Lambda^2}{2}\sum_{\nu = 1}^3 
\sca{G_\nu + k_\nu \eta}{\of{1 + z}^{-1}\of{G_\nu + k_\nu \eta}}
\nonumber \\ & 
+ \frac{\Lambda}{4} \Tr\brak{\abs{k}z^2\of{1 + z}^{-1}}
+ \Lambda\sca{\eta}{\abs{k}\eta} \, .
\end{align}

Furthermore, we denote the corresponding infimum by

\begin{align} \label{eq-043}
E_\cG^{g,\Lambda} 
\ := \ 
\inf\Big\{ \cG_{g,\Lambda}\of{z,\eta} \; \Big| \ 
\of{z,\eta} \in \HS_{\epsilon}\of{\fh_\Lambda} \oplus \fh_\Lambda \Big\} \, .
\end{align}
\end{defn}

The energy functional $\cG_{g,\Lambda}$ and its infimum
$E_\cG^{g,\Lambda}$ are the main subjects of investigation in this
work. The next theorem, which is proven in
Theorem~\ref{thm:convexity}, establishes that the functional
$\cG_{g,\Lambda}$ is strictly convex.

\begin{thm} \label{thm:Intro_ConvexityOfG}
The functional $\cG_{g,\Lambda}$ is strictly convex on 
$\HS_{\epsilon}\of{\fh_\Lambda} \oplus \fh_\Lambda$ (recall 
\eqref{eq:Intro_DefinitionHilbertSchmidtSet}).
\end{thm}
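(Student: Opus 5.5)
The plan is to prove strict convexity along line segments. Fix two distinct points $(z_0,\eta_0),(z_1,\eta_1)\in\HS_\epsilon(\fh_\Lambda)\oplus\fh_\Lambda$. Put $\delta:=z_1-z_0$ and $\xi:=\eta_1-\eta_0$, and for $t\in[0,1]$ set $z_t:=z_0+t\delta$, $\eta_t:=\eta_0+t\xi$ and $\varphi(t):=\cG_{g,\Lambda}(z_t,\eta_t)$. Since $\HS_\epsilon(\fh_\Lambda)$ is convex, $z_t\ge-\epsilon$, so $1+z_t\ge 1-\epsilon>0$. Hence $R_t:=(1+z_t)^{-1}$ is a bounded operator with $\|R_t\|\le(1-\epsilon)^{-1}$. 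It is real-analytic in $t$ in operator norm, with $\dot R_t=-R_t\delta R_t$ and $\ddot R_t=2R_t\delta R_t\delta R_t$. Moreover $R_t\ge(1+\|z_t\|_\op)^{-1}>0$. It then suffices to show that $\varphi$ is $C^2$ on $[0,1]$ with $\varphi''(t)>0$ for all $t$ whenever $(\delta,\xi)\neq 0$. We treat the three terms of $\cG_{g,\Lambda}$ separately.

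\textbf{Interaction term.} Set $v_t:=G_\nu+k_\nu\eta_t$, which is affine in $t$ with $\dot v_t=k_\nu\xi=:w$; it lies in $\fh_\Lambda$ because $k_\nu$ is bounded on $\uS(\sigma/\Lambda,1)$. Differentiating $\la v_t,R_tv_t\ra$ twice, we expect the completed-square identity
\begin{align*}
\frac{d^2}{dt^2}\la v_t,R_tv_t\ra
\ = \ 2\,\big\la w-\delta R_tv_t \, , \; R_t\big(w-\delta R_tv_t\big)\big\ra \ \ge \ 0 .
\end{align*}
This is the infinite-dimensional version of the joint convexity of the matrix-fractional map $(v,A)\mapsto\la v,A^{-1}v\ra$. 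Consequently the first term contributes a nonnegative amount to $\varphi''$.

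\textbf{Kinetic terms in $z$ and $\eta$.} For the $z$-term, write $z^2(1+z)^{-1}=z-1+(1+z)^{-1}$. The only nonlinear part of $\Tr[|k|z_t^2(1+z_t)^{-1}]$ is then $\Tr[|k|(R_t-1)]$. Its second derivative is
\begin{align*}
2\Tr\brak{|k|R_t\delta R_t\delta R_t} \ = \ 2\big\| |k|^{1/2}R_t\delta R_t^{1/2}\big\|_\HS^2 .
\end{align*}
Since $|k|\ge\sigma/\Lambda>0$ and $R_t$ is invertible, this is strictly positive unless $\delta=0$. The $\eta$-term has second derivative $2\Lambda\la\xi,|k|\xi\ra\ge2\sigma\|\xi\|^2$, which is strictly positive unless $\xi=0$. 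Summing, $\varphi''(t)>0$ whenever $(\delta,\xi)\neq0$, and strict convexity of $\cG_{g,\Lambda}$ follows.

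\textbf{Main obstacle.} The delicate step is the trace term: we must make sure the quantities are finite and that differentiation may be taken inside the trace. For this we use that $z_t^2(1+z_t)^{-1}=z_tR_tz_t$ is trace class (as $z_t\in\HS$ and $|k|\le1$). We also use the resolvent-identity representation
\begin{align*}
R_{t+h}-R_t=-hR_{t+h}\delta R_t ,
\end{align*}
which lets us rewrite all difference quotients of $\Tr[|k|z_tR_tz_t]$ as traces of products containing two Hilbert--Schmidt factors. Convergence of these difference quotients in trace norm then justifies the formulas above. Alternatively, one can avoid derivatives altogether: operator convexity of $s\mapsto(1+s)^{-1}$ on $(-1,\infty)$ gives convexity directly, and strictness follows from the explicit identity for the midpoint defect, $\tfrac12(R_0+R_1)-R_{1/2}=\tfrac14 R_0\delta R_{1/2}\delta R_1$-type expressions, combined with $|k|\ge\sigma/\Lambda$.
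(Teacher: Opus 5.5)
Your proof is correct, but it follows a different route from the paper's. The paper avoids derivatives and argues term by term. For the interaction term it adapts Lieb--Ruskai: expanding $\|y_1^{-1/2}\kappa_1-y_1^{1/2}\gamma\|^2+\|y_2^{-1/2}\kappa_2-y_2^{1/2}\gamma\|^2\ge0$ with the optimal $\gamma=(y_1+y_2)^{-1}(\kappa_1+\kappa_2)$ gives $\langle\kappa_1+\kappa_2,(y_1+y_2)^{-1}(\kappa_1+\kappa_2)\rangle\le\langle\kappa_1,y_1^{-1}\kappa_1\rangle+\langle\kappa_2,y_2^{-1}\kappa_2\rangle$, and it then substitutes $y_1=\alpha(1+z_1)$, $y_2=(1-\alpha)(1+z_2)$, $\kappa_1=\alpha\eta_1$, $\kappa_2=(1-\alpha)\eta_2$. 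For the trace term it writes $x^2(1+x)^{-1}=(1+x)+(1+x)^{-1}-2$ and combines operator convexity of the inverse with an appendix lemma: the convexity defect $Q_\alpha(z,z')$ is $\ge0$ and $\neq0$ for $z\neq z'$. Strictness of the sum then comes from the $z$-term or from $\langle\eta,|k|\eta\rangle$, exactly as in your last step.

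Your Hessian identity $2\langle w-\delta R_tv_t,R_t(w-\delta R_tv_t)\rangle$ is the infinitesimal counterpart of that completed square, and your formula $2\||k|^{1/2}R_t\delta R_t^{1/2}\|_{\HS}^2$ replaces the appendix lemma. The paper's route needs no regularity at all. Yours requires trace-norm differentiability, which does hold as you indicate. Splitting off $\Tr[|k|(R_t-1)]$ is only formal, since $R_t-1=-z_tR_t$ is merely Hilbert--Schmidt. But $z_tR_tz_t$, and even its first derivative $\delta-R_t\delta R_t=z_tR_t\delta+\delta R_tz_t-z_tR_t\delta R_tz_t$, are sums of products of two Hilbert--Schmidt factors. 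In return, you get an explicit, quantitative lower bound on the second variation. It shows how $|k|\ge\sigma/\Lambda$ and the invertibility of $R_t$ produce strictness, whereas the paper leaves implicit why $Q_\alpha\ge0$, $Q_\alpha\neq0$ forces $\Tr[|k|Q_\alpha]>0$.

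Your closing ``alternative'' is essentially the paper's route. Note that the midpoint defect there is $\tfrac14R_0\delta R_1\delta R_{1/2}$, which is not manifestly positive. So positivity must come from operator convexity, and the product formula only supplies non-vanishing.
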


From the next theorem, which is proven in Theorem~\ref{thm:existence},
we obtain the existence of a unique minimizer of $\cG_{g,\Lambda}$
together with some of its properties.

\begin{thm} \label{thm:Intro_ExistenceOfMinimizer}
The functional $\cG_{g,\Lambda}$ possesses a unique global minimzer
$\of{z_{g,\Lambda},\eta_{g,\Lambda}}$ on $\HS_\epsilon\of{\fh_\Lambda}
\oplus \fh_\Lambda$.  This minimizer belongs to $\HS_0\of{\fh_\Lambda}
\oplus \fh_\Lambda$, and it is $\sfJ$-invariant in the sense that
$\of{\sfJ z_{g,\Lambda} \sfJ, \sfJ \eta_{g,\Lambda}} =
\of{z_{g,\Lambda},\eta_{g,\Lambda}}$.
\end{thm}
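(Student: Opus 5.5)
We argue by the direct method of the calculus of variations, using the strict convexity of Theorem~\ref{thm:Intro_ConvexityOfG}. The steps are: coercivity, weak lower semicontinuity, existence, uniqueness, and finally the $\sfJ$-symmetry and positivity of the minimizer.

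\emph{Coercivity.} On $\HS_\epsilon(\fh_\Lambda)\oplus\fh_\Lambda$ the first term of $\cG_{g,\Lambda}$ is nonnegative, since $1+z\geq 1-\epsilon>0$. Because $|k|\geq\sigma/\Lambda$ on $\uS(\sigma/\Lambda,1)$, we have
\begin{align*}
\Lambda\sca{\eta}{|k|\eta}\ \geq\ \sigma\norm{\eta}^2 .
\end{align*}
For the $z$-term, the function $t\mapsto t^2/(1+t)$ is at least $c\,t^2$ on $[-\epsilon,\infty)\cap[-\epsilon,1]$ and at least $c\,|t|$ for $t\geq1$. So, after handling the noncommutativity of $|k|$ and $z$, one expects $\Tr[|k|z^2(1+z)^{-1}]$ to control $\norm{z}_\HS$. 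This step needs care, and I would do it by writing
\begin{align*}
\Tr\big[|k|z^2(1+z)^{-1}\big]\ =\ \Tr\big[|k|^{1/2}z(1+z)^{-1/2}\,(1+z)^{-1/2}z|k|^{1/2}\big]\ \geq\ \frac{\sigma}{\Lambda}\,\Tr\big[z^2(1+z)^{-1}\big].
\end{align*}
The last bound holds because $\Tr[|k|A]\geq(\sigma/\Lambda)\Tr[A]$ for $A=z(1+z)^{-1}z\geq0$. The scalar function then gives $\Tr[z^2(1+z)^{-1}]\geq c\min(\norm{z}_\HS^2,\norm{z}_\HS)$. Hence sublevel sets of $\cG_{g,\Lambda}$ are bounded in $\HS\oplus\fh_\Lambda$.

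\emph{Weak lower semicontinuity and existence.} $\HS_\epsilon(\fh_\Lambda)\oplus\fh_\Lambda$ is a closed convex subset of a Hilbert space, hence weakly closed. The functional is finite and convex there, and it is continuous in norm: $(1+z)^{-1}$ depends continuously on $z$ in operator norm, and the trace term is continuous in HS norm since $|k|\leq1$. A convex, norm-continuous functional on a closed convex set is weakly lower semicontinuous (Mazur). A minimizing sequence is bounded by coercivity, so it has a weakly convergent subsequence whose limit lies in the set and is a minimizer. Uniqueness follows from strict convexity (Theorem~\ref{thm:Intro_ConvexityOfG}). The main obstacle is making the continuity of $z\mapsto(1+z)^{-1}$ and of the trace term rigorous on the boundary $z\geq-\epsilon$; the coercivity estimate above is the other delicate point.

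\emph{Symmetry.} We check that $\cG_{g,\Lambda}(\sfJ z\sfJ,\sfJ\eta)=\cG_{g,\Lambda}(z,\eta)$. By \eqref{eq:Intro_PolarizationVectorsProperties} and \eqref{eq:Intro_JChoice}, $\sfJ G_\nu=-G_\nu$, since $G$ is real and odd. Also $\sfJ k_\nu=-k_\nu\sfJ$ and $\sfJ|k|=|k|\sfJ$. Therefore
\begin{align*}
\sfJ(G_\nu+k_\nu\eta)\ =\ -(G_\nu+k_\nu\sfJ\eta),
\end{align*}
and antiunitarity gives $\sca{\sfJ f}{\sfJ A\sfJ\,\sfJ f}=\overline{\sca{f}{Af}}$, which is real for self-adjoint $A$. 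The trace term is invariant because $\Tr[\sfJ X\sfJ]=\overline{\Tr X}$. This is the same symmetry as \eqref{eq-035}. Consequently $(\sfJ z\sfJ,\sfJ\eta)$ is also a minimizer, and uniqueness forces $\sfJ$-invariance.

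\emph{Positivity $z\geq0$.} Suppose $z=z_+-z_-$ with $z_-\neq0$, and compare with $z_+$. The first term of $\cG_{g,\Lambda}$ decreases, since $(1+z)^{-1}\geq(1+z_+)^{-1}$: the map $z\mapsto(1+z)^{-1}$ is operator monotone decreasing and $z\leq z_+$. For the trace term, $z_\pm$ commute with $z$, so $z^2(1+z)^{-1}=z_+^2(1+z_+)^{-1}+z_-^2(1-z_-)^{-1}$. The extra piece is strictly positive after pairing with $|k|\geq\sigma/\Lambda>0$. Hence $(z_+,\eta)$ has strictly smaller energy, contradicting minimality, and so $z_{g,\Lambda}\in\HS_0(\fh_\Lambda)$.
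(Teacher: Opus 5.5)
Your proposal is correct and follows essentially the paper's route: a closed convex (hence weakly closed) domain, convexity plus norm continuity giving weak lower semicontinuity, coercivity from $|k|\geq\sigma/\Lambda$, uniqueness from strict convexity, and $\sfJ$-invariance from the symmetry of $\cG_{g,\Lambda}$ combined with uniqueness. The continuity you flag as delicate is routine, since $\|(1+z)^{-1}\|\leq(1-\epsilon)^{-1}$ uniformly on the domain and the paper handles it with the resolvent identity. The only real deviation is in the positivity step: you compare with $z_+$ and get a strict decrease, so uniqueness is not needed there, whereas the paper compares with $|z|$, using $(1+z)^{-1}\geq(1+|z|)^{-1}$ and $z^2(1+z)^{-1}\geq|z|^2(1+|z|)^{-1}$, and then invokes uniqueness.
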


In addition to this abstract result we explicitly calculate the
partial minimizer $z_*\of{\eta}$ of $\cG_{g,\Lambda}\of{z,\eta}$ as a
functional of $z$ for a fixed $\eta$; the details of the construction
of $z_*\of{\eta}$ are presented in
Section~\ref{sec:PartialMinimization}. With this partial minimizer at
hand we are able to perform a variable reduction in $\cG_{g,\Lambda}$
as follows:

\begin{thm} \label{thm:Intro_ReducedFunctional}
Minimizing $\cG_{g,\Lambda}\of{z,\eta}$ over
$\HS_{\epsilon}\of{\fh_\Lambda} \oplus \fh_\Lambda$ is equivalent to
minimizing $\cG_{g,\Lambda}\of{z_*\of{\eta},\eta}$ over
$\eta \in \fh_\Lambda$ in the sense that

\begin{align} \label{eq:Intro_EquivalentMinimization}
E_\cG^{g,\Lambda} 
\ = \
\inf\Big\{ \cG_{g,\Lambda}\of{z_*\of{\eta},\eta} \; \Big| \ 
\eta \in \fh_\Lambda \Big\} \, .
\end{align}

Furthermore, $\cG_{g,\Lambda}\of{z_*\of{\eta},\eta}$ has
the explicit form

\begin{align} \label{eq:Intro_PartialMinimizationForm}
\cG_{g,\Lambda}\of{z_*\of{\eta},\eta}
\ = \
\frac{\Lambda}{2}
\Tr\brak{\of{\Lambda \abs{k}^{1/2}P_\eta\abs{k}^{1/2} 
         + \abs{k}^2}^{1/2} - \abs{k}}
+ \Lambda \sca{\eta}{\abs{k}\eta} \, ,
\end{align}

where $P_\eta$ is the finite-rank operator given by

\begin{align}
P_\eta \ = \ 
2 \sum_{\nu = 1}^3 \ket{G_\nu + k_\nu \eta} \bra{G_\nu + k_\nu \eta} \, , \quad
\forall \eta \in \fh_\Lambda \, .
\end{align}
\end{thm}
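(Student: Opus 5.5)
The plan is to minimize $\cG_{g,\Lambda}(\cdot,\eta)$ explicitly for fixed $\eta$ by a completion-of-squares argument in the variable $w := 1+z$. Once the partial minimum is known in closed form, \eqref{eq:Intro_EquivalentMinimization} is immediate, since $\inf_{z,\eta} = \inf_\eta \inf_z$.

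\textbf{Step 1: rewrite the functional.} Since $\sum_\nu \ket{G_\nu+k_\nu\eta}\bra{G_\nu+k_\nu\eta} = \tfrac12 P_\eta$, the interaction term equals $\frac{\Lambda^2}{4}\Tr[P_\eta (1+z)^{-1}]$. Using $z^2(1+z)^{-1} = w - 2 + w^{-1}$ with $w=1+z \geq 1-\epsilon >0$, put $a := \abs{k}$ and $b := \Lambda P_\eta + \abs{k}$. Then, formally,
\begin{align*}
\cG_{g,\Lambda}(z,\eta) - \Lambda\sca{\eta}{\abs{k}\eta}
\ = \ \frac{\Lambda}{4}\Big( \Tr[a w] + \Tr[b w^{-1}] - 2\Tr[a] \Big) .
\end{align*}
The individual traces may be infinite, so this identity will be kept in the grouped form $\Tr[a(w-2+w^{-1})] + \Lambda \Tr[P_\eta w^{-1}]$, which is finite. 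On $\fh_\Lambda$ we have $\sigma/\Lambda \leq a \leq 1$, so $a^{\pm1/2}$ are bounded and the manipulations below are legitimate.

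\textbf{Step 2: the matrix AM--GM inequality.} Set $X := a^{1/2} w a^{1/2}$ and $C := a^{1/2} b a^{1/2} = \Lambda \abs{k}^{1/2}P_\eta\abs{k}^{1/2} + \abs{k}^2$. Then $\Tr[aw] = \Tr X$ and $\Tr[b w^{-1}] = \Tr[C^{1/2}X^{-1}C^{1/2}]$. Expanding a square gives
\begin{align*}
0 \ \leq \ \Tr\big[(X^{1/2} - X^{-1/2}C^{1/2})^*(X^{1/2} - X^{-1/2}C^{1/2})\big]
\ = \ \Tr X - 2\Tr C^{1/2} + \Tr[C^{1/2}X^{-1}C^{1/2}] .
\end{align*}
Equality holds if and only if $X = C^{1/2}$, that is, for
\begin{align*}
w_* \ = \ a^{-1/2} C^{1/2} a^{-1/2} .
\end{align*}
Substituting this equality case yields exactly $\frac{\Lambda}{4}\big(2\Tr[C^{1/2}] - 2\Tr[a]\big) = \frac{\Lambda}{2}\Tr[C^{1/2} - \abs{k}]$, which is \eqref{eq:Intro_PartialMinimizationForm}. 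To avoid infinite traces, I will carry out the computation with $X - a^2$, $C^{1/2} - a^2$ and $C - a^4$ (after conjugating appropriately), or with finite-rank truncations followed by a limit. The strict convexity from Theorem~\ref{thm:Intro_ConvexityOfG} also confirms that $z_*(\eta) := w_* - 1$ is the unique partial minimizer. Alternatively, one can note that $z_*(\eta)$ solves the Euler--Lagrange equation $w a w = b$ and invoke convexity directly.

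\textbf{Step 3: admissibility of $z_*(\eta)$ (main obstacle).} I must show $z_*(\eta) \in \HS_0(\fh_\Lambda)$. Positivity follows from operator monotonicity of the square root: $C \geq a^2$ gives $C^{1/2} \geq a$, hence $w_* \geq a^{-1/2} a\, a^{-1/2} = 1$.

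The Hilbert--Schmidt property, together with finiteness of $\Tr[C^{1/2} - \abs{k}]$, needs more work, because $C - a^2 = \Lambda a^{1/2}P_\eta a^{1/2}$ is only a finite-rank perturbation. I will use
\begin{align*}
C^{1/2} - a \ = \ \frac{1}{\pi}\int_0^\infty t^{1/2}\big[(t+a^2)^{-1} - (t+C)^{-1}\big]\,dt
\ = \ \frac{1}{\pi}\int_0^\infty t^{1/2} (t+a^2)^{-1}(C-a^2)(t+C)^{-1}\,dt .
\end{align*}
The integrand has rank at most $3$ and trace norm $\lesssim t^{1/2}(t+\sigma^2/\Lambda^2)^{-2}\norm{C-a^2}_1$. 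This is integrable in $t$, so $C^{1/2}-a$ is trace class. It follows that $z_* = a^{-1/2}(C^{1/2}-a)a^{-1/2}$ is trace class, hence Hilbert--Schmidt, and all traces above are finite. With admissibility established, the formula for $\cG_{g,\Lambda}(z_*(\eta),\eta)$ and the reduction \eqref{eq:Intro_EquivalentMinimization} both follow.
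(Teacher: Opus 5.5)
Your proof is correct, and it takes a genuinely different route from the paper's.

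\textbf{How the paper argues.} The paper finds the partial minimizer in $z$ abstractly. Strict convexity, coercivity and weak lower semicontinuity (as in the existence theorem) give a unique minimizer. It lies in $\HS_0(\fh_\Lambda)$, which is contained in the interior of $\HS_\epsilon(\fh_\Lambda)$, so the Fr\'echet gradient must vanish there. The paper then solves the stationarity equation $(1+z)\abs{k}(1+z)=\Lambda P_\eta+\abs{k}$ explicitly. It evaluates the functional at $z_*(\eta)$ by substituting this equation back and using cyclicity, which reduces the value to a multiple of $\Tr[\abs{k}z_*]$. Trace-class of $z_*$ is imported from the general inequality $\Tr[(A^2+B^2)^{1/2}-B]\le\Tr[A]$.

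\textbf{What your route buys.} Your congruence $X=\abs{k}^{1/2}(1+z)\abs{k}^{1/2}$ plus operator completion of squares gives a global identity valid for every $z\in\HS_\epsilon(\fh_\Lambda)$. It yields minimality, uniqueness and the value at once, with the exact remainder $\cG_{g,\Lambda}(z,\eta)-\cG_{g,\Lambda}(z_*(\eta),\eta)=\frac{\Lambda}{4}\Tr[Y^*Y]$, where $Y=X^{-1/2}(X-C^{1/2})$. It needs no existence theory, differentiability or convexity. Your resolvent-integral proof that $C^{1/2}-\abs{k}$ is trace class is also more direct than the paper's approximation argument. However, its bound carries a factor $\Lambda/\sigma$. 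That suffices for admissibility here. The paper's bound $\Tr[C^{1/2}-\abs{k}]\le\Tr[(\Lambda\abs{k}^{1/2}P_\eta\abs{k}^{1/2})^{1/2}]$ is what later gives the $\Lambda^{3/2}$ upper bound on $E_\cG^{g,\Lambda}$. Conversely, the paper's route produces the stationarity equation, which it reuses for the fixed-point characterization of the joint minimizer.

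\textbf{A correction to your bookkeeping.} The differences you list ($X-a^2$, $C^{1/2}-a^2$, $C-a^4$) are not trace class, since they contain non-compact multiplication operators such as $\abs{k}-\abs{k}^2$. The correct regrouping uses $D:=C^{1/2}-\abs{k}\in\cL^1$ from your Step~3, so Step~3 must logically come before Step~2. Together with $\abs{k}X^{-1}\abs{k}=\abs{k}^{1/2}(1+z)^{-1}\abs{k}^{1/2}$, one gets
\begin{align*}
Y^*Y+2D \ = \ X-2\abs{k}+C^{1/2}X^{-1}C^{1/2}
\ = \ \abs{k}^{1/2}z^2(1+z)^{-1}\abs{k}^{1/2}+DX^{-1}\abs{k}+\abs{k}X^{-1}D+DX^{-1}D .
\end{align*}
Every term on the right is trace class. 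Since $C-\abs{k}^2=\abs{k}D+D\abs{k}+D^2$, cyclicity shows the last three terms have total trace $\Tr[X^{-1}(C-\abs{k}^2)]=\Lambda\Tr[P_\eta(1+z)^{-1}]$. Hence $Y^*Y\in\cL^1$, and $\frac{4}{\Lambda}\big(\cG_{g,\Lambda}(z,\eta)-\Lambda\sca{\eta}{\abs{k}\eta}\big)=\Tr[Y^*Y]+2\Tr[D]$ holds rigorously. This is exactly what your equality-case argument needs.
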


Theorem~\ref{thm:Intro_ReducedFunctional} is proven in
Lemma~\ref{lem:partialMinimisation} and
Theorem~\ref{thm:reducedFunctional}. The reduced form of
$\cG_{g,\Lambda}$ provided in
Eq.~\eqref{eq:Intro_PartialMinimizationForm}, together with
Eq.~\eqref{eq:Intro_EquivalentMinimization}, allows to determine the
asymptotics of the minimal energy of $\cG_{g,\Lambda}$ to leading
order in $\Lambda \gg 1$. This is the content of the next theorem
which is the main result of this paper (see
Theorem~\ref{thm:groundStateEnergyEstimate} for the proof).

\begin{thm} \label{thm:Intro_GroundStateEnergyEstimate}
The leading order asymptotics of the minimal energy $E_\cG^{g,\Lambda}$ 
can be estimated by 

\begin{align} \label{eq:Intro_AsymptoticOfFunctional}
\frac{4\sqrt{\pi}}{\sqrt{3}} g \Lambda^{3/2}
\ \leq \
E_\cG^{g,\Lambda}
\ \leq \ 
4 \sqrt{3\pi} g \Lambda^{3/2} 
\end{align}

for all $g > 0$ and $\Lambda > \frac{3}{8 \pi} g^{-2}$.
\end{thm}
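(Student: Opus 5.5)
We would work entirely with the reduced one-variable problem supplied by Theorem~\ref{thm:Intro_ReducedFunctional}. Write $B:=\abs{k}$ (so $0\le B\le 1$ on $\fh_\Lambda$), $A_\eta:=\Lambda B^{1/2}P_\eta B^{1/2}\ge 0$, which has rank at most $3$, and
\begin{align*}
F(\eta)\ :=\ \frac{\Lambda}{2}\Tr\big[(A_\eta+B^2)^{1/2}-B\big]+\Lambda\sca{\eta}{B\eta},
\end{align*}
so that $E_\cG^{g,\Lambda}=\inf_\eta F(\eta)$. We take the coupling function to carry the factor $g$, i.e.\ $G_\nu=g\abs{k}^{-1/2}\epsilon_\nu$. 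The key explicit quantity is
\begin{align*}
\Tr A_0\ =\ 2\Lambda\sum_\nu\sca{G_\nu}{\abs{k}G_\nu}\ =\ 2\Lambda g^2\int_{\uS(\sigma/\Lambda,1)}\abs{\epsilon(\uk)}^2\,d\uk ,
\end{align*}
which is $\Lambda g^2$ times a number of order $\pi$ (up to a correction of order $(\sigma/\Lambda)^3$). Both bounds are obtained by comparing $\Tr[(A+B^2)^{1/2}-B]$ with $\Tr A^{1/2}$, and then comparing $\Tr A^{1/2}$ with $(\Tr A)^{1/2}$ using $\operatorname{rank}A\le 3$.

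\emph{Upper bound.} We choose the trial vector $\eta=0$, which gives $E_\cG^{g,\Lambda}\le F(0)$. Next we use the subadditivity of the trace of the concave function $t\mapsto t^{1/2}$ (Rotfel'd / McCarthy-type inequality), $\Tr(X+Y)^{1/2}\le \Tr X^{1/2}+\Tr Y^{1/2}$ for $X,Y\ge0$, with $X=A_0$ and $Y=B^2$. This yields $F(0)\le \frac{\Lambda}{2}\Tr A_0^{1/2}$. Since $A_0$ has at most three nonzero eigenvalues, Cauchy--Schwarz gives $\Tr A_0^{1/2}\le\sqrt{3\,\Tr A_0}$. Inserting the explicit value of $\Tr A_0$ produces a bound of the form $c\,g\Lambda^{3/2}$. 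We then only need to check that the constant does not exceed $4\sqrt{3\pi}$, which is a routine computation given the normalization of $G$.

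\emph{Lower bound.} Set $D:=(A_\eta+B^2)^{1/2}-B$. By operator monotonicity of the square root, $D\ge0$. Let $Q$ be the orthogonal projection onto $\operatorname{Ran}A_\eta$, which has rank at most $3$. Then
\begin{align*}
\Tr D\ \ge\ \Tr QDQ\ \ge\ \Tr Q A_\eta^{1/2}Q-\Tr QBQ\ \ge\ \Tr A_\eta^{1/2}-3,
\end{align*}
using $(A+B^2)^{1/2}\ge A^{1/2}$ and $B\le1$. Moreover $\Tr A_\eta^{1/2}\ge(\Tr A_\eta)^{1/2}$. Hence
\begin{align*}
F(\eta)\ \ge\ \frac{\Lambda^{3/2}}{2}\Big(2\sum_\nu\big\|B^{1/2}(G_\nu+k_\nu\eta)\big\|^2\Big)^{1/2}-\frac{3\Lambda}{2}+\Lambda\sca{\eta}{B\eta}.
\end{align*}
We then minimize this lower bound over $\eta$. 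By the triangle inequality, $\big(\sum_\nu\|B^{1/2}(G_\nu+k_\nu\eta)\|^2\big)^{1/2}\ge \big(\sum_\nu\|B^{1/2}G_\nu\|^2\big)^{1/2}-\sca{\eta}{B^3\eta}^{1/2}$, and $\sca{\eta}{B^3\eta}\le\sca{\eta}{B\eta}=:t^2$. This reduces the problem to a scalar minimization in $t\ge0$: the first term decreases at rate $\Lambda^{3/2}t$, while the penalty grows like $\Lambda t^2$. The $\Lambda^{3/2}t$ loss is not small, and the penalty cannot absorb it when $t$ is of order $\sqrt\Lambda$. So this step must be done carefully. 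We expect to exploit that, by the $\sfJ$-symmetry (Theorem~\ref{thm:Intro_ExistenceOfMinimizer}) and the parity $\epsilon(-k)=-\epsilon(k)$, the cross term $\rRe\sca{G_\nu}{\abs{k}k_\nu\eta}$ vanishes or has a sign at the minimizer. That would give $\sum_\nu\|B^{1/2}(G_\nu+k_\nu\eta)\|^2\ge\sum_\nu\|B^{1/2}G_\nu\|^2$, so the $\eta$-dependence can only raise the bound. Finally, the hypothesis $\Lambda>\frac{3}{8\pi}g^{-2}$ is what makes $g\Lambda^{3/2}$ dominate the subtracted $\frac{3\Lambda}{2}$. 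Here we may need to refine the estimate $\Tr QBQ\le3$, and the choice of $Q$ is where we would try to recover the stated constant $4\sqrt{\pi/3}$.

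The main obstacle is this lower-bound step, namely the control of the $\eta$-dependence of $\Tr A_\eta$ together with the lost $-\Tr QBQ$ term. The upper bound is essentially a one-line trace inequality.
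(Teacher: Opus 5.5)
Your upper bound follows the paper's route: trial point $(z_*(0),0)$, a trace inequality, then Cauchy--Schwarz on a rank-three operator. One repair is needed. The Rotfel'd subadditivity you quote is vacuous here, because $\abs{k}$ is a multiplication operator on $L^2(\uS(\sigma/\Lambda,1))$, so $\Tr\abs{k}=\Tr(A_0+\abs{k}^2)^{1/2}=\infty$. What you need is the difference form $\Tr[(A^2+B^2)^{1/2}-B]\le\Tr A$ of Theorem~\ref{thm:Intro_RootUpperBound}.

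The lower bound has genuine gaps.

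\emph{The cross term.} The obstacle you single out is removed by transversality, which you missed. Since $k\cdot\epsilon(k,\pm)=0$, one has $\sum_\nu k_\nu G_\nu(\uk)=0$ pointwise. Hence for \emph{every} $\eta$
\[
\sum_{\nu=1}^3\norm{\abs{k}^{1/2}(G_\nu+k_\nu\eta)}^2=\sum_{\nu=1}^3\norm{\abs{k}^{1/2}G_\nu}^2+\norm{\abs{k}^{3/2}\eta}^2 ,
\]
which is exactly what the paper uses in \eqref{eq:uniformNormLowerBound}. No triangle inequality, no optimization in $t$, and no appeal to the minimizer is needed. Your $\sfJ$/parity mechanism would not give this. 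The weight $G_\nu k_\nu\abs{k}$ is even in $k$, so for $\sfJ$-invariant $\eta$ parity only reduces each term to $\int G_\nu k_\nu\abs{k}\,\rRe\eta\,d\uk$, which has no sign. Only the sum over $\nu$ vanishes, and that is by transversality.

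\emph{The additive loss.} The loss $-\Tr QBQ$ is additive, and the hypothesis does not absorb it. By direct integration $\sum_\nu\norm{\abs{k}^{1/2}G_\nu}^2=g^2\int_{\uS(\sigma/\Lambda,1)}d\uk=\frac{8\pi}{3}g^2(1-\sigma^3/\Lambda^3)$, so $\Tr A_0=\frac{16\pi}{3}g^2\Lambda(1-\sigma^3/\Lambda^3)$. Thus $\Lambda>\frac{3}{8\pi}g^{-2}$ is, up to the infrared factor, just $\Tr A_0>2$. The best your route gives is $\frac{\Lambda}{2}(\sqrt{\Tr A_0}-3)$, which is negative unless $\Tr A_0>9$.

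\emph{How the paper avoids it.} It first replaces $\abs{k}$ by $1$ via the left inequality of Theorem~\ref{thm:Intro_RootUpperBound}, using $\abs{k}^2\le1$. The trace becomes $\Tr h(A_\eta)$ with $h(x)=\sqrt{1+x}-1$. It then keeps one rank-one piece $2\Lambda P^\mu_\eta$ of maximal norm and uses $h(x)\ge\sqrt{x}/2$ on $\{0\}\cup[2,\infty)$. The hypothesis on $\Lambda$ is precisely the spectral-gap condition for this scalar inequality.

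Your whole-trace idea survives in this form. Since $h$ is concave and increasing with $h(0)=0$, it is subadditive, so $\Tr h(A_\eta)\ge h(\Tr A_\eta)\ge h(\Tr A_0)\ge\frac12\sqrt{\Tr A_0}$ once $\Tr A_0\ge2$. This is a factor $\sqrt3$ better than selecting one $\mu$ via $\max_\mu\ge\frac13\sum_\mu$.

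\emph{The constants.} They cannot be matched in the normalization you chose, and your own computation shows it. You use the $\frac{\Lambda}{2}\Tr[\cdots]+\Lambda\langle\eta,\abs{k}\eta\rangle$ form of Theorem~\ref{thm:Intro_ReducedFunctional}, which matches Definition~\ref{def:Intro_G}. The proofs in Sections~\ref{sec:Properties}--\ref{sec:GroundStateEnergy} instead work with a functional four times larger, whose reduced form is $2\Lambda\Tr[\cdots]+4\Lambda\langle\eta,\abs{k}\eta\rangle$ (Theorem~\ref{thm:reducedFunctional}).

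In your normalization, your upper-bound chain with the value of $\Tr A_0$ above gives
\[
E_\cG^{g,\Lambda}\le\frac{\Lambda}{2}\sqrt{3\Tr A_0}\le 2\sqrt{\pi}\,g\Lambda^{3/2}<\frac{4\sqrt\pi}{\sqrt3}\,g\Lambda^{3/2}.
\]
So no refinement of $Q$ can yield the stated lower bound there.

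In the four-times-larger normalization, the repaired whole-trace argument gives $E_\cG^{g,\Lambda}\ge\Lambda\sqrt{\Tr A_0}=\frac{4\sqrt\pi}{\sqrt3}\,g\Lambda^{3/2}(1-\sigma^3/\Lambda^3)^{1/2}$. That is the stated lower constant up to the infrared factor. But there the trial point $(z_*(0),0)$ has energy $8\sqrt{\pi}\,g\Lambda^{3/2}+O(\Lambda)$, since by isotropy the three nonzero eigenvalues of $A_0$ coincide. This exceeds $4\sqrt{3\pi}\,g\Lambda^{3/2}$. So you must fix one normalization and re-derive both constants, rather than treating the upper constant as a routine check.
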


The condition $\Lambda > \frac{3}{8 \pi} g^{-2}$ is of technical
origin. Since our purpose is to understand the ultraviolet behavior,
$\Lambda $ tending to infinity, for fixed $g$, this constriction is
not relevant in this paper. The upper bound is established by directly
evaluating $\cG_{g,\Lambda}$ at the trial state
$\of{z_*\of{0},0}$. The proof of the lower bound is more involved and
makes use of the the following general trace estimates:

\begin{thm} \label{thm:Intro_RootUpperBound}
For any positive trace-class operator $A$ and every two bounded and positive operators $B^2 \leq C^2$, we have that

\begin{align} \label{eq-044}
\Tr\brak{\of{A^2 + C^2}^{1/2} - C}
\ \leq \ 
\Tr\brak{\of{A^2 + B^2}^{1/2} - B}
\ \leq \ 
\Tr\brak{A} \, .
\end{align}

In particular, the positive operator $\of{A^2 + B^2}^{1/2} - B$ is
trace class.
\end{thm}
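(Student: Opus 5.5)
The plan is to treat both inequalities as consequences of operator monotonicity and operator concavity of the square root. The key tool is the integral representation
\begin{align*}
X^{1/2} \ = \ \frac{1}{\pi}\int_0^\infty \Big( \frac{1}{t^{1/2}} - \frac{t^{1/2}}{t + X} \Big) \, dt \, ,
\end{align*}
valid for $X \geq 0$. Combined with the identity $\frac{t}{t+Y} - \frac{t}{t+X} = t\,(t+Y)^{-1}(X-Y)(t+X)^{-1}$, it allows differences of square roots to be written as integrals of resolvent expressions. It also shows that the trace of such a difference can be controlled through the trace of $X - Y = A^2$ (or of $A$ itself).

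For the \emph{upper bound} $\Tr[(A^2+B^2)^{1/2} - B] \leq \Tr[A]$, I would first observe that $(A^2+B^2)^{1/2} \geq B$ by operator monotonicity, so the operator $D := (A^2+B^2)^{1/2}-B$ is positive. Next I would use the subadditivity-type inequality $(A^2+B^2)^{1/2} \leq A + B$ in trace form. This is not true as an operator inequality in general, so I would instead aim for the trace inequality directly via the Rotfel'd-type trace inequality
\begin{align*}
\Tr f(X+Y) \ \leq \ \Tr f(X) + \Tr f(Y) \, ,
\end{align*}
which holds for concave $f \geq 0$ with $f(0)=0$ and positive $X, Y$, applied to $f(s)=s^{1/2}$, $X=A^2$, $Y=B^2$. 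Since $B$ need not be trace class, I would make this rigorous with a finite-rank truncation $B_n = P_n B P_n$ (or spectral cutoffs) and a limiting argument. As an alternative, the integral representation gives
\begin{align*}
D \ = \ \frac{1}{\pi}\int_0^\infty t^{1/2}\,(t+B^2)^{-1} A^2 (t+A^2+B^2)^{-1}\,dt \, .
\end{align*}
Taking the trace, using cyclicity and $(t+A^2+B^2)^{-1} \leq (t+A^2)^{-1}$ after a suitable symmetrization, should bound $\Tr D$ by $\frac{1}{\pi}\int_0^\infty t^{1/2}\,\Tr[A^2 (t+A^2)^{-1}]\, t^{-1}\, dt = \Tr A$. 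Here I use that $\Tr[A^2(t+A^2)^{-1}(t+B^2)^{-1}]$-type terms can be bounded by $t^{-1}\Tr[A^2(t+A^2)^{-1}]$ through a Cauchy--Schwarz argument on the positive sesquilinear structure. This also yields the trace-class claim, since $D \geq 0$ with finite trace.

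For the \emph{monotonicity} $\Tr[(A^2+C^2)^{1/2}-C] \leq \Tr[(A^2+B^2)^{1/2}-B]$ when $B^2 \leq C^2$, I would consider the function $\phi(Y) := \Tr[(A^2+Y)^{1/2} - Y^{1/2}]$ on $Y \geq 0$ and show that it is decreasing in the Loewner order. Along the segment $Y_s = B^2 + s(C^2 - B^2)$, the derivative is
\begin{align*}
\frac{d}{ds}\phi(Y_s) \ = \ \Tr\big[ (C^2-B^2) \big( g'(A^2+Y_s) - g'(Y_s) \big)\big]
\end{align*}
in the Fréchet sense, where $g(x)=x^{1/2}$. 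Using the integral representation, the derivative of $Y \mapsto \Tr[K g(Y)]$ in direction $K \ge 0$ equals $\frac{1}{\pi}\int t^{1/2}\Tr[K(t+Y)^{-1}K(t+Y)^{-1}]\,dt$. The resolvent inequality $(t+A^2+Y)^{-1} \leq (t+Y)^{-1}$, together with the fact that $X \mapsto \Tr[KX K X]$ is monotone on positive $X$, then gives a nonpositive derivative. Note that $\Tr[KXKX]=\|X^{1/2}KX^{1/2}\|_{\HS}^2$ is monotone for $0\le X_1\le X_2$ because $\Tr[KX_1KX_1]\le \Tr[KX_1KX_2]\le\Tr[KX_2KX_2]$, each step being a trace of a product of positive operators. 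Integrating in $s$ gives the claim.

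I expect the main obstacle to be the rigorous handling of traces when $B$ and $C$ are merely bounded, not trace class: the individual terms $\Tr[(A^2+Y)^{1/2}]$ and $\Tr[Y^{1/2}]$ are infinite, and the directional derivative involves $K = C^2 - B^2$, which is not trace class. To get around this, I would first prove everything for finite-rank projections $P_n$ compressing all operators to finite dimensions, where all manipulations are legitimate. I would then pass to the limit using strong resolvent convergence and lower semicontinuity of the trace for positive operators (Fatou-type). For the upper bound this limiting step only requires lower semicontinuity; for the monotonicity inequality, dominated convergence is available because both sides are bounded by $\Tr A$.
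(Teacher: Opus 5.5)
Your upper bound is sound, though it argues differently from the paper. In finite dimensions, Rotfel'd's inequality with $f(x)=x^{1/2}$ gives $\Tr[(A^2+B^2)^{1/2}]\le \Tr[A]+\Tr[B]$. So does the triangle inequality for the trace norm of the column operator $\begin{pmatrix} A \\ B \end{pmatrix}$, whose modulus is $(A^2+B^2)^{1/2}$. Compression plus lower semicontinuity of the trace then finishes the proof. This replaces the paper's identity $D_-D_+=A^2+[D_+,B]$ and its bound $\Tr[AD_+^{-1}A]\le\Tr[A]$.

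Your fallback via the resolvent integral is not justified as sketched. The per-$t$ bound it needs, $\Tr[(t+B^2)^{-1}-(t+A^2+B^2)^{-1}]\le\Tr[t^{-1}-(t+A^2)^{-1}]$, is true. But it is an equality at $B=0$, and it is itself a Rotfel'd-type superadditivity statement for the convex function $x\mapsto(t+x)^{-1}-t^{-1}$, so Cauchy--Schwarz splits are not sharp enough to give it.

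The monotonicity half has two genuine gaps. The first is the sign of the derivative. Your formula $\frac{d}{ds}\phi(Y_s)=\Tr[K(g'(A^2+Y_s)-g'(Y_s))]$, with $K=C^2-B^2$, is correct, but the argument for its sign is not. The expression $\frac1\pi\int_0^\infty t^{1/2}\Tr[K(t+Y)^{-1}K(t+Y)^{-1}]\,dt$ is the derivative of $Y\mapsto\Tr[Kg(Y)]$ in direction $K$, not of $Y\mapsto\Tr[g(Y)]$. So the monotonicity of $X\mapsto\Tr[KXKX]$ is irrelevant here. The correct integrand contains $K$ only once, namely $\Tr[K(t+Z)^{-2}]$. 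Since $x\mapsto x^2$ is not operator monotone, $X\mapsto\Tr[KX^2]$ is not monotone for general $K\ge0$, and a comparison at fixed $t$ really can fail. The repair is not to expand in $t$ at all. The function $g'(x)=\frac12x^{-1/2}$ is operator monotone decreasing, so $(A^2+Y_s)^{-1/2}\le Y_s^{-1/2}$ whenever $Y_s\ge\epsilon>0$; this holds along the whole segment if $B^2\ge\epsilon$. Then $\Tr[KM]\le0$ for $K\ge0\ge M$.

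The second gap is the passage to the limit. Compressing to $P_nBP_n$ and $P_nCP_n$ destroys the hypothesis $B^2\le C^2$; you would have to compress the squares instead. More seriously, strong convergence together with a uniform bound $\Tr[\,\cdot\,]\le\Tr[A]$ only yields $\Tr[\lim]\le\liminf\Tr[\,\cdot\,]$. A bound on the total trace is not a summable majorant: rank-one projections onto an orthonormal sequence $e_n$ converge strongly to $0$ but each has trace $1$. So dominated convergence is not available, and lower semicontinuity points the wrong way on the right-hand side.

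The paper arranges that only the left-hand side is ever approximated. It keeps $B$ fixed and replaces $C$ by $C_L=(B^2+Q_L(C^2-B^2)Q_L)^{1/2}$. It proves the case $B^2\ge\epsilon$, $C^2-B^2$ of finite rank directly in infinite dimensions. It then removes $\epsilon$ via $(B^2+\frac1n)^{1/2}$ and Lemma~\ref{lem:operatorMon}, which shows this regularization can only decrease the right-hand side.

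Your corrected derivative argument fits that scheme and is a cleaner substitute for the paper's $S_\pm$ commutator computation. Take $Y_s=B^2+sK$ with $B^2\ge\epsilon$ and $K\ge0$ of finite rank. Then $Y_s^{1/2}-B$ and $(A^2+Y_s)^{1/2}-(A^2+B^2)^{1/2}$ are trace class and differentiable in trace norm, by the resolvent integral and $\|(t+Y_s)^{-1}\|\le(t+\epsilon)^{-1}$. Hence $\phi(Y_1)-\phi(Y_0)=\frac12\int_0^1\Tr[K((A^2+Y_s)^{-1/2}-Y_s^{-1/2})]\,ds\le0$.
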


Theorem~\ref{thm:Intro_RootUpperBound} is proven in
Theorem~\ref{thm:rootLowerBound} and \ref{thm:rootUpperBound} below.

\section{Properties of the Energy Functional 
$\boldsymbol{\cG_{g,\Lambda}}$: \\ Continuity, Coercivity and Convexity}
\label{sec:Properties}

Throughout the rest of this work, we fix a parameter

\begin{align} \label{eq-045}
\epsilon \in \of{0, \tfrac{1}{2}} \, .
\end{align} 

We consider the energy functional 
$\tcG_{g,\Lambda} : \HS_\epsilon\of{\fh_\ph^\Lambda} \oplus \fh_\ph^\Lambda \to \RR$ 
as introduced in Definition~\ref{def:Intro_GTilde} and perform a
rescaling procedure resulting in the energy functional
$\cG_{g,\Lambda} : \HS_\epsilon\of{\fh_\Lambda} \oplus \fh_\Lambda \to \RR$ 
as given in Definition~\ref{def:Intro_G}. We recall that the Hilbert
space $\fh_\Lambda$ is given by

\begin{align} \label{eq-046}
\fh_\Lambda 
\ = \
 L^2\of{\uS\of{\sigma / \Lambda, 1} ;\CC} 
\end{align}

and introduce the dilation operators $\vphi_\Lambda$ and
$\vphi^{*}_{\Lambda}$ given by

\begin{align} \label{eq-047}
& \vphi_{\Lambda}  : ~
\fh_\ph^\Lambda \to \fh_\Lambda \, , \quad
f\of{\uk} \mapsto \Lambda^{3 / 2}f\of{\Lambda \uk} \, ,
\\ \label{eq-048}
& \vphi^{*}_{\Lambda}  : ~
\fh_\Lambda \to \fh_\ph^\Lambda \, , \quad
g\of{\uk} \mapsto \Lambda^{- 3 / 2}g\of{\uk / \Lambda} \, .
\end{align}

As linear operators, $\vphi_{\Lambda}$ and $\vphi^{*}_{\Lambda}$ are
unitary and adjoint to one another. Indeed, for all 
$f \in \fh_\ph^\Lambda$ and $g \in \fh_\Lambda$ one easily verifies that

\begin{align} \label{eq:UnitaryPhi}
\sca{f}{\vphi^{*}_{\Lambda}g}_{\fh_\ph^\Lambda}
\ = \
\sca{\vphi_{\Lambda}f}{g}_{\fh_\Lambda} \, , \quad
\vphi^{*}_{\Lambda}\vphi_{\Lambda} \ = \ \bfone_{\fh_\ph^\Lambda} \, , \quad
\vphi_{\Lambda}\vphi^{*}_{\Lambda} \ = \ \bfone_{\fh_\Lambda} \, . 
\end{align}

From $\vphi_{\Lambda}$ one also obtains a unitary map between the
Hilbert spaces $\HS\of{\fh_\ph^\Lambda}$ and $\HS\of{\fh_\Lambda}$ via

\begin{align} \label{eq-049}
\Phi_\Lambda &:
\HS\of{\fh_\ph^\Lambda} \to \HS\of{\fh_\Lambda} \, , ~
z \mapsto \vphi_\Lambda z \vphi_\Lambda^{*} \, , \\
\Phi^{*}_\Lambda &: \HS\of{\fh_\Lambda} \to \HS\of{\fh_\ph^\Lambda} \, , ~
w \mapsto \vphi_\Lambda^{*} w \vphi_\Lambda \, .
\end{align}

Crucially, $\Phi_\Lambda$ and $\Phi_\Lambda^*$ conserve operator
inequalities, and we have that

\begin{align} \label{eq-050}
\Phi_\Lambda\of{\HS_{\epsilon}\of{\fh_\ph^\Lambda}}
\ = \ 
\HS_{\epsilon}\of{\fh_\Lambda} \, ,
\quad \quad
\Phi^*_\Lambda\of{\HS_{\epsilon}\of{\fh_\Lambda}}
\ = \ 
\HS_{\epsilon}\of{\fh_\ph^\Lambda} \, .
\end{align} 

The operator $\Phi_\Lambda$ acts on bounded multiplication operators
$m\of{k} \in \cB\of{\fh_\ph^\Lambda}$ as

\begin{align} \label{eq:ActionOfPhiOnM}
\Phi_\Lambda\of{m\of{k}} 
\ = \ 
\vphi_\Lambda m\of{k} \vphi_\Lambda^{*} 
\ = \
m\of{\Lambda k} \in \cB\of{\fh_\Lambda} \, .
\end{align}

Furthermore, the dilation operator $\vphi_\Lambda$ acts on the
coupling function $G\of{\uk}$ as

\begin{align} \label{eq:ActionOfPhiOnG}
\vphi_\Lambda G \of{\uk}
\ = \ 
\Lambda^{3/2}\abs{\om\of{\Lambda k}}^{-1/2}\epsilon\of{\Lambda k, \tau}
\ = \
\Lambda G\of{\uk} \, ,
\end{align}

due to the properties of the polarization vectors given in
\ref{eq:Intro_PolarizationVectorsProperties}.  

The functional $\cG_{g,\Lambda}$ results from $\tcG_{g,\Lambda}$ by
rescaling the photon momenta by $\vphi_\Lambda$ in the sense that, for all 
$\of{z,\eta} \in \HS_{\epsilon}\of{\fh_\ph^\Lambda} \oplus \fh_\ph^\Lambda$, 
we have

\begin{align} \label{eq-051}
\tcG\of{z,\eta} \ = \ {\cG}\of{\Phi_\Lambda\of{z},\vphi_\Lambda\of{\eta}}
\end{align}

This follows directly from the unitary nature of $\vphi_\Lambda$ and
$\vphi^*_\Lambda$ \eqref{eq:UnitaryPhi} together with the actions
outlined in \eqref{eq:ActionOfPhiOnM} and
\eqref{eq:ActionOfPhiOnG}. As the mappings $\Phi_\Lambda$ and
$\vphi_\Lambda$ are both bijetive, the variational problems
of minimizing $\tcG_{g,\Lambda}$ and $\cG_{g,\Lambda}$ are equivalent
and

\begin{align} \label{eq-052}
\inf_{\of{z,\eta} \in \HS_\epsilon\of{\fh_\ph^\Lambda} \oplus \fh_\ph^\Lambda} 
\tcG_{g,\Lambda}\of{z,\eta} 
\ = \
\inf_{\of{z,\eta} \in \HS_\epsilon\of{\fh_\ph^\Lambda} \oplus \fh_\ph^\Lambda} 
\cG_{g,\Lambda}\of{z,\eta} \, . 
\end{align}

The advantage of $\cG_{g,\Lambda}$ over $\tcG_{g,\Lambda}$ is that the
dependence on the ultraviolet cut-off $\Lambda$ is shifted from the
underlying Hilbert space into the functional itself, making the
dependency more explicit. Hence $\cG_{g,\Lambda}$ is the object of our
study henceforth.  We begin this study by establishing certain
properties of $\cG_{g,\Lambda}$ which are essential for the proof of
the existence of a unique minimizer in the coming section.

\subsection{Continuity}
\label{subsec:Continuity}

The first property of $\cG_{g,\Lambda}$ we establish is its 
continuity in $\of{z,\eta}$ on 
$\HS_{\epsilon}\of{\fh_\Lambda} \oplus \fh_\Lambda$.

\begin{lem} \label{lem-2.1}
The functional $\cG_{g,\Lambda}$ is continuous on
$\HS_{\epsilon}\of{\fh_\Lambda} \oplus \fh_\Lambda$.

\begin{proof}
Recall that $\cG_{g,\Lambda}$ is given by

\begin{align} \label{eq-053}
\cG_{g,\Lambda}\of{z,\eta}
\ = \ & 
\frac{\Lambda^2}{2} \sum_{\nu = 1}^3 
\sca{G_\nu + k_\nu \eta}{\of{1 + z}^{-1}\of{G_\nu + k_\nu \eta}}
\nonumber \\ &
+ \frac{\Lambda}{4} \Tr\brak{\abs{k}z^2\of{1 + z}^{-1}}
+ \Lambda\sca{\eta}{\abs{k}\eta} \, ,
\end{align}

for all 
$\of{z,\eta} \in \HS_{\epsilon}\of{\fh_\Lambda} \oplus \fh_\Lambda$. The
term $\Lambda \sca{\eta}{\abs{k}\eta}$ is clearly continuous, as
$\abs{k}$ is bounded as a multiplication operator. The continuity of
the term $\frac{\Lambda}{4} \Tr\brak{\abs{k}z^2\of{1 + z}^{-1}}$ is
obtained from the fact that it is Frechet differentiable according to
Lemma~\ref{lem:derivative}. It remains to show the continuity of
$\frac{\Lambda^2}{2}\sum_{\nu = 1}^3 \sca{G_\nu + k_\nu \eta}{\of{1 +
    z}^{-1}\of{G_\nu + k_\nu \eta}}$. 
Since the map $\eta \to G_\nu + k_\nu \eta$ is continuous, 
it suffices to prove that

\begin{align} \label{eq-054}
\HS_{\epsilon}\of{\fh_\Lambda} \oplus \fh_\Lambda \ni
(z,\kappa) \ \mapsto \ \sca{\kappa}{\of{1 + z}^{-1}\kappa} \in \RR
\end{align}

is continuous. To this end, suppose that 
$\kappa, \kappa' \in \fh_\Lambda$, $z, z' \in \HS_{\epsilon}\of{\fh_\Lambda}$. 
Using the second resolvent identity, we obtain

\begin{align} \label{eq-055}
\sca{\kappa'}{ (1 + z')^{-1} \kappa' } 
\, - \, & \sca{\kappa}{ (1 + z)^{-1} \kappa}
\ = \ 
\sca{\kappa'}{ (1 + z')^{-1} (z-z') (1 + z)^{-1} \kappa' } 
\nonumber \\ 
& \, + \, \sca{\kappa'-\kappa}{ (1 + z)^{-1} \kappa'}
\, + \, \sca{\kappa}{ (1 + z)^{-1} (\kappa'-\kappa)} 
\end{align}

and thus

\begin{align} \label{eq-056}
\Big| \sca{\kappa'}{ (1 + z')^{-1} \kappa' } 
& \, - \, \sca{\kappa}{ (1 + z)^{-1} \kappa} \Big|
\\ \nonumber 
& \ \leq \ 
4 \|\kappa'\|_{\fh_\Lambda}^2 \, \|z-z'\|_{\HS} 
+ 2 \big( \|\kappa'\|_{\fh_\Lambda} + \|\kappa\|_{\fh_\Lambda} \big) 
\|\kappa'-\kappa\|_{\fh_\Lambda} \, , 
\end{align}

using that 
$\|(1 + z)^{-1}\|_\op, \|(1 + z')^{-1}\|_\op \leq (1-\epsilon)^{-1} \leq 2$. 
This yields the desired continuity.
\end{proof}
\end{lem}

\subsection{Coercivity}
\label{subsec:Coercivity}

The second property of $\cG_{g,\Lambda}$ that we establish is the
straightforward, but nevertheless important, observation that
$\cG_{g,\Lambda}$ is coercive.

\begin{lem} \label{lem:coercive}
The functional $\cG_{g,\Lambda}$ is coercive in the sense that 

\begin{align} \label{eq-057}
\cG_{g,\Lambda}\of{z,\eta}
\ \geq \
\frac{1}{2} \min\sset{\norm{z}_{\HS}, \norm{z}_{\HS}^2} 
+ 4 \norm{\eta}^2_{\fh_\Lambda} \, ,
\end{align}

for all $\of{z,\eta} \in \HS_{\epsilon}\of{\fh_\Lambda} \oplus \fh_\Lambda$.

\begin{proof}
The asserted coercivity~\eqref{lem:coercive} is already guaranteed by
the two terms $4 \Lambda \sca{\eta}{\abs{k}\eta}$ and 
$\Lambda \Tr\brak{\abs{k}z^2\of{1 + z}^{-1}}$ because the other terms
in $\cG_{g,\Lambda}$ are positive.  Indeed, due to the infrared
condition

\begin{align} \label{eq-058}
\abs{k} \ \geq \ \frac{\sigma}{\Lambda} \, ,
\end{align}

we have

\begin{align} \label{eq-059}
4 \Lambda \sca{\eta}{\abs{k}\eta}
\ \geq \
4 \sigma \norm{\eta}_{\fh_\Lambda}^2 
\end{align}

for all $\eta \in \fh_\Lambda$, as well as 

\begin{align} \label{eq-060}
\Lambda \Tr\brak{\abs{k}z^2\of{1 + z}^{-1}} 
\ \geq \ 
\sigma \Tr\brak{z^2\of{1 + z}^{-1}} \, .
\end{align}

We conclude the statement with the estimate

\begin{align} \label{eq-061}
\Tr\brak{z^2\of{1 + z}^{-1}} 
\ \geq \ & 
\of{1 + \norm{z}_{\mathrm{op}}}^{-1} \norm{z}_{\HS}^2
\ \geq \ 
\of{1 + \norm{z}_{\HS}}^{-1} \norm{z}_{\HS}^2
\nonumber \\[1ex] 
\ \geq \ & 
\frac{1}{2}\min\sset{\norm{z}_{\HS}, \norm{z}_{\HS}^2 } \, ,
\end{align}

where we use
$\of{1 + z}^{-1} \geq \of{1 + \norm{z}_{\mathrm{op}}}^{-1}$ which 
follows from the spectral theorem. 
\end{proof}
\end{lem}

\subsection{Convexity}
\label{subsec:Convexity}

While continuity and coercivity also holds for the full energy
functional $\cE_{g,\Lambda}$, we demonstrate in
Theorem~\ref{thm:nonConvexity} that the same does not hold true for
(global) convexity. It is here that the advantage of the functional
$\cG_{g,\Lambda}$ becomes apparent.

\begin{thm} \label{thm:convexity}
The functional $\cG_{g,\Lambda}$ is strictly convex on 
$\HS_{\epsilon}\of{\fh_\Lambda} \oplus \fh_\Lambda$.

\begin{proof}
We examine the convexity of each term of $\cG_{g,\Lambda}$
separately. As the term

\begin{align} \label{eq-062}
\sca{ \eta}{\abs{k}  \eta   } 
\ \geq \ 
\frac{\sigma}{\Lambda} \sca{\eta}{\eta} 
\end{align}

is a strictly positive quadratic form, it is strictly convex. We
recall that, for a given interval $I \subset \RR$, a function 
$f: I \to \RR$ is called \textit{operator convex} on $I$ if, 
for every pair of bounded self-adjoint operators $A, B$ such 
that $\sigma(A), \sigma(B) \subset I$ and any $\alpha \in (0,1)$, 
it follows that

\begin{align} \label{eq-063}
f\big( \alpha A + (1- \alpha)B  \big) 
\ \leq \ 
\alpha f(A) + (1- \alpha) f(B) \, .  
\end{align}  

Next we  see that the term $\Lambda \Tr\brak{\abs{k} z^2 \of{1 + z}^{-1}}$
is convex on $\HS_{\epsilon}\of{\fh_\Lambda}$. Notice that 
the function 

\begin{align} \label{eq-064}
x^2\of{1 + x}^{-1} 
\ = \
\of{1 + x} + \of{1 + x}^{-1} - 2
\end{align}

is operator convex on the interval $\of{-\epsilon, \infty}$. Indeed,
this convexity is strict in $z$, due to the strict operator convexity
of the function $x \mapsto \of{1 + x}^{-1}$ in the sense that 
$Q_\alpha\of{z, z'} \geq 0$ and $Q_\alpha\of{z, z'} \neq 0$, 
for $z \neq z'$ and $\alpha \in \of{0,1}$, where

\begin{align} \label{eq-065}
Q_\alpha\of{z, z'} 
\ := \
\alpha [1 + z]^{-1} 
+ (1 - \alpha) [1 + z']^{-1} 
- [ 1 + \alpha z + (1-\alpha) z' ]^{-1} \ \geq \ 0 \, . 
\end{align}

For a proof of this fact see Theorem~\ref{thm:strictConvexity}. It
remains to show the convexity of the term

\begin{align} \label{eq-066}
2 \Lambda^2\sum_{\nu = 1}^{3}
\sca{{G}_{\nu} + k_\nu \eta}{\of{1 + z}^{-1}\of{G_{\nu} + k_\nu \eta}} 
\end{align}

in $z$ and $\eta$. For that it is enough to show the convexity of
$\of{z,\eta} \mapsto \sca{\eta}{(1+z)^{-1}\eta}$ on the set
$\HS_\epsilon\of{\fh_\Lambda} \oplus \fh_\Lambda$.  The following
proof is an adaptation of an argument presented in
\cite[Theorem~1]{LiebRuskai1974} in the context of operators under the
trace.  
\\[1ex] 
Let $\delta >0$. For any $\kappa_1, \kappa_2 \in \fh_\Lambda$ and 
$y_1 = y_1^* \geq \delta$, $y_2 = y_2^* \geq \delta$, we define the
two vectors

\begin{align} \label{eq-067}
\gamma_1 \ := \ y_1^{-1/2} \kappa_1 - y_1^{1/2} \gamma \, , \quad
\gamma_2 \ := \ y_2^{-1/2} \kappa_2 - y_2^{1/2} \gamma 
\ \in \ \fh_\Lambda \, ,
\end{align}

where $\gamma \in \fh_\Lambda$ is chosen later. A direct computation 
yields

\begin{align} \label{eq:expandedQuadratic}
0 \ \leq \ &
\sca{\gamma_1}{\gamma_1} + \sca{\gamma_2}{\gamma_2}
\\ \nonumber 
\ = \ &
\sca{\kappa_1}{y_1^{-1}\kappa_1}
+ \sca{\kappa_2}{y_2^{-1}\kappa_2}
- \sca{\kappa_1 + \kappa_2}{\gamma}
- \sca{\gamma}{\kappa_1 + \kappa_2}
+ \sca{\gamma}{\of{y_1 + y_2}\gamma}
\\ \nonumber 
\ = \ &
\sca{\kappa_1}{y_1^{-1}\kappa_1}
+ \sca{\kappa_2}{y_2^{-1}\kappa_2}
- \sca{\kappa_1 + \kappa_2}{\of{y_1 + y_2}^{-1}\of{\kappa_1 + \kappa_2}} 
+ R\of{\gamma} \, ,
\end{align}

where $R\of{\gamma}$ is given by 

\begin{align} \label{eq-068}
R\of{\gamma} 
\ = \ 
\norm{\of{y_1 + y_2}^{-1/2}\of{\kappa_1 + \kappa_2} 
- \of{y_1 + y_2}^{1/2} \gamma}^2 \, .
\end{align}

The remainder term $R\of{\gamma}$ vanishes when we choose $\gamma$ to be

\begin{align} \label{eq-069}
\gamma \ := \
\of{y_1 + y_2}^{-1}\of{\kappa_1 + \kappa_2} \, .
\end{align}

With this choice of $\gamma$ the inequality 
\eqref{eq:expandedQuadratic} reads

\begin{align} \label{eq:unscaledConvexity}
\sca{\kappa_1 + \kappa_2}{\of{y_1 + y_2}^{-1}\of{\kappa_1 + \kappa_2}}
\ \leq \
\sca{\kappa_1}{y_1^{-1}\kappa_1}
+ \sca{\kappa_2}{y_2^{-1}\kappa_2} \, .
\end{align}

If $\alpha \in (0,1)$ and $z_1, z_2 \in \HS_\epsilon\of{\fh_\Lambda}$
then $y_1 := \alpha (1+z_1) \geq \frac{\alpha}{2} \geq \delta$ and 
$y_2 := (1-\alpha) (1+z_1) \geq \frac{1-\alpha}{2} \geq \delta$,
for $\delta := \frac{1}{2} \min\{\alpha, 1-\alpha\} > 0$. 
Furthermore choosing $\kappa_1 := \alpha \eta_1$ and 
$\kappa_2 := (1-\alpha) \eta_2$, Eq.~\eqref{eq:unscaledConvexity} 
reads

\begin{align} \label{eq:scaledConvexity}
& \sca{\alpha \eta_1 + (1-\alpha) \eta_2}{
[1 + \alpha z_1 + (1-\alpha) z_2]^{-1} 
\of{\alpha \eta_1 + (1-\alpha) \eta_2}}
\\ \nonumber & \qquad \qquad \qquad \qquad 
\ \leq \
\alpha \sca{\eta_1}{(1+z_1)^{-1} \eta_1}
+ (1-\alpha) \sca{\eta_2}{(1+z_2)^{-1} \eta_2} \, ,
\end{align}

which is the asserted convexity of 
$\of{z,\eta} \mapsto \sca{\eta}{(1+z)^{-1}\eta}$.
\end{proof}
\end{thm}

The strict convexity of $\cG_{g,\Lambda}$, together with its
continuity, provides the necessary regularity to prove the existence
of a unique minimizer.

\section{Existence of the Ground State Energy}
\label{sec:Existence}

This short section is dedicated to proving the existence of a unique
global minimizer of $\cG_{g,\Lambda}$ trough abstract arguments. It is
noteworthy that the existence of this unique minimzer does not depend
on the choice of coupling constant $g \geq 0$ or ultraviolet cut-off
$\Lambda > 0$.

\begin{thm} \label{thm:existence}
The functional $\cG_{g,\Lambda}$ possesses a unique global minimzer
$\of{z_{g,\Lambda},\eta_{g,\Lambda}}$ on 
$\HS_\epsilon\of{\fh_\Lambda} \oplus \fh_\Lambda$. 
This minimizer is $\sfJ$-invariant, i.e.,  
$\of{\sfJ z_{g,\Lambda} \sfJ, \sfJ \eta_{g,\Lambda}} 
= \of{z_{g,\Lambda},\eta_{g,\Lambda}}$. Moreover,
$\of{z_{g,\Lambda},\eta_{g,\Lambda}} \in 
\HS_0\of{\fh_\Lambda} \oplus \fh_\Lambda$.

\begin{proof}
We begin the proof with a discussion of the domain of the functional. 
The set $\HS_{\epsilon}\of{\fh_\Lambda} \oplus \fh_\Lambda$ 
is clearly convex. Furthermore, 
$\HS_{\epsilon}\of{\fh_\Lambda}$ is a closed subset of 
$\HS\of{\fh_\Lambda}$, 
as it is an intersection of a family of closed sets. 
More precisely, 

\begin{align} \label{eq-070}
\HS_{\epsilon}\of{\fh_\Lambda} 
\ = \ 
\bigcap_{\psi \in \fh_\Lambda, \|\psi\|=1} 
s_\psi^{-1}\of{[-\epsilon,\infty)} \, ,
\end{align}

where the $s_\psi$'s are the continuous mappings defined as 

\begin{align} \label{eq-071}
s_\psi : ~ \HS\of{\fh_\Lambda} \to \RR 
~, \quad 
z \mapsto \sca{\psi}{z \psi} \, .
\end{align}

Hence, $\HS_{\epsilon}\of{\fh_\Lambda} \oplus \fh_\Lambda$
is closed. Since 
$\HS_{\epsilon}\of{\fh_\Lambda} \oplus \fh_\Lambda$ is
convex and closed, the geometric version of the Hahn-Banach theorem
implies that it is weakly closed.

From Lemma~\ref{thm:convexity}, we know that $\cG_{g,\Lambda}$ is
convex, and since it is continuous, as well, it is also weakly lower
semi-continuous (see \cite[Corollary~3.9]{Brezis2011}).  As
$\cG_{g,\Lambda}$ is also coercive, by Lemma~\ref{lem:coercive}, the
direct method of variational calculus guarantees the existence of a
minimizing pair $\of{z_{g,\Lambda},\eta_{g,\Lambda}} \in
\HS_{\epsilon}\of{\fh_\Lambda} \oplus \fh_\Lambda$. We prove this in
Theorem~\ref{thm:minimizer}. The uniqueness of this minimizer follows
from the strict convexity of $\cG_{g,\Lambda}$. The unique minimizing
operator $z_{g,\Lambda}$ must be positive because

\begin{align} \label{eq-072}
\cG_{g,\Lambda}\of{z,\eta} 
\ \geq \
\cG_{g,\Lambda}\of{\abs{z},\eta} 
\end{align}

for all $z \in \HS_{\epsilon}\of{\fh_\Lambda}$.
This follows from the spectral theorem which implies 
that $(1 + z)^{-1} \geq ( 1 + |z| )^{-1}$ and hence also  
$z^2 (1 + z)^{-1} \geq |z|^2 ( 1 + |z| )^{-1}$. Hence 
$\of{z_{g,\Lambda},\eta_{g,\Lambda}} \in 
\HS_{0}\of{\fh_\Lambda} \oplus \fh_\Lambda$.
Finally, the $\sfJ$-invariance of $\cG_{g,\Lambda}$, 
(here we use \eqref{eq:Intro_PolarizationVectorsProperties}): 

\begin{align} \label{eq-073}
\cG_{g,\Lambda}\of{z,\eta} 
\ = \
\cG_{g,\Lambda}\of{\sfJ z \sfJ, \sfJ \eta} \, ,
\end{align}

implies that 

\begin{align} \label{eq-074}
\of{z_{g,\Lambda},\eta_{g,\Lambda}} 
\ = \
\of{\sfJ z_{g,\Lambda} \sfJ, \sfJ \eta_{g,\Lambda}}  \, ,
\end{align}

due to the uniqueness of the minimizer.
\end{proof}
\end{thm}

After this rather abstract existence result we are interested in
providing a more quantitative analysis. Hence, the following two
sections are concerned with approximating the minimizer and deriving
explicit estimates for the minimal energy.

\section{Partial Minimization Problem \\ and Variable Reduction}
\label{sec:PartialMinimization}

Even though $\cG_{g,\Lambda}$ is considerably less complex then the
full energy functional $\cE_{g,\Lambda}$, calculating the exact unique
minimizer $\of{z_{g,\Lambda},\eta_{g,\Lambda}} $ of $\cG_{g,\Lambda}$
still appears to be rather complicated. For our purpose, however, it
is sufficient to calculate the exact partial minimizers of
$\cG_{g,\Lambda}$. By first partially minimizing $\cG_{g,\Lambda}$
over $z$ and only then over $\eta$, we are then able to simplify the
minimization problem sufficiently to determine the asymptotics of the
minimal energy to leading order in $\Lambda \gg 1$. We begin this
section by calculating the partial gradient of $\cG_{g,\Lambda}$ in
$z$.

\subsection{The Partial Gradient of the Energy Functional}
\label{subsec:Gradient}

For any Hilbert space $\fH$ and every Frechet-differentiable
function $F: \fH \to \CC$, we denote by $D_x F$ its Frechet
derivative at the point $x \in \fH$, i.e., the continuous linear
transformation satisfying

\begin{align} \label{eq-075}
F\of{x+y} \ = \ 
F\of{x} + D_x F y + o\of{\norm{y}}, 
\qquad \forall \, y \in \fH \, .   
\end{align}

We denote by $\nabla_x F$ the gradient of $F$ at $x$, i.e., the unique
vector $\nabla_x F \in \fH$ such that $\sca{\nabla_x F}{y} = D_x F y$,
for all $y \in \fH$. We use this notation for Hilbert--Schmidt
operators, $\fH = \HS\of{\fh_\Lambda}$, with inner product 
$\sca{A}{B} = \Tr\brak{A^* B}$.

\begin{lem} \label{lem:derivative}
For any fixed $\eta \in \fh_\Lambda$, the functional 
$\cG_{g,\Lambda}\of{z,\eta}$ is Frechet differentiable with respect to 
$z$ on $\HS_{0}\of{\fh_\Lambda}$ and its gradient 
is given by 

\begin{align} \label{eq-076}
\nabla_z \cG_{g,\Lambda}\of{z,\eta} 
\ = \ 
- \of{1 + z}^{-1} \Lambda \abs{k} \of{1 + z}^{-1} 
+ \Lambda \abs{k} 
- \of{1 + z}^{-1} \Lambda^2 P_\eta \of{1 + z}^{-1} \, ,
\end{align}

where $P_\eta$ denotes the finite-rank operator 

\begin{align} \label{eq-077}
P_\eta \ := \ 
2 \sum_{\nu = 1}^3 \ket{G_\nu + k_\nu \eta}\bra{G_\nu + k_\nu \eta} \, .
\end{align}

\begin{rem}
In the case of $\eta = 0$ the rank-three operator $P_0$ is given by 

\begin{align} \label{eq-078}
P_0 \ = \ 2 \sum_{\nu = 1}^3 \ket{G_\nu}\bra{G_\nu} \, ,
\end{align}

which is homogeneous of degree two in the coupling constant $g$.
\end{rem}

\begin{proof}
First we note that $\HS_0\of{\fh_\Lambda}$ is contained in the
interior of $\HS_\epsilon\of{\fh_\Lambda}$. More specifically, for
every $z \in \HS_0\of{\fh_\Lambda}$ and every 
$h \in \HS\of{\fh_\Lambda}$ with $\norm{h}_{\HS} < \epsilon$, it
follows that $z + h \in \HS_\epsilon\of{\fh_\Lambda}$. We calculate
the derivatives of each of the terms
contributing to $\cG_{g,\Lambda}$ separately, beginning with 
$\Lambda \, \Tr\brak{\abs{k}z^2\of{1 + z}^{-1}}$. For this we notice 
that $z^2 \of{1 + z}^{-1} = z - 1 + \of{1 + z}^{-1}$ and hence

\begin{align} \label{eq-079}
\of{z+h}^2 & \of{1 + z +h}^{-1} - z^2 \of{1 + z}^{-1} 
\ = \ 
h + \of{1 + z +h}^{-1} - \of{1 + z}^{-1} 
\nonumber \\ 
\ = \ &
h - \of{1 + z +h}^{-1} h \of{1 + z}^{-1} 
\nonumber \\ 
\ = \ &
\of{1 + z +h}^{-1} \brak{ \of{1+z+h} h \of{1+z} -h } \of{1 + z}^{-1} 
\nonumber \\ 
\ = \ &
\of{1 + z +h}^{-1} \of{zh + h^2 + hz + zhz + h^2z} \of{1 + z}^{-1} \, ,
\end{align}

by the second resolvent identity. 
Since $\| \of{1 + z +h}^{-1} \|_\op, \| \of{1 + z}^{-1} \|_\op \leq 2$
and the factor in the middle on the right side of \eqref{eq-079} is
trace class, uniformly as $h \to 0$, so is its left side. Taking traces
then yields

\begin{align} \label{eq-080}
& \Lambda \, \Tr\brak{ \abs{k} \of{z + h}^2 \of{1 + z + h}^{-1} }
- \Lambda \, \Tr\brak{ \abs{k} z^2 \of{1 + z}^{-1} }
\\ \nonumber 
& \qquad \ = \ 
\Lambda \, \Tr\brak{ \abs{k} \of{1 + z + h}^{-1} 
 \of{zh + h^2 + hz + zhz + h^2z} \of{1 + z}^{-1}  } 
\\ \nonumber 
& \qquad \ = \ 
\Lambda \, \Tr\brak{ \abs{k} \of{1 + z}^{-1} 
 \of{zh + hz + zhz} \of{1 + z}^{-1}  } 
\, + \, \cO\of{\|h\|_\HS^2} 
\\ \nonumber 
& \qquad \ = \ 
\Lambda \, \Tr\brak{ \of{1 + z}^{-1} 
 \of{z \abs{k} + \abs{k} z + z \abs{k} z} \of{1 + z}^{-1} \, h } 
\, + \, \cO\of{\|h\|_\HS^2} \, ,
\end{align}

using the cyclicity of the trace and the fact that $z$ and
$\of{1+z}^{-1}$ commute in the last step. This implies that 

\begin{align} \label{eq:firstDer} 
\nabla_z \of{\Lambda \, \Tr\brak{\abs{k} z^2 \of{1 + z}^{-1}} }
\ = \ &
\Lambda \, \of{1 + z}^{-1} 
\of{z \abs{k} + \abs{k} z + z \abs{k} z} \of{1 + z}^{-1} 
\nonumber \\[1ex]
\ = \ &
\Lambda \, \of{ \abs{k} 
- \of{1 + z}^{-1} \Lambda\abs{k} \of{1 + z}^{-1} } \, .
\end{align}

For the interaction term we notice that 

\begin{align} \label{eq-081}
2 \Lambda^2 \sum_{\nu = 1}^3 
\sca{G_\nu + k_\nu \eta}{\of{1 + z}^{-1}\of{G_\nu + k_\nu \eta}}
\ = \
\Lambda^2 \Tr\brak{P_\eta\of{1 + z}^{-1}} \, .
\end{align}

Again, two applications of the second resolvent identity quickly
reveal that for any small $h \in \HS\of{\fh_\Lambda}$ we have

\begin{align}
& \Lambda^2 \, \Tr\brak{P_\eta\of{1 + z + h}^{-1}}
- \Lambda^2 \, \Tr\brak{P_\eta\of{1 + z}^{-1}} 
\\ & \equationindent \ = \
-\Lambda^2 \, \Tr\brak{\of{1 + z}^{-1} P_\eta\of{1 + z}^{-1} h}  
+ \cO\of{\norm{h}_{\HS}^2}  \, ,
\end{align}

which immediately implies 

\begin{align} \label{eq:secondDer}
\nabla_z \of{2 \Lambda^2 \sum_{\nu = 1}^3 
\sca{G_\nu + k_\nu \eta}{\of{1 + z}^{-1}\of{G_\nu + k_\nu \eta}}}
\ = \
- \of{1 + z}^{-1} \Lambda^2 \, P_\eta \, \of{1 + z}^{-1} \, .
\end{align}

Of course the final term $\sca{\eta}{\abs{k}\eta}$ is independent of
$z$ and thus its derivative simply vanishes. Combining
\eqref{eq:firstDer} and \eqref{eq:secondDer} then yields the desired
result.
\end{proof}
\end{lem}

\subsection{Partial Minimization Problem}
\label{subsec:PartialMinimizationProblem}

Existence and uniqueness of the partial minimizers of
$\cG_{g,\Lambda}$ can be established by the same arguments as in
Theorem~\ref{thm:existence}. As $\cG_{g,\Lambda}$ is quadratic in
$\eta$, the partial minimizer with respect to $\eta$ can be explicitly
determined by completing a square. Moreover, the form of the partial
gradient of $G_{g,\Lambda}$ with respect to $z$ allows us to
explicitly determine the partial critical points in $z$.

\begin{lem} \label{lem:partialMinimisation}
For fixed $\eta \in \fh_\Lambda$, there is a unique minimizer
$z_*\of{\eta} \in \HS_{\epsilon}\of{\fh_\Lambda}$ of the 
functional $\cG_{g,\Lambda}\of{\cdot,\eta}$. Moreover, it is given by the
positive operator

\begin{align} \label{eq-082}
z_{*}\of{\eta}
\ = \
\abs{k}^{-1/2} \of{\abs{k}^{1/2} 
\of{\Lambda P_\eta + \abs{k}} \abs{k}^{1/2}}^{1/2} \abs{k}^{-1/2} - 1 \, .
\end{align} 

Furthermore, for any fixed 
$z \in \HS_{\epsilon}\of{\fh_\Lambda}$, there is a (unique)
minimizer $\eta \in \fh_\Lambda$ of the functional
$\cG_{g,\Lambda}\of{z,\cdot}$. It is given by

\begin{align} \label{eq-083}
\eta_{*}\of{z}
\ = \
- \Big( 2\Lambda \abs{k} + 
\Lambda^2 \sum_{\nu = 1}^3 k_\nu \of{1 + z}^{-1} k_\nu \Big)^{-1}
\Big( \Lambda^2 \sum_{\nu = 1}^3 k_\nu\of{1 + z}^{-1} G_\nu \Big) \, .
\end{align} 

In particular, the unique minimizing pair 
$\of{z_{g,\Lambda}, \eta_{g,\Lambda}}$ of $\cG_{g,\Lambda}$ 
solves the fixed point equation 

\begin{align} \label{eq-083,1}
\of{z_{g,\Lambda}, \eta_{g,\Lambda}} 
\ = \ 
\big( z_{*}\of{\eta_{g,\Lambda}},\eta_{*}\of{z_{g,\Lambda}} \big) \, .
\end{align}

\begin{rem} \label{rem-001}
Taking $A^2 = |k|^{1/2} \Lambda P_{ \eta } |k|^{1/2} $ and 
$B^2 = |k|^{2} $ in Theorem~\ref{thm:rootUpperBound}, shows that the
operator $z_{*}\of{\eta}$ is a positive trace-class operator.
\end{rem}

\begin{proof} 
We begin by fixing $\eta \in \fh_\Lambda$. 
Lemmata~\ref{thm:convexity} and \ref{lem:coercive} prove that
$\cG_{g,\Lambda}$ is strictly convex and coercive. Then, the same holds
true when one of the variables is fixed. Following the general
arguments in the proof of Theorem~\ref{thm:existence}, we show that
there is a unique minimizer of $\cG_{g,\Lambda}\of{\cdot, \eta}$ in
$\HS_0\of{\fh_\Lambda}$. In particular, this minimizer satisfies
the stationarity equation $\nabla_z\cG_{g,\Lambda}\of{z,\eta} = 0$,
which is given by

\begin{align} \label{eq:stationarity}
-\of{1 + z}^{-1} \abs{k} \of{1 + z}^{-1} + \abs{k} 
- \of{1 + z}^{-1} \Lambda P_\eta \of{1 + z}^{-1}
\ = \ 0 \, ,
\end{align}

according to Lemma~\ref{lem:derivative}. Multiplying both sides of
\eqref{eq:stationarity} by the invertible operator $1 + z$, we obtain
gives the equivalent equation

\begin{align} \label{eq:stationarity2}
\of{1 + z}\abs{k}\of{1 + z}
\ = \
\Lambda P_\eta + \abs{k} \, .
\end{align}

It is easy to verify that $z_{*}\of{\eta}$ indeed solves
Eq.~\eqref{eq:stationarity2}, meaning that it must be the unique
minimizer of $\cG_{g,\Lambda}\of{\cdot, \eta}$.

For fixed $z \in \HS_\epsilon\of{\fh_\Lambda}$, we have 

\begin{align} \label{eq:explicitEtaDependency}
\cG_{g,\Lambda}\of{z,\eta}
\ = \ &
4\Lambda^2 \sum_{\nu = 1}^{3} \rRe \sca{G_\nu}{\of{1 + z}^{-1}k_\nu\eta}
+ 2\Lambda^2 \sum_{\nu = 1}^{3} \sca{\eta}{k_\nu\of{1 + z}^{-1}k_\nu\eta} 
\\ &+
4 \Lambda \sca{\eta}{\abs{k} \eta} + \rho\of{z} \, , \label{eq:yeah}
\end{align}

where 

\begin{align} \label{eq-084}
\rho\of{z} \ = \ 
\Lambda \, \Tr\brak{ \abs{k} z^2 \of{1 + z}^{-1} } 
+ 2 \Lambda^2 \sum_{\nu = 1}^3 \sca{G_\nu}{\of{1 + z}^{-1} G_\nu}
\end{align}

does not depend on $\eta$. Introducing the positive invertible
operator $T \geq 4 \sigma$ and the function $\psi$ by

\begin{align} \label{eq-085}
T \ := \ 
4\Lambda \abs{k} + 2 \Lambda^2 \sum_{\nu = 1}^3k_\nu\of{1 + z}^{-1}k_\nu \, , 
\equationspacing
\psi \ := \ 
2 \Lambda^2 \sum_{\nu = 1}^3 k_\nu\of{1 + z}^{-1} G_\nu \, ,
\end{align}

allows us to write Eq.~\eqref{eq:explicitEtaDependency} 
as [here $T$ is built from the last term in 
Eq.~\eqref{eq:explicitEtaDependency} and \eqref{eq:yeah}]

\begin{align} \label{eq-086}
\cG_{g,\Lambda}\of{z,\eta} 
\ = \
\norm{T^{1/2} \eta}_{\fh_\Lambda}^2 
+ 2 \rRe \sca{\psi}{\eta} + \rho\of{z} \, .
\end{align}

This expression can now be minimized in $\eta \in \fh_\Lambda$ 
by simply completing the square. More precisely, 

\begin{align} \label{eq-087}
\cG_{g,\Lambda}\of{z,\eta} 
\ = \ &
\norm{T^{1/2}\eta}^2_{\fh_\Lambda} 
+ 2 \rRe \sca{\psi}{\eta} + \rho\of{z}
\nonumber \\ 
\ = \ &  
\norm{T^{1/2}\eta}^2_{\fh_\Lambda} + 2 \rRe \sca{\psi}{\eta}
+ \norm{T^{-1/2}\psi}^2_{\fh_\Lambda}
-\norm{T^{-1/2}\psi}^2_{\fh_\Lambda} + \rho\of{z}
\nonumber \\ 
\ = \ & 
\norm{T^{1/2}\eta + T^{-1/2}\psi}^2_{\fh_\Lambda} 
- \norm{T^{-1/2}\psi}^2_{\fh_\Lambda} + \rho\of{z}
\nonumber \\ 
\ \geq \ & 
- \norm{T^{-1/2}\psi}^2_{\fh_\Lambda} + \rho\of{z} \, ,
\end{align}

where equality holds if, and only if, 
$\eta = -T^{-1} \psi$. Hence the unique minimizing choice of 
$\eta$ is given by $\eta_{*}\of{z} = -T^{-1} \psi$, i.e.,

\begin{align} \label{eq-088}
\eta_{*}\of{z}
\ = \
- \Big( 2\Lambda \abs{k} 
+ \Lambda^2 \sum_{\nu = 1}^3k_\nu\of{1 + z}^{-1}k_\nu \Big)^{-1}
\Big( \Lambda^2 \sum_{\nu = 1}^3 k_\nu \of{1 + z}^{-1} G_\nu \Big) \, .
\end{align}

\end{proof}
\end{lem}

\subsection{Variable Reduction}
\label{subsec:VariableReduction}

We expect the minimization with respect to the variable $z$, which
originates from the homogenuous Boguliubov transformation in
Proposition~\ref{prop:Intro_OriginalProblem}, to determine the leading
order asymptotics of the minimal energy of $\cG_{g,\Lambda}$, while
the minimization with respect to $\eta$, which parametrizes the Weyl
operators in Proposition~\ref{prop:Intro_OriginalProblem}, is expected
to yield only lower-order corrections to these.  This expectation is
confirmed by our findings in Section~\ref{sec:GroundStateEnergy},
where we pass from the functional $G_{g,\Lambda}\of{z,\eta}$ of two
variables $z$ and $\eta$ to the partially minimized functional
$G_{g,\Lambda}\of{z_*\of{\eta},\eta}$ of only the Weyl variable $\eta$
and show that, up to a multiplicative constant, the asymptotics is
unchanged to leading order, if the minimizing $\eta$ is replaced by
$0$.

\begin{thm} \label{thm:reducedFunctional}
Minimizing $\cG_{g,\Lambda}\of{z,\eta}$ over
$\HS_{\epsilon}\of{\fh_\Lambda} \oplus \fh_\Lambda$ is
equivalent to minimizing $\cG_{g,\Lambda}\of{z_{*}\of{\eta},\eta}$
over $\eta \in \fh_\Lambda$ in the sense that

\begin{align} \label{eq:equivalentMinimization}
E_\cG^{g,\Lambda} 
\ = \
\inf_{\eta \in \fh_\Lambda} 
\big\{ \cG_{g,\Lambda}\of{z_{*}\of{\eta},\eta} \; \big| \ 
\eta \in \fh_\Lambda \big\} \, , 
\end{align}

where $z_{*}\of{\eta}$ is the positive trace-class operator
given by \eqref{eq-082}. Furthermore, the reduced functional 
$\eta \mapsto \cG_{g,\Lambda}\of{z_{*}\of{\eta},\eta}$ takes the
explicit form

\begin{align} \label{eq:partialMinimizationForm}
\cG_{g,\Lambda}\of{z_{*}\of{\eta},\eta}
\ = \
2 \Lambda \, \Tr\brak{
\of{\Lambda \abs{k}^{1/2}P_\eta\abs{k}^{1/2} + \abs{k}^2}^{1/2} - \abs{k}}
+ 4 \Lambda \sca{\eta}{\abs{k}\eta} \, .
\end{align}

\begin{proof}
Eq.~\eqref{eq:equivalentMinimization} follows from 

\begin{align} \label{eq-090}
\inf_{\of{z,\eta} \in \HS_{\epsilon}\of{\fh_\Lambda} \oplus \fh_\Lambda} 
\cG_{g,\Lambda}\of{z,\eta}
\ = \
\inf_{\eta \in \fh_\Lambda} 
\Big\{ \inf_{z \in \HS_{\epsilon}\of{\fh_\Lambda}} 
\cG_{g,\Lambda}\of{z,\eta} \Big\}
\ = \
\inf_{\eta \in \fh_\Lambda} \cG_{g,\Lambda}\of{z_{*}\of{\eta},\eta} \, .
\end{align}

We now determine the explicit form of
$\cG_{g,\Lambda}\of{z_{*}\of{\eta},\eta}$. To this end we 
abbreviate $z_{*}\of{\eta} =: z_*$ and use \eqref{eq-077} to
rewrite $\cG_{g,\Lambda}\of{z_*,\eta} - 4 \Lambda \sca{\eta}{\abs{k}\eta}$ 
as

\begin{align} \label{eq-091}
\cG_{g,\Lambda}\of{z_*,\eta} & - 4 \Lambda \sca{\eta}{\abs{k}\eta} 
\nonumber \\ 
\ = \ & 
\Lambda \, \Tr\brak{ \abs{k} z_*^2 \of{1 + z_*}^{-1} }
+ 2 \Lambda^2 \sum_{\nu = 1}^3 
\sca{G_\nu + k_\nu \eta}{\of{1 + z_*}^{-1}\of{G_\nu + k_\nu \eta}} 
\nonumber \\ 
\ = \ & 
\Lambda \, \Tr\brak{ \abs{k} z_*^2 \of{1 + z_*}^{-1} 
+ \Lambda \, P_\eta \of{ 1 + z_* }^{-1} } 
\nonumber \\ 
\ = \ & 
\Lambda \, 
\Tr\brak{ \of{ \abs{k} z_*^2 + \Lambda P_\eta } \of{1 + z_*}^{-1} } \, .
\end{align}

Inserting the stationarity condition \eqref{eq:stationarity2}
to substitute for $\Lambda P_\eta$ and additionally using
the cyclicity of the trace, we further obtain

\begin{align} \label{eq-092}
\Lambda \, &
\Tr\brak{ \of{ \abs{k} z_*^2 + \Lambda P_\eta } \of{1+z_*}^{-1} } 
\nonumber \\ 
\ = \ & 
\Lambda \, 
\Tr\brak{ \abs{k} 
\of{ z_*^2 \of{1+z_*}^{-1} + \of{1+z_*} - \of{1+z_*}^{-1} } } \, .
\end{align}

Now note that $\frac{x^2}{1+x} + 1 + x - \frac{1}{1+x} 
= \frac{x^2-1}{1+x} + 1 + x = 2x$ and hence

\begin{align} \label{eq-093}
\cG_{g,\Lambda}\of{z_*,\eta} 
\ = \ & 
2 \Lambda \, \Tr\brak{ \abs{k} \, z_* }
- 4 \Lambda \sca{\eta}{\abs{k}\eta} 
\\ \nonumber 
\ = \ & 
2 \Lambda \Tr\brak{ 
\of{\Lambda \abs{k}^{1/2} P_{\eta} \abs{k}^{1/2} 
+ \abs{k}^2}^{1/2} - \abs{k} } 
- 4 \Lambda \sca{\eta}{\abs{k}\eta} \, ,
\end{align}

using \eqref{eq-082} and observing that $z_*$ is actually 
trace class, according to Remark~\ref{rem-001}.
\end{proof}
\end{thm}

\section{Ground State Energy Estimates}
\label{sec:GroundStateEnergy}

Using the partially minimized form of $G_{g,\Lambda}$ derived in
Section~\ref{sec:PartialMinimization} we give a lower bound on the
asymptotics of the minimal energy $\inf \cG_{g,\Lambda}$. Afterwards
we prove a matching upper bound by considering a simple trial state.
Both bounds crucially rely on the trace inequalities proven in
Section~\ref{sec:KeyInequalities}.

\begin{lem} \label{lem:functionalLowerBound}
The minimal energy admits the lower bound

\begin{align} \label{eq-094}
E_\cG^{g,\Lambda}
\ \geq \
\frac{4\sqrt{\pi}}{\sqrt{3}} \, g \, \Lambda^{3/2} \, ,
\end{align}

for all $g > 0$ and $\Lambda > \frac{3}{8 \pi} g^{-2}$.

\begin{proof}
According to Theorem~\ref{thm:reducedFunctional} and
Theorem~\ref{thm:rootLowerBound} in combination with $\abs{k}^2 \leq 1$
it is

\begin{align} \label{eq-095}
E_\cG^{g,\Lambda}
\ = \ & 
\inf_{\eta \in \fh_\Lambda} \cG_{g,\Lambda}\of{z_{*}\of{\eta},\eta}
\ \geq \ 
2 \Lambda \inf_{\eta \in \fh_\Lambda}  
\Tr\brak{ \of{\Lambda \abs{k}^{1/2} P_{\eta}\abs{k}^{1/2} + \abs{k}^2 }^{1/2} 
- \abs{k} }
\nonumber \\ 
\ \geq \ & 
2 \Lambda \inf_{\eta \in \fh_\Lambda}  
\Tr\brak{\of{\Lambda \abs{k}^{1/2}P_{\eta}\abs{k}^{1/2} + 1}^{1/2} - 1} \, .
\end{align} 

Of course for any $\mu \in \{1, 2, 3\}$ we have

\begin{align} \label{eq-096}
\Lambda \abs{k}^{1/2}P_{\eta}\abs{k}^{1/2} 
& \ = \
2 \Lambda \sum_{\nu = 1}^3 
\ket{\abs{k}^{1/2}\of{G_\nu + k_\nu \eta}}
\bra{\abs{k}^{1/2}\of{G_\nu + k_\nu \eta}}
\\ & \ \geq \ 
2 \Lambda \,
\ket{\abs{k}^{1/2}\of{G_\mu + k_\mu \eta}}
\bra{\abs{k}^{1/2}\of{G_\mu + k_\mu \eta}} \, .
\end{align}

For the sake of brevity we write 

\begin{align} \label{eq-097}
P_\eta^\mu 
\ := \
\ket{\abs{k}^{1/2}\of{G_\mu + k_\mu \eta}}
\bra{\abs{k}^{1/2}\of{G_\mu + k_\mu \eta}} 
\end{align}

in the following. 

We are now going to prove that there is always one $\mu$ for which
$P_\eta^\mu$ has a large spectral gap, uniformly in $\eta$. First we
note that the spectrum of $P_\eta^\mu$ is given by

\begin{align} \label{eq-098}
\sigma(P_\eta^\mu)
\ = \
\Big\{ 0, \;
\big\| \abs{k}^{1/2} \of{G_\mu + k_\mu \eta} \big\|_{\fh_\Lambda}^2 
\Big\} \, .
\end{align}

Now we estimate 

\begin{align} \label{eq:uniformNormLowerBound}
\inf_{\eta \in \fh_\Lambda} \max_{\mu = 1,2,3} &
\norm{\abs{k}^{1/2}\of{G_\mu + k_\mu \eta}}^2_{\fh_\Lambda}
\ \geq \ 
\inf_{\eta \in \fh_\Lambda} \frac{1}{3} \sum_{\mu = 1}^3 
\norm{\abs{k}^{1/2}\of{G_\mu + k_\mu \eta}}^2_{\fh_\Lambda}
\nonumber \\ 
\ = \ & 
\inf_{\eta \in \fh_\Lambda} \frac{1}{3} 
\| |k|^{1/2} G + |k|^{1/2} \eta k  \|_{ \fh_\Lambda^3 }^2
\ = \ 
\inf_{\eta \in \fh_\Lambda} \frac{1}{3} 
\Big( \| |k|^{1/2} G \|_{ \fh_\Lambda^3 }^2 
+ \| |k|^{1/2} \eta k  \|_{ \fh_\Lambda^3 }^2 \Big)
\nonumber \\ 
\ = \ & 
\frac{1}{3} \| |k|^{1/2} G \|_{ \fh_\Lambda^3 }^2
\ = \ 
g^2\frac{8 \pi}{3} \, ,
\end{align}

where we use that 
$\sca{\abs{k}^{1/2} G\of{k}}{\abs{k}^{1/2} k} = 0$ due to the
properties of the Coulomb gauge, see
\eqref{eq:Intro_PolarizationVectors}. The above estimate ensures
that, for $\Lambda > \frac{3}{8 \pi} g^{-2} $ and any $\eta \in
\fh_\Lambda$ there is a coordinate direction 
$\mu\of{\eta} \in \{1, 2, 3\}$, such that
$\sigma(2\Lambda P_\eta^{\mu\of{\eta}}) \cap \of{0,2} = \emptyset$.

Now, we use that

\begin{align} \label{eq-099}
\sqrt{x + 1} - 1 \ \geq \ \frac{\sqrt{x}}{2}
\end{align}

holds true on $\sset{0} \cup \of{2,\infty}$ and estimate

\begin{align} \label{eq-100}
2 \Lambda \inf_{\eta \in \fh_\Lambda} & 
\Tr\brak{\of{\Lambda \abs{k}^{1/2}P_{\eta}\abs{k}^{1/2} + 1}^{1/2} - 1}
\ \geq \
2 \Lambda \inf_{\eta \in \fh_\Lambda}  
\Tr\brak{\of{2 \Lambda  P_\eta^{\mu\of{\eta}} + 1}^{1/2} - 1}
\nonumber \\ 
\ \geq \ &
\Lambda \inf_{\eta \in \fh_\Lambda} 
\Tr\brak{\of{2 \Lambda  P_\eta^{\mu\of{\eta}}}^{1/2}}
\ \geq \
\sqrt{2} \Lambda^{3/2} \inf_{\eta \in \fh_\Lambda} 
\norm{\abs{k}^{1/2}\of{G_{\mu\of{\eta}} + k_{\mu\of{\eta}} \eta}}_{\fh_\Lambda}
\nonumber \\ 
\ \geq \ &
\frac{4 \sqrt{\pi}}{\sqrt{3}} g \Lambda^{3/2} \, ,
\end{align}

where we made use of \eqref{eq:uniformNormLowerBound} in the last inequality.
\end{proof}
\end{lem}

It remains to show the matching upper bound on the minimal energy.
For that we consider the explicit trial state $\of{z_*\of{0},0}$,
yielding the main theorem of this work.

\begin{thm} \label{thm:groundStateEnergyEstimate}
The minimal energy fulfills the estimate 

\begin{align} \label{eq:asymptoticOfFunctional}
\frac{4\sqrt{\pi}}{\sqrt{3}} \, g \, \Lambda^{3/2}
\ \leq \
E_\cG^{g,\Lambda}
\ \leq \ 
4 \sqrt{3\pi} \, g \, \Lambda^{3/2} \, ,
\end{align}

for all $g > 0$ and all $\Lambda > \frac{3}{8 \pi} g^{-2}$.

\begin{proof}
The lower bound in \eqref{eq:asymptoticOfFunctional} has already been
established in Lemma~\ref{lem:functionalLowerBound}. For the upper
bound we consider the trial state $\of{z_{*}\of{0},0}$, see
\eqref{eq-082}. Using the representation \eqref{eq:partialMinimizationForm}
of $\cG_{g,\Lambda}\of{z_{*}\of{0},0}$ from
Theorem~\ref{thm:reducedFunctional} and
Theorem~\ref{thm:rootUpperBound} we obtain

\begin{align} \label{eq:upperBoundEstimate} 
\cG_{g,\Lambda}\of{z_{*}\of{0},0} 
\ = \ & 
2 \Lambda \, 
\Tr\brak{\of{\Lambda\abs{k}^{1/2}P_0\abs{k}^{1/2} + \abs{k}^2}^{1/2} 
- \abs{k}} 
\nonumber \\ 
\ \leq \ & 
2 \Lambda^{3/2} \, \Tr\brak{ \of{\abs{k}^{1/2}P_0\abs{k}^{1/2}}^{1/2} } 
 \\ & 
\ \leq \ 
2 \sqrt{3} \, \Lambda^{3/2} \of{\Tr\brak{\abs{k}^{1/2}P_0\abs{k}^{1/2}}}^{1/2}
\nonumber \\ 
\ \leq \ & 
2 \sqrt{3} \, \Lambda^{3/2} \of{\sum_{\nu = 1}^3 
\norm{\abs{k}^{1/2}G_{\nu}}^2_{\fh_\Lambda}}^{1/2}
\ = \
4 \sqrt{3\pi} \, g \, \Lambda^{3/2} \, .
\end{align}

In the above estimates we made use of the fact that $P_0$ is a
rank-three operator. That is, we take 
$S = \of{\abs{k}^{1/2}P_0\abs{k}^{1/2}}^{1/2}$, which is a rank-three
operator, and efine $Q$ to be the orthogonal projection onto the range
of $S$. By the Cauchy-Schwarz inequality we obtain

\begin{align} \label{eq-101} 
\Tr(S) \ = \ \Tr( P S ) 
\ = \ 
\la P | S \ra_{HS} 
\ \leq \ 
\| P \|_{HS} \, \| S \|_{HS} 
\ = \
\sqrt{3} \, \| S \|_{HS} \, .
\end{align}

\end{proof}
\end{thm}

\section{Key Operator Inequalities}
\label{sec:KeyInequalities}

In this section we prove the trace inequalities from
Theorem~\ref{thm:rootUpperBound} and \ref{thm:rootLowerBound} which
are used throughout Section~\ref{sec:GroundStateEnergy}. While these
results are in principle known (see
e.g. \cite{BirmanKoplienkoSolomyak1975,BachHach2022}) we provide
elementary proofs for the convenience of the reader. We begin with
some preliminary lemmas.

\begin{lem} \label{Alem:strongConvergence}
Let $\fH$ be a Hilbert space and 
$\of{B_n}_{n = 1}^\infty \in \cB(\fH)^\NN$ be a sequence of positive
and bounded operators, such that $B_n \too{s} B$, for some positive,
bounded operator $B \in \cB(\fH)$, where $\too{s}$ denotes strong
convergence. Then $B_n^2 \too{s} B^2$ and $B_n^{1/2} \too{s} B^{1/2}$.

\begin{proof}
Let $\psi \in \fH$ be any vector and denote $\phi := B \psi$. 
The uniform boundedness principle
implies that $( \|B_n\| )_{n \in \NN } \in (\RR_0^+)^\NN$ is
  bounded, and we denote $b := \sup_{n \in \NN} \|B_n\|$. Using that 

\begin{align} \label{eq-102} 
\| B_n^2 \psi - B^2 \psi \| 
\ = \ &
\| B_n (B_n - B) \psi + (B_n -B) \phi \|
\nonumber \\
\ \leq \ & 
b \, \| (B_n - B) \psi \| \, + \, \| (B_n -B) \phi \| \, , 
\end{align}

we conclude that $B_n^2 \psi \to B^2 \psi$ and hence 
$B_n^2 \too{s} B^2$.

To prove $B_n^{1/2} \too{s} B^{1/2}$, we first notice that, for
every $x \geq 0$, we have that $\frac{1}{\pi} \int_{0}^{\infty}
\frac{x}{x + \lambda} \lambda^{1/2} = \sqrt{x}$. This, together with
the spectral theorem (and Tonelli's Theorem), implies that the same
identity holds true weakly if we substitute $x$
by a positive operator $A$. It also holds true strongly because the
integral at stake converges strongly: the term $\Big \| \frac{A}{A +
  \lambda} \Big \| \lambda^{1/2}$ can be majorized by an integrable
function. This is a well-known identity for 
strictly positive operators, see, e.g., 
\cite[Section~1.4]{Caps2002}. With this,
the identity $\frac{x}{x+\lambda} = 1 - \frac{1}{x+\lambda}$,
and the second resolvent equation, we obtain the estimate

\begin{align} \label{eq-103}
\norm{\of{B^{1/2} - B_n^{1/2}} \psi}_\fH
\ \leq \ &
\int_{0}^\infty \norm{\of{\frac{B}{B + \lambda} 
- \frac{B_n}{B_n + \lambda}} \psi}_\fH \;
\frac{d\lambda}{\pi \, \sqrt{\lambda}} 
\nonumber \\ 
\ = \ &
\int_{0}^\infty \norm{\of{\frac{1}{B_n + \lambda} 
- \frac{1}{B + \lambda}} \psi}_\fH \;
\frac{d\lambda}{\pi \, \sqrt{\lambda}} 
\nonumber \\ 
\ = \ &
\int_{0}^\infty \norm{\of{\frac{1}{B_n + \lambda} 
\of{B - B_n} \frac{1}{B + \lambda}} \psi}_\fH \;
\frac{d\lambda}{\pi \, \sqrt{\lambda}} \, .
\end{align}

As a function of $\lambda$, the norm converges pointwise to zero due
to

\begin{align} \label{eq-104}
\norm{\of{\frac{1}{B_n + \lambda} \of{B - B_n} \frac{1}{B + \lambda}}\psi}_\fH 
\ \leq \ & 
\norm{\of{\frac{1}{B_n + \lambda}}}_{\mathrm{op}} \cdot
\norm{\of{B - B_n} \frac{1}{B + \lambda}\psi}_\fH 
\nonumber \\[1ex] 
\ \leq \ &
\frac{1}{\lambda} \;
\norm{\of{B - B_n} \frac{1}{B + \lambda}\psi}_\fH 
\too{n \to \infty} 0 \, ,
\end{align}

where we use $B_n \too{s} B$. Further note that

\begin{align} \label{eq-105}
\norm{ \frac{B}{B + \lambda} - \frac{B_n}{B_n + \lambda} }_\op
\ \leq \ 
\norm{\frac{B}{B + \lambda}}_\op 
+ \norm{\frac{B_n}{B_n + \lambda}}_\op 
\ \leq \ 2 \, ,
\end{align}

and 

\begin{align} \label{eq-106,1}
\norm{ \frac{B}{B + \lambda} - \frac{B_n}{B_n + \lambda} }_\op
\ \leq \ 
\norm{\frac{1}{B + \lambda}}_\op 
+ \norm{\frac{1}{B_n + \lambda}}_\op 
\ \leq \ \frac{2}{\lambda} \, ,
\end{align}

since $B_n, B \geq 0$, and $\lambda >0$. Therefore,

\begin{align} \label{eq-106,2}
X(\lambda) \ := \ 
\frac{2}{\pi} \: \Big( \frac{ \bfone[\lambda \leq 1] }{ \lambda^{1/2} } 
\, + \, \frac{ \bfone[\lambda > 1]}{\lambda^{3/2}} \Big)
\end{align}

is integrable and dominates the integrand in the last line of
\eqref{eq-103}. By the Lebesgue dominated convergence theorem, it
follows that

\begin{align} \label{eq-107}
\norm{\of{B^{1/2} - B_n^{1/2}} \psi}_\fH
\ = \
\frac{1}{\pi} \int_{0}^\infty 
\norm{\of{\frac{1}{B_n + \lambda} \of{B - B_n} 
\frac{1}{B + \lambda}}\psi}_\fH \;
\frac{d\lambda}{\pi \, \sqrt{\lambda}} 
\ \too{n \to \infty} \ 0 \, .
\end{align} 
\end{proof}
\end{lem}

\begin{lem} \label{Alem:traceApproximation}
Let $\fH$ be a Hilbert space and 
$\of{A_n}_{n = 1}^{\infty} \in \cL(\fH)^\NN$ be a sequence of positive
trace-class operators such that

\begin{align} \label{eq-108}
C \ := \ \sup_{n \in \NN} \: \Tr\brak{A_n} \ < \ \infty \, ,
\end{align} 

and suppose that $A_n \too{s} A$ on $\cD \subset \fH$, for some 
densely defined linear operator $A: \cD \to \fH$. 
Then $A$ is, or extends to, a positive trace-class operator with 

\begin{align} \label{eq-109}
\Tr\brak{A} \ \leq \ C \, .
\end{align}

\begin{proof}
For $\psi \in \cD$, we have that 
$\| A \psi \|_\fH = \lim_{n \to \infty} \| A_n \psi \|_\fH 
\leq C \| \psi \|_\fH$. Hence $A$ extends to a bounded operator 
which we denote by the same letter $A \in \cB(\fH)$. From
$\la \vphi | A \psi \ra = \lim_{n \to \infty} \la \vphi | A_n \psi \ra$
we also easily decduce that $A$ is self-adjoint and positive. 
We now show that $A$ is indeed trace class. 
For that let $\of{\vphi_\ell}_{\ell = 1}^\infty \subset \fH$ 
be an orthonormal basis. For any $L \geq 1$ it is 

\begin{align} \label{eq:sumConvergence}
\sum_{\ell = 1}^L \sca{\vphi_\ell}{A \vphi_\ell} 
\ = \ &
\lim_{n \to \infty} \bigg\{
\sum_{\ell = 1}^L \sca{\vphi_\ell}{A_n \vphi_\ell} \bigg\} 
\ \leq \ &
\sup_{n \to \infty} \bigg\{
\sum_{\ell = 1}^\infty \sca{\vphi_\ell}{A_n \vphi_\ell} \bigg\} 
\ = \ C \, ,
\end{align}

and we obtain the asserted trace-class property of $A$ in the limit 
$L \to \infty$.
\end{proof}
\end{lem}

\begin{lem} \label{Alem:traceEstimate}
Let $\fH$ be a Hilbert space, and assume that $A \in \cB(\fH)$ is a
positive bounded operator and $B$ is a densely defined, positive, and
invertible, operator such that $A \leq B$. Then it is

\begin{align} \label{eq-112}
A B^{-1} A \ \leq \ A \, .
\end{align}

In particular, if $A$ is trace class then so is $A B^{-1} A$.

\begin{proof}
We may write $B = A + X$ for some positive operator $X$. Instead of
$A$ we first consider the invertible operator $A + \epsilon$ for
$\epsilon > 0$. Due to the operator monotonicity of the inverse (see
\cite[Ch.~5]{Bhatia1996}) and the positivity of $A$, we have

\begin{align} \label{eq:epsilonEstimate}
\of{A + \epsilon} \of{A + \epsilon + X}^{-1} \of{A + \epsilon}
\ \leq \
\of{A + \epsilon} \of{A + \epsilon}^{-1} \of{A + \epsilon}
\ = \ 
A + \epsilon \, .
\end{align}

In the limit $\epsilon \to 0$ both sides of \eqref{eq:epsilonEstimate}
converge strongly to $A B^{-1} A$ and $A$, respectively. This yields
the desired inequality $A B^{-1} A \leq A$.
\end{proof}
\end{lem}

We now come to the main statements of the section.

\begin{thm} \label{thm:rootUpperBound}
For any positive trace class operator $A$ and every bounded positive
operator $B$ it is

\begin{align} \label{eq-113}
\Tr\brak{\of{A^2 + B^2}^{1/2} - B}
\ \leq \ 
\Tr\brak{A} \, .
\end{align}

In particular, the positive operator 
$\of{A^2 + B^2}^{1/2} - B$ is also trace class.

\begin{proof}
First we note that operators of the form $\of{A^2 + B^2}^{1/2} - B$
are always positive due to the operator monotonicity of the square
root (see \cite[Ch.~5]{Bhatia1996}). For now we assume that $A$ and
$B$ are positive operators on a finite-dimensional Hilbert space. Also
assume that $B \geq \epsilon > 0$. We write

\begin{align} \label{eq-114}
D_{\pm} \ := \
\of{A^2 + B^2}^{1/2} \pm B \, .
\end{align}

We may write the trace of $D_{-} = \of{A^2 + B^2}^{1/2} - B$ as

\begin{align} \label{eq-115}
\Tr\brak{\of{A^2 + B^2}^{1/2} - B}
\ = \
\Tr\brak{D_{-}D_{+}D_{+}^{-1}}
\ = \
\Tr\brak{D_{+}^{-1/2} A^2 D_{+}^{-1/2} + R} \, ,
\end{align}

where we expand the product

\begin{align} \label{eq-116}
D_{-}D_{+} 
\ = \
A^2 + \brak{\of{A^2 + B^2}^{1/2},B}
\end{align}

and define the operator $R$ by 

\begin{align} \label{eq-117}
R \ = \
D_{+}^{-1/2} \brak{\of{A^2 + B^2}^{1/2},B} D_{+}^{-1/2}
\ = \ 
D_{+}^{-1/2} \brak{D_{+},B} D_{+}^{-1/2} \, . 
\end{align}

As all operators live on a finite dimensional Hilbert space, it is
$\Tr\brak{R} = 0$ and thus

\begin{align} \label{eq-118}
\Tr\brak{\of{A^2 + B^2}^{1/2} - B} 
\ = \  
\Tr\brak{D_{+} ^{-1/2} A^2 D_{+} ^{-1/2} + R}
\ = \
\Tr\brak{D_{+} ^{-1/2} A^2 D_{+} ^{-1/2}} \, .
\end{align}

Note that $D_{+} > A$, again by the monotonicity of the
square-root. Hence Lemma~\ref{Alem:traceEstimate} yields

\begin{align} \label{eq-119}
\Tr\brak{D_{+}^{-1/2} A^2 D_{+}^{-1/2}}
\ = \
\Tr\brak{A D_{+}^{-1} A}
\ \leq \
\Tr\brak{A} \, .
\end{align}

We drop the assumption of $B \geq \epsilon > 0$ and only assume 
$B \geq 0$, but the Hilbert space remains finite-dimensional. In this
case we have that

\begin{align} \label{eq-120}
\Tr\brak{\of{A^2 + B^2}^{1/2} - B}
\ = \
\lim_{\epsilon \to 0} 
\Tr\brak{\of{A^2 + \of{B + \epsilon}^2 }^{1/2} - \of{B + \epsilon}}
\ \leq \ 
\Tr\brak{A} \, ,
\end{align}

due to $\of{A^2 + \of{B + \epsilon}^2 }^{1/2} - \of{B + \epsilon} 
\too{s} \of{A^2 + B^2 }^{1/2} - B$ 
(see Lemma~\ref{Alem:strongConvergence}) and 
Lemma~\ref{Alem:traceApproximation}.

Lastly we drop the remaining assumption and consider a general
positive trace-class operator $A$ and a bounded positive operator $B$.
Let $\{ \vphi_\ell \}_{\ell \in \NN} \subset \fH$ be an orthonormal
basis of the Hilbert space $\cH$, define 
$\fH_L := \huelle\{ \psi_1, \cdots, \psi_n \}$ to be the subspace
spanned by the first $L \in \NN$ vectors in that basis, and denote by
$Q_L$ the orthogonal projection onto $\fH_L$. We approximate $A$ and
$B$ by the positive finite-rank operators $A_L := Q_L A Q_L$ and $B_L
:= Q_L B Q_L$ in the strong sense. In this now finite-dimensional
approximation it is

\begin{align} \label{eq-121}
\Tr\brak{\of{A_L^2 + B_L^2}^{1/2} - B_L} 
\ \leq \ 
\Tr\brak{A_L}
\ \leq \ 
\Tr\brak{A} \, .
\end{align}

According to Lemma~\ref{Alem:strongConvergence} it also is 
$\of{A_n^2 + B_n^2}^{1/2} - B_n \too{s} \of{A^2 + B^2}^{1/2} - B$. 
This allows us to make use of Lemma~\ref{Alem:traceApproximation}
from which we conclude that $\of{A^2 + B^2}^{1/2} - B$ 
is trace class, indeed, and

\begin{align} \label{eq-122}
\Tr\brak{\of{A^2 + B^2}^{1/2} - B} \ \leq \ \Tr\brak{A} \, .
\end{align}

\end{proof}
\end{thm}

\begin{lem} \label{lem:operatorMon}
Let $A$ and $B$ be bounded and positive operators. Then for every $\epsilon > 0$ we have
\begin{align}
\of{A^2 + B^2 + \epsilon}^{1/2} - \of{B^2 + \epsilon} 
\ \leq \
\of{A^2 + B^2}^{1/2} - B \, .
\end{align}

\begin{proof}
It is sufficient to show that the difference operator 
\begin{align}
\of{B^2 + \epsilon}^{1/2} - B  + \of{A^2 + B^2}^{1/2} - \of{A^2 + B^2 + \epsilon}^{1/2}
\end{align}
is positive. 
For all real numbers $x \geq 0$ we have
\begin{align} \label{eq:scalarEq}
	x^{1/2} - \of{x + \epsilon}^{1/2} 
	\ = \ 
	-\frac{\epsilon}{\of{x + \epsilon}^{1/2} + x^{1/2}}
	\, .
\end{align}
By functional calculus we thus have
\begin{align}
	&\of{B^2 + \epsilon}^{1/2} - B + \of{A^2 + B^2}^{1/2} - \of{A^2 + B^2 + \epsilon}^{1/2}
	\\&\equationindent \nonumber \ = \
	\of{B^2 + \epsilon}^{1/2} - B - \epsilon \of{\of{A^2 + B^2 + \epsilon}^{1/2} + \of{A^2 + B^2}^{1/2}}^{-1} 
	\, .
\end{align}
The operator monotonicity of the square root and the inverse then yields
\begin{align}
	&\of{B^2 + \epsilon}^{1/2} - B - \epsilon \of{\of{A^2 + B^2 + \epsilon}^{1/2} + \of{A^2 + B^2}^{1/2}}^{-1} 
	\\&\equationindent \nonumber  \ \geq \
	\of{B^2 + \epsilon}^{1/2} - B - \epsilon \of{\of{B^2 + \epsilon}^{1/2} + B}^{-1} 
	\ = \
	0
	\, ,
\end{align}
where the last equality again follows from \eqref{eq:scalarEq}.
\end{proof}
\end{lem}

\begin{thm} \label{thm:rootLowerBound}
For any trace-class operator $A$ and any two bounded and strictly positive
operators $B$ and $C$ such that $B^2 \leq C^2$ we have

\begin{align} \label{eq:rootLowerBound}
\Tr\brak{\of{A^2 + C^2}^{1/2} - C}
\ \leq \ 
\Tr\brak{\of{A^2 + B^2}^{1/2} - B} \, .
\end{align}

\begin{proof}
Assume first that $B^2 \geq \epsilon$ for some $\epsilon > 0$ and that $C^2$ is a small perturbation of $B^2$ in the sense that $\of{C^2-B^2}^{1/2}$ is trace class.
Due to the operator monotonicity of the square root we have $B \leq C$ and Theorem~\ref{thm:rootUpperBound} implies that 
\begin{align}
\Tr\brak{C - B}
\ = \ 
\Tr\brak{\of{C^2 - B^2 + B^2}^{1/2} - B}
\ \leq \
\Tr\brak{\of{C^2-B^2}^{1/2}} < \infty \, ,
\end{align}
meaning that $C - B$ also is a positive trace-class operator.
We consider the difference

\begin{align} \label{eq:rootDifference}
\Tr\brak{\of{A^2 + B^2}^{1/2} - \of{A^2 + C^2}^{1/2} + \of{C - B}}
\end{align}

and show that it is positive. 
Here Theorem~\ref{thm:rootUpperBound} ensures that both $\of{A^2 + B^2}^{1/2} - B$ and $\of{A^2 + C^2}^{1/2} - C$ are trace class, and so is their difference. 
To keep the following expressions concise we define the operators $S_{\pm}$ as 

\begin{align} \label{eq-123}
S_{\pm} 
\ := \
\of{A^2 + B^2}^{1/2} \pm \of{A^2 + C^2}^{1/2} \, .
\end{align} 

In this notation the difference \eqref{eq:rootDifference} reads
$\Tr\brak{S_{-} + \of{C - B}}$. We note at this point that $S_-$ is
trace class because $C - B$ is. Multiplying and dividing by the
positive and invertible operator $S_{+} \geq \sqrt{\eps}$ and
using that

\begin{align} \label{eq-124}
S_- S_+ 
\ = \ 
B^2 - C^2 + \brak{\of{A^2 + B^2}^{1/2} , \of{A^2 + C^2}^{1/2}}   
\end{align}

yields

\begin{align} \label{eq:TraceRootDifference}
\Tr\brak{S_{-} + \of{C - B}}
\ = \ &
\Tr\brak{S_{+}^{-1/2}\of{S_{-}S_{+} + \of{C - B}S_{+}}S_{+}^{-1/2}}
\\ \nonumber 
\ = \ &
\Tr\brak{S_{+}^{-1/2}\of{B^2 - C^2}S_{+}^{-1/2} 
+ S_{+}^{-1/2}\of{C - B}S_{+}^{1/2} + R}  \, ,
\end{align}

where the operator $R$ is given by

\begin{align} \label{eq-125}
R 
\ = \ 
S_{+}^{-1/2} \brak{\of{A^2 + B^2}^{1/2},\of{A^2 + C^2}^{1/2}} S_{+}^{-1/2} \, .
\end{align}

Recalling the definition of $S_-$, we get   

\begin{align} \label{eq-126} 
& \brak{\of{A^2 + B^2}^{1/2}, \of{A^2 + C^2}^{1/2}}    
\ = \ 
\brak{S_- + \of{A^2 + C^2}^{1/2},\of{A^2 + C^2}^{1/2}} 
\\ \nonumber & \equationindent 
\ = \ 
\brak{ S_- ,\of{A^2 + C^2}^{1/2}}
\ = \ \frac{1}{2}\brak{ S_- , 2 \of{A^2 + C^2}^{1/2}} 
\\ \nonumber & \equationindent 
\ = \ 
\frac{1}{2} \brak{ S_- \, , \, S_- + 2 \of{A^2 + C^2}^{1/2}} 
\ = \ 
\frac{1}{2} \brak{ S_- \, , \, S_+ } \, ,
\end{align} 

which is a trace-class operator because $S_-$ is.  We conclude, using
the above equation together with a direct calculation, that

\begin{align} \label{eq-127} 
\Tr\brak{R} \ = \
\frac{1}{2}\Tr\brak{ \brak{ S_-, S_+ } S_{+}^{-1} } 
\ = \ 0 \, .  
\end{align}  

Using \eqref{eq:TraceRootDifference} and \eqref{eq-127} leads us to

\begin{align}  \label{eq:traceStep}
\Tr\brak{ S_- + (C-B) } 
\ = \ 
\Tr\brak{ S_{+}^{-1/2} \of{B^2 - C^2} S_{+}^{-1/2} 
+ S_{+}^{-1/2} \of{C - B} S_{+}^{1/2}} \, .
\end{align}

Since each individual term in the above trace is trace-class we have

\begin{align}
\nonumber &
\Tr\brak{ S_{+}^{-1/2} \of{B^2 - C^2} S_{+}^{-1/2} 
+ S_{+}^{-1/2} \of{C - B} S_{+}^{1/2}} 
\\& \equationindent 
\ = \
\Tr\brak{\of{C - B} - \of{C^2 - B^2}^{1/2}S_+^{-1}\of{C^2 - B^2}^{1/2}}
\end{align}

Due to the operator monotonicity of the square root, see
\cite[Ch.~5]{Bhatia1996}, we have that $(A^2 + B^2)^{1/2} \geq B$ and
$(A^2 + C^2)^{1/2} \geq C$.  This, together with the monotonicity of
the inverse, implies that $S_+^{-1} \leq \of{B + C}^{-1}$ (recall that
$B$ and $C$ are both bounded from below by 
$\sqrt{\epsilon} > 0$), and therefore

\begin{align}
\nonumber & 
\Tr\brak{\of{C - B} - \of{C^2 - B^2}^{1/2}S_+^{-1}\of{C^2 - B^2}^{1/2}}
\\ \nonumber & \equationindent
\ \geq \ 
\Tr\brak{\of{C - B} - \of{C^2 - B^2}^{1/2}\of{C + B}^{-1}\of{C^2 - B^2}^{1/2}}
\\ & \equationindent
\ = \ 
\Tr\brak{\of{C - B} - \of{C^2 - B^2}\of{C + B}^{-1}}
\end{align}

The identity 

\begin{align} \label{eq-128}
C^2 - B^2 
\ = \
\frac{1}{2}\brak{\of{C - B} \of{C + B} + \of{C + B} \of{C - B}} 
\end{align}

together with the cyclic property of the trace then implies that 

\begin{align}
\Tr\brak{\of{C - B} - \of{C^2 - B^2}\of{C + B}^{-1}} 
\ = \
0
\end{align}

This finishes the proof of Eq.~\eqref{eq:rootLowerBound} in case that
$B^2 \geq \epsilon > 0$ and $\of{C^2 - B^2}^{1/2}$ is trace class.

Now we drop the assumption that $\of{C^2 - B^2}^{1/2}$ is trace
  class. Once again we use an approximation argument as in the proof
  of Theorem~\ref{thm:rootUpperBound}. Let $\of{\cH_L}_{L = 1}^\infty$
  be an increasing sequence of finite-dimensional subspaces. Denote
  by $Q_L$ the orthogonal projection onto $\cH_L$ and assume that $Q_L
  \leq Q_{L+1} \nearrow \bfone_\cH$ strongly, as $L \to \infty$. We
  then approximate $C$ in the strong sense by the operators

\begin{align} \label{eq-133,1}
C_L \ := \ 
\of{B^2 + Q_L \of{C^2 - B^2} Q_L}^{1/2} 
\ \geq \ B \, . 
\end{align}

Indeed, $C^2 -C_L^2 = Q_L^\perp (C^2-B^2) Q_L + Q_L (C^2-B^2) Q_L^\perp \to 0$
strongly, and by Lemma~\ref{Alem:strongConvergence} also
$C_L \to C$ strongly, as $L \to \infty$. Moreover note that, 
for any $L \in \NN$, the operator $\of{C_L^2 - B^2}^{1/2} =
(Q_L \of{C^2 - B^2} Q_L^{1/2}$ is positive, of finite rank, and 
therefore trace class, and by \eqref{eq:rootLowerBound} we have that

\begin{align} \label{eq-134}
\Tr\brak{\of{A^2 + C_L^2}^{1/2} - C_L}
\ \leq \ 
\Tr\brak{\of{A^2 + B^2}^{1/2} - B} \, .
\end{align}

Lemma~\ref{Alem:strongConvergence} and
Lemma~\ref{Alem:traceApproximation} then yield the desired inequality,

\begin{align} \label{eq-135}
\Tr\brak{\of{A^2 + C^2}^{1/2} - C}
\ \leq \ 
\sup_{L \in \NN} \Tr\brak{\of{A^2 + C_L^2}^{1/2} - C_L} 
\ \leq \ 
\Tr\brak{\of{A^2 + B^2}^{1/2} - B} \, .
\end{align}

We finally drop the remaining assumption and take $B$ and $C$ to
  be any two bounded and positive operators with $C^2 \geq B^2$. For
  every $n \in \NN$, we define $C_n := \of{C^2 + \frac{1}{n}}^{1/2}$
  and $B_n := \of{C^2 + \frac{1}{n}}^{1/2}$. Of course, $C_n^2 \to
  C^2$ and $B_n^2 \to B^2$, as $n \to \infty$, in norm, and
  Lemma~\ref{Alem:strongConvergence} implies that $C_n \to C$ and $B_n
  \to B$, as $n \to \infty$, strongly.  Clearly, $C^2_n \geq
  B^2_n \geq \frac{1}{n} > 0$ and

\begin{align}
\Tr\brak{\of{A^2 + C_n^2}^{1/2} - C_n}
\ \leq \ 
\Tr\brak{\of{A^2 + B_n^2}^{1/2} - B_n} \, ,
\end{align} 

by \eqref{eq-135}. Due to Lemma \ref{lem:operatorMon} we have

\begin{align}
\Tr\brak{\of{A^2 + B_n^2}^{1/2} - B_n}
\ \leq \
\Tr\brak{\of{A^2 + B^2}^{1/2} - B} \, ,
\end{align}

for every $n \in \NN$. Taking the limit $n \to \infty$ and
invoking Lemma~\ref{Alem:strongConvergence} and
Lemma~\ref{Alem:traceApproximation} then yields

\begin{align}
\Tr\brak{\of{A^2 + C^2}^{1/2} - C}
\ \leq \ 
\sup_{n \in \NN} \Tr\brak{\of{A^2 + C_n^2}^{1/2} - C_n} 
\ \leq \
\Tr\brak{\of{A^2 + B^2}^{1/2} - B} \, .
\end{align}

\end{proof}
\end{thm}

\section{Non-Convexity of the Full Energy Functional}
\label{sec:NonConvexity}

In this section we provide the proof of the non-convexity of the full
energy functional $\cE_{g,\Lambda}$. Recall that $\cE_{g,\Lambda}$ is
defined on the space 
$\HS\of{\fh_\ph^\Lambda} \oplus \fh_\ph^\Lambda$ and is
given by

\begin{align} \label{eq-136}
\cE_{g,\Lambda}\of{z,\eta} 
\ := \ & 
\frac{1}{8}\sum_{\nu = 1}^3 
\Tr\brak{k_\nu^2 z^2 \of{1 + z}^{-1} - k_\nu z k_\nu z \of{1 + z}^{-1}}
\\ & 
+ \frac{1}{2}\sum_{\nu = 1}^3 
\sca{G_\nu + k_\nu \eta}{\of{1 + z}^{-1}\of{G_\nu + k_\nu \eta}}
\\ &
+ \frac{1}{4} \Tr\brak{\abs{k} z^2 \of{1 + z}^{-1}} 
+ \sca{\eta}{\abs{k}\eta} \, .
\end{align}

In particular, we identify the term 

\begin{align} \label{eq-137}
\frac{1}{8}\sum_{\nu = 1}^3 
\Tr\brak{\abs{k}^2 z^2\of{1 + z}^{-1} - k_\nu z k_\nu z\of{1 + z}^{-1}}
\end{align}

to be the source of the lack of convexity
of $\cE_\BHF^{g,\Lambda}$. 

\begin{thm} \label{thm:nonConvexity}
For any fixed $\eta \in \fh_\Lambda$ the functional
$\cE_{g,\Lambda}\of{z,\eta}$ is not convex in $z$ on
$\HS_{0}\of{\fh_\Lambda}$.

\begin{proof}
We provide a counterexample for the convexity in the space of rank-two
operators. Let $\psi, \vphi \in \fh_\Lambda$ be two normalized
and mutual orthogonal vectors, that is 
$\norm{\psi}_{\fh_\Lambda} = \norm{\vphi}_{\fh_\Lambda} = 1$ 
and $\sca{\vphi}{\psi} = 0$. Additionally suppose that $\vphi$ and 
$\psi$ are both orthogonal to $G_\nu + k_\nu \eta$, for all $\nu = 1,2,3$, 
as well as $\sca{k_\nu \psi}{\vphi} \neq 0$, for some $\nu$.  
For $t,s > 0$ the operator $z_{t,s} \in \HS_0\of{\fh_\Lambda}$ 
is defined as

\begin{align} \label{eq-138}
z_{t,s} \ := \
t P_\psi + s P_\vphi \, ,
\end{align}

where $P_f$ is the orthogonal projection onto the span
of $f$. We demonstrate that the two-parameter function 
$\of{t,s} \mapsto \cE_{g,\Lambda}\of{z_{t,s},\eta}$ is not
convex in $\of{t,s}$ on $(\RR_{0}^+)^2$. This presents a counterexample to
the convexity of $\cE_{g,\Lambda}$.  

To this end we write
$\cE_{g,\Lambda}\of{z,\eta} = \frac{1}{4}
\cE^{\prime}_{\Lambda}\of{z} +
\cE^{\prime\prime}_{g,\Lambda}\of{z,\eta}$, where

\begin{align} \label{eq-139}
& \cE^{\prime}_{\Lambda}\of{z} 
\ = \
\Tr\brak{\abs{k} z^2\of{1 + z}^{-1}} 
+ \frac{1}{2}\sum_{\nu = 1}^3 
\Tr\brak{\abs{k}^2 z^2\of{1 + z}^{-1} - k_\nu z k_\nu z\of{1 + z}^{-1}} \, ,
\\ \label{eq-140}
& \cE^{\prime\prime}_{g,\Lambda}\of{z,\eta}
\ = \
\sca{\eta}{\abs{k}\eta} + \sum_{\nu = 1}^3 
\sca{G_\nu + k_\nu \eta}{\of{1 + z}^{-1}\of{G_\nu + k_\nu \eta}} \, ,
\end{align}

and note that the term $\cE^{\prime}_{\Lambda}\of{z}$ does not depend
on the coupling constant $g$. The specific choice of the projections
$P_\psi$ and $P_\vphi$ ensures that the term
$\cE^{\prime\prime}_{g,\Lambda}\of{z_{t,s},\eta}$ is constant in $t$
and $s$, as $(1+z)^{-1} (G_\nu + k_\nu \eta) = (G_\nu + k_\nu \eta)$.  
Hence, it suffices to analyze the
convexity of $\cE_{\Lambda}^\prime\of{z_{t,s}}$ in $\of{t,s}$.  

The expression $\cE^{\prime}_{\Lambda}\of{z_{t,s}}$ is given by

\begin{align} \label{eq-141}
\cE^{\prime}_{\Lambda}\of{z_{t,s}}
\ = \ &
\frac{t^2}{1 + t} \sca{\of{\abs{k} + \frac{1}{2}\abs{k}^2 }\psi}{\psi} 
+ \frac{s^2}{1 + s} \sca{\of{\abs{k} + \frac{1}{2}\abs{k}^2 }\vphi}{\vphi} 
\\ &- 
\frac{t}{2(1 + t)} 
\sum_{\nu = 1}^3 \sca{k_\nu \psi}{z_{t,s} k_\nu \psi}
- \frac{s}{2(1 + s)} 
\sum_{\nu = 1}^3 \sca{k_\nu \vphi}{z_{t,s} k_\nu \vphi} \, ,
\end{align}

due to

\begin{align} \label{eq-142}
& \Tr\brak{\of{\abs{k} + \frac{1}{2}\abs{k}^2} z_{t,s}^2\of{1 + z_{t,s}}^{-1}}
\\ \nonumber 
& \equationindent \ = \
\frac{t^2}{1 + t} 
\sca{\of{\abs{k} + \frac{1}{2}\abs{k}^2 }\psi}{\psi} 
+ \frac{s^2}{1 + s} 
\sca{\of{\abs{k} + \frac{1}{2}\abs{k}^2 }\vphi}{\vphi} \, ,
\end{align} 

and 

\begin{align} \label{eq-143}
& \frac{1}{2} \sum_{\nu = 1}^3 
\Tr\brak{k_\nu z_{t,s} k_\nu z_{t,s}\of{1 + z_{t,s}}^{-1}} 
\\ \nonumber 
& \equationindent \ = \
\frac{t}{2(1 + t)} 
\sum_{\nu = 1}^3 \sca{k_\nu \psi}{z_{t,s} k_\nu \psi}
+ \frac{s}{2(1 + s)} 
\sum_{\nu = 1}^3 \sca{k_\nu \vphi}{z_{t,s} k_\nu \vphi} \, .
\end{align}

Furthermore the two identities  

\begin{align} \label{eq-144}
\sca{k_\nu \vphi}{z_{t,s} k_\nu \vphi} 
\ = \ & 
s \abs{\sca{k_\nu \vphi}{\vphi}}^2 + t \abs{\sca{k_\nu \psi}{\vphi}}^2 
\\ \label{eq-145}
\sca{k_\nu \psi}{z_{t,s} k_\nu \psi} 
\ = \ &
t \abs{\sca{k_\nu \psi}{\psi}}^2 +s \abs{\sca{k_\nu \psi}{\vphi}}^2 
\end{align}

allow for $\cE^{\prime}_{\Lambda}\of{z_{t,s}}$ to be fully expressed
through the vectors $\vphi$ and $\psi$ as

\begin{align} \label{eq-146}
\frac{t}{2(1 + t)} \sum_{\nu = 1}^3 & \sca{k_\nu \psi}{z_{t,s} k_\nu \psi}
+ \frac{s}{2(1 + s)} \sum_{\nu = 1}^3 \sca{k_\nu \vphi}{z_{t,s} k_\nu \vphi} 
\nonumber \\ 
\ = \ &
\frac{t}{2(1 + t)} 
\sum_{\nu = 1}^3 \abs{\sca{k_\nu \psi}{\psi}}^2
+ \frac{s}{2(1 + s)} \sum_{\nu = 1}^3 \abs{\sca{k_\nu \vphi}{\vphi}}^2
\nonumber \\ &
+ \of{ \frac{ts}{2(1 + t)} + \frac{ts}{2(1 + s)} } 
\sum_{\nu = 1}^3\abs{\sca{k_\nu \psi}{\vphi}}^2 \, .
\end{align}

Inserting these identities, $\cE^{\prime}_{\Lambda}\of{z_{t,s}}$ 
assumes the explicit form 

\begin{align} \label{eq-147}
\cE\of{z_{t,s}} 
\ = \ &
\frac{t^2}{1 + t} 
\of{ \sca{\of{\abs{k} + \frac{1}{2}\abs{k}^2 }\psi}{\psi} 
- \frac{1}{2} \sum_{\nu = 1}^3 \abs{\sca{k_\nu \psi}{\psi}}^2}
\nonumber \\ &
+ \frac{s^2}{1 + s} 
\of{ \sca{\of{\abs{k} + \frac{1}{2}\abs{k}^2 }\vphi}{\vphi} 
- \frac{1}{2} \sum_{\nu = 1}^3 \abs{\sca{k_\nu \vphi}{\vphi}}^2}
\nonumber \\ &
- \frac{1}{2} \of{\frac{ts}{1 + t} + \frac{ts}{1 + s}} 
\sum_{\nu = 1}^3\abs{\sca{k_\nu \psi}{\vphi}}^2 \, .
\end{align}

Hence, there are positive constants $a,b,c > 0$ depending only on 
$\psi$ and $\vphi$ such that 

\begin{align} \label{eq-148}
\cE\of{z_{t,s}}  
\ = \
 a\frac{t^2}{1 + t} + b\frac{s^2}{1 + s} - c\of{\frac{ts}{1 + t} + \frac{ts}{1 + s}} \, .
\end{align} 

We observe that the constants $a$ and $b$ are strictly greater than
$c$. More precisely, it is

\begin{align} \label{eq:coef}
a,b \ \geq \ \frac{1}{\Lambda} + c \, .
\end{align}

The reason for this is that 

\begin{align} \label{eq-149}
\sca{k_\nu^2\psi}{\psi}
\ \geq \ 
\sca{k_\nu\psi}{\of{P_\psi + P_\vphi}k_\nu \psi}
\ = \ 
\abs{\sca{k_\nu \psi}{\psi}}^2 + \abs{\sca{k_\nu \psi}{\vphi}}^2 \, ,
\end{align}

which allows for the estimate

\begin{align} \label{eq-150}
a \ = \ & 
\sca{\of{\abs{k} + \frac{1}{2}\abs{k}^2 }\psi}{\psi} 
- \frac{1}{2} \sum_{\nu = 1}^3 \abs{\sca{k_\nu \psi}{\psi}}^2 
\nonumber \\ 
\ = \ &
\sca{\abs{k}\psi}{\psi} 
+ \frac{1}{2} \sum_{\nu = 1}^3 
\of{ \sca{k_\nu^2\psi}{\psi} - \abs{\sca{k_\nu \psi}{\psi}}^2 }
\nonumber \\ 
\ \geq \ &
\sca{\abs{k}\psi}{\psi}  
+ \frac{1}{2} \sum_{\nu = 1}^3 \abs{\sca{k_\nu \psi}{\vphi}}^2
\nonumber \\ 
\ \geq \ &
\Lambda^{-1} + \frac{1}{2} \sum_{\nu = 1}^3 \abs{\sca{k_\nu \psi}{\vphi}}^2
\ = \ 
\Lambda^{-1} + c \, .
\end{align}

The analogous estimate for $b$ is proven the same way.

The convexity of $\of{t,s} \mapsto \cE^{\prime}_{\Lambda}\of{z_{t,s}}$
is now examined via its Hessian. A straightforward
calculation shows that

\begin{align} \label{eq-151}
\nabla_{\of{t,s}} \cE^{\prime}_{\Lambda}\of{z_{t,s}} 
\ = \ 
\begin{pmatrix}
a\frac{t^2 + 2t}{\of{1 + t}^2} 
- c \frac{s}{\of{1 + t}^2} - c \frac{s}{1 + s}
\\[1ex]
b\frac{s^2 + 2s}{\of{1 + s}^2} 
- c \frac{t}{1 + t} - c \frac{t}{\of{1 + s}^2}
\end{pmatrix} \, ,
\end{align}

and 

\begin{align} \label{eq-152}
\cH_{\of{t,s}} \cE^{\prime}_{\Lambda}\of{z_{t,s}}
\ = \
\begin{pmatrix}
\frac{2a + 2cs}{\of{1 + t}^3} 
& - c \of{\frac{1}{\of{1 + t}^2} + \frac{1}{\of{1 + s}^2}} 
\\[1ex]
- c \of{\frac{1}{\of{1 + t}^2} + \frac{1}{\of{1 + s}^2}} 
& \frac{2b + 2ct}{\of{1 + s}^3}
\end{pmatrix} \, .
\end{align}

The trace of this Hessian equals 

\begin{align} \label{eq-153}
\Tr\brak{\mathcal{H}_{\of{t,s}} \cE^{\prime}_{\Lambda}\of{z_{t,s}}} 
\ = \ 
\frac{2a + 2cs}{\of{1 + t}^3} + \frac{2b + 2ct}{\of{1 + s}^3} 
\ > \ 0 \, ,
\end{align}

while the determinant of 
$\mathcal{H}_{\of{t,s}} \cE^{\prime}_{\Lambda}\of{z_{t,s}}$ equals

\begin{align} \label{eq:determinant}
\det \mathcal{H}_{\of{t,s}} \cE^{\prime}_{\Lambda}\of{z_{t,s}}
& \ = \
\frac{\of{2a + 2cs}\of{2b + 2ct}}{\of{1 + t}^3\of{1 + s}^3} 
- c^2 \of{\frac{1}{\of{1 + t}^2} + \frac{1}{\of{1 + s}^2}}^2 \, .
\end{align}

As the trace of the Hessian is always positive,
$\cE^{\prime}_{\Lambda}\of{z_{t,s}}$ is convex if and only if its determinant 
$\det \mathcal{H}_{\of{t,s}} \cE^{\prime}_{\Lambda}\of{z_{t,s}}$ is
positive. According to \eqref{eq:determinant} this is the case if, and
only if,

\begin{align} \label{eq:detPositivity}
\frac{\of{2a + 2cs}\of{2b + 2ct}}{\of{1 + t}^3\of{1 + s}^3} 
- c^2 \of{\frac{1}{\of{1 + t}^2} + \frac{1}{\of{1 + s}^2}}^2 > 0 \, .
\end{align}

The above inequality does not, however, hold on the entirety of 
$(\RR_0^+)^2$. To see this we recall \eqref{eq:coef} and estimate

\begin{align} \label{eq:detEstimate}
& \frac{\of{2a + 2cs}\of{2b + 2ct}}{\of{1 + t}^3\of{1 + s}^3} 
- c^2 \of{\frac{1}{\of{1 + t}^2} + \frac{1}{\of{1 + s}^2}}^2 
\nonumber \\ 
& \equationindent \ \leq \
\frac{4ab}{\of{1 + t}^2 \of{1 + s}^2} 
- c^2 \of{\frac{1}{\of{1 + t}^2} + \frac{1}{\of{1 + s}^2}}^2 
\nonumber \\ 
& \equationindent \ = \
\frac{4ab - 2c^2}{\of{1 + t}^2 \of{1 + s}^2} 
- \frac{c^2}{\of{1 + t}^4} - \frac{c^2}{\of{1 + s}^4} \, .
\end{align}

The upper bound in \ref{eq:detEstimate}, and thus also 
$\det \mathcal{H}_{\of{t,s}} \cE^{\prime}_{\Lambda}\of{z_{t,s}}$, is
strictly negative for $t \gg 1$ and $0 < s \ll 1$, hence disproving
the convexity of $\cE^{\prime}_{\Lambda}\of{z_{t,s}}$ on
$(\RR_0^*)^2$. We conclude that 
$\cE_{g,\Lambda}\of{z, \eta}$ is not convex in $z$ on 
$\HS_{0}\of{\fh_\Lambda}$.
\end{proof}
\end{thm}

As convexity of a functional is not necessarily preserved under
reparametrization, the non-convexity of $\cE_{g,\Lambda}\of{z,\eta}$
could be the result of an unfavorable choice of reparametrization
between Proposition~\ref{prop:Intro_OriginalProblem} and
\ref{prop:Intro_FullEnergyBounds}. The reason that we regard the
present parametrization as suitable is the fact that the contribution
of the interaction in $\cE_{g,\Lambda}\of{z,\eta}$ is indeed convex,
as we have seen in Section~\ref{sec:Existence}.

\section{Appendix}
\label{sec:Appendix}

\begin{thm} \label{thm:strictConvexity}
Let $A$ and $B$ be two positive and bounded operators with bounded
inverses. Assume $A \neq B$ and let $\lambda \in \of{0,1}$. Then

\begin{align} \label{eq-A.01}
0 \ \leq \ 
\lambda A^{-1} + \of{1 - \lambda} B^{-1} 
- \brak{\lambda A + \of{1 - \lambda} B}^{-1}
\ \neq \ 0 \, .
\end{align}

\begin{proof}
The statement can be proven by carefully following a standard proof
for the convexity of the inverse. For this purpose, we define the
operator $C = A^{-1/2} B A^{-1/2}$ and observe that

\begin{align} \label{eq-A.02}
\brak{\lambda A + \of{1 - \lambda} B}^{-1}
\ = \
A^{-1/2} \brak{\lambda + \of{1 - \lambda} C}^{-1} A^{-1/2} \, .
\end{align}

For all $x > 0$ with $x \neq 1$ it is 
$\brak{\lambda + \of{1 - \lambda} x}^{-1} < 
\lambda + \of{1 - \lambda} x^{-1}$. 
Due to the fact that $A \neq B$, it follows that $C \neq 1$. 
Therefore, we obtain that

\begin{align} \label{eq-A.03}
\brak{\lambda + \of{1 - \lambda} C}^{-1} 
\ \leq \
\lambda +  \of{1 - \lambda} C^{-1}
\end{align}

and 

\begin{align} \label{eq-A.04}
\brak{\lambda + \of{1 - \lambda} C}^{-1} 
\ \neq \
\lambda \of{1 - \lambda} C^{-1} \, .
\end{align}

Altogether we thus have

\begin{align} \label{eq-A.05}
\brak{\lambda A + \of{1 - \lambda} B}^{-1}
\ = \ &
A^{-1/2} \brak{\lambda + \of{1 - \lambda} C}^{-1} A^{-1/2}
\\ \nonumber 
\ \leq \ &
A^{-1/2} \brak{\lambda +  \of{1 - \lambda} C^{-1}} A^{-1/2}
\ = \
\lambda A^{-1} + \of{1 - \lambda} B^{-1} \, .
\end{align}

This, together with 

\begin{align} \label{eq-A.06}
\brak{\lambda A + \of{1 - \lambda} B}^{-1}
\ \neq \ 
\lambda A^{-1} + \of{1 - \lambda} B^{-1} \, ,
\end{align}

proves the statement.
\end{proof}
\end{thm}

\begin{thm} \label{thm:minimizer}
Let $U$ be a weakly closed subset of a Hilbert space and let 
$f: U \to \RR$ be a weakly lower semi-continuous and coercive 
function. Then there is a $u_0 \in U$ such that

\begin{align} \label{eq-A.07}
\inf_{u \in U} f\of{u} 
\ = \ 
f\of{u_0} \, .
\end{align}

\begin{proof}
For the convenience of the reader, we recall that $f$ is coercive
on $U$ if for every sequence $(x_n)_{n \in \NN} \in U^\NN$ such
that $\|x_n \| \to \infty$, as $n$ tends to infinity, it follows that
$f(x_n)$ diverges to infinity. Moreover, $f$ is weakly semi-continuous
if $\liminf_{n \to \infty}\{ f(x_n) \} \geq f(x_0)$, for every
weakly convergent sequence $(x_n)_{n \in \NN} \in U^\NN$ with
$x_0 := w-\lim_{n \to \infty} \{x_n\}$.
  
Let $\of{u_n}_{n = 1}^\infty \subset U$ be a minimizing sequence of
$f$ in $U$. As $f$ is coercive this sequence must be bounded. Due to
the Banach-Alaoglu Theorem, there must be an accumulation point $u_0$
of the sequence in the weak$^*$ topology. The latter coincides,
however, with the weak topology on Hilbert spaces, and $u_0$ is a weak
accumulation point, in fact.  Without loss of generality we may assume
$u_n \too{w} u_0$. Finally the weak lower semi-continuity of $f$
yields

\begin{align} \label{eq-A.08}
\inf_{u \in U} f\of{u} 
\ = \ 
\liminf_{n \to \infty }f\of{u_n} 
\ \geq \
f\of{u_0} 
\ \geq \ 
\inf_{u \in U} f\of{u} 
\end{align}

and hence $\inf_{u \in U} f\of{u} = f\of{u_0}$.
\end{proof}
\end{thm}

\subsection*{Acknowledgement}
Research supported by CONACYT, FORDECYT–PRONACES 429825/2020 (proyecto
apoyado por el FORDECYT–PRONACES, PRONACES/429825), recently renamed
project CF-2019 / 429825. This work was supported by the project
PAPIIT–DGAPA–UNAM IN114925. M.~B.\ is a Fellow of the Sistema
Nacional de Investigadores (SNI). Furthermore,
support from DFG Grant Nr.~BA~1477/17-1, Project Nr.~547264922,
of the German Science Foundation is gratefully acknowledged.
The authors thank Matthias Herdzik for useful discussion.

\end{document}